\DeclareSymbolFont{frenchscript}{OMS}{ztmcm}{m}{n}
\DeclareMathSymbol{\A}{\mathord}{frenchscript}{65}    
\DeclareMathSymbol{\B}{\mathord}{frenchscript}{66}    
\DeclareMathSymbol{\CH}{\mathord}{frenchscript}{67}   
\newcommand{\Ch}{{\CH_h}}                             
\DeclareMathSymbol{\C}{\mathord}{frenchscript}{67}    
\DeclareMathSymbol{\Lab}{\mathord}{frenchscript}{76}  
\DeclareMathSymbol{\Ge}{\mathord}{frenchscript}{71}   
\DeclareMathSymbol{\Pow}{\mathord}{frenchscript}{80}  
\DeclareMathSymbol{\Sig}{\mathord}{frenchscript}{83}  
\DeclareMathSymbol{\Tk}{\mathord}{frenchscript}{84}   
\DeclareMathAlphabet{\mathbbm}{U}{bbm}{m}{n}          
\newcommand{\IT}{\mathbbm{T}}                         
\newcommand{\cT}{\mathbbm{P}}                         
\newcommand{\IN}{\mathbbm{N}}                         
\newtheorem{defi}{Definition}
\newtheorem{theo}{Theorem}
\newtheorem{prop}{Proposition}
\newtheorem{lemm}{Lemma}
\newtheorem{coro}{Corollary}
\newtheorem{obs}{Observation}
\newtheorem{exam}{Example}
\newenvironment{definition}[1]{\begin{defi} \rm \label{df:#1} }{\end{defi}}
\newenvironment{definitionA}[2]{\begin{defi}[#1] \rm \label{df:#2} }{\end{defi}}
\newenvironment{theorem}[1]{\begin{theo} \rm \label{thm:#1} }{\end{theo}}
\newenvironment{proposition}[1]{\begin{prop} \rm \label{pr:#1} }{\end{prop}}
\newenvironment{lemma}[1]{\begin{lemm} \rm \label{lem:#1} }{\end{lemm}}
\newenvironment{corollary}[1]{\begin{coro} \rm \label{cor:#1} }{\end{coro}}
\newenvironment{observation}[1]{\begin{obs} \rm \label{obs:#1} }{\end{obs}}
\newenvironment{example}[1]{\begin{exam} \rm \label{ex:#1} }{\hfill\P\end{exam}}
\newenvironment{exampleNoEnd}[1]{\begin{exam} \rm \label{ex:#1} }{\end{exam}}
\newenvironment{exampleA}[2]{\begin{exam}[#1] \rm \label{ex:#2} }{\hfill\P\end{exam}}
\newenvironment{proof}{\begin{trivlist} \item[\hspace{\labelsep}\bf Proof:]}{\hfill $\Box$\end{trivlist}}
\newenvironment{proofNoBox}{\begin{trivlist} \item[\hspace{\labelsep}\bf Proof:]}{\end{trivlist}}
\newcommand{\Sec}[1]{Section~\ref{sec:#1}}
\newcommand{\df}[1]{Definition~\ref{df:#1}}
\newcommand{\thm}[1]{Theorem~\ref{thm:#1}}
\newcommand{\pr}[1]{Proposition~\ref{pr:#1}}
\newcommand{\lem}[1]{Lemma~\ref{lem:#1}}
\newcommand{\cor}[1]{Corollary~\ref{cor:#1}}
\newcommand{\Obs}[1]{Observation~\ref{obs:#1}}  
\newcommand{\ex}[1]{Example~\ref{ex:#1}}
\newcommand{\tab}[1]{Table~\ref{tab:#1}}
\newcommand{\fig}[1]{Figure~\ref{#1}}
\newenvironment{itemise}%
  {\begin{itemize}%
    \setlength{\itemsep}{0pt}%
    \setlength{\parskip}{0pt}}%
  {\end{itemize}}
\def\comesfrom{\@transition\leftarrowfill}
\def\goesto{\@transition\rightarrowfill}
\def\ngoesto{\@transition\nrightarrowfill}
\def\Goesto{\@transition\Rightarrowfill}
\def\nGoesto{\@transition\nRightarrowfill}
\def\xmapsto{\@transition\mapstofill}
\def\nxmapsto{\@transition\nmapstofill}
\def\@transition#1{\@@transition{#1}}
\newbox\@transbox
\newbox\@arrowbox
\newbox\@downbox
\def\@@transition#1#2%
\wd\@transbox{#1}
\@transbox\hbox{$\mathop{\box\@arrowbox}\limits^{\box\@transbox}$}
\def\nrightarrowfill{$\m@th\mathord-\mkern-6mu%
  \cleaders\hbox{$\mkern-2mu\mathord-\mkern-2mu$}\hfill
  \mkern-6mu\mathord\not\mkern-2mu\mathord\rightarrow$}
\def\Rightarrowfill{$\m@th\mathord=\mkern-6mu%
  \cleaders\hbox{$\mkern-2mu\mathord=\mkern-2mu$}\hfill
  \mkern-6mu\mathord\Rightarrow$}
\def\nRightarrowfill{$\m@th\mathord=\mkern-6mu%
  \cleaders\hbox{$\mkern-2mu\mathord=\mkern-2mu$}\hfill
  \mkern-6mu\mathord\not\mathord\Rightarrow$}
\def\mapstofill{$\m@th\mathord\mapstochar\mathord-\mkern-6mu%
  \cleaders\hbox{$\mkern-2mu\mathord-\mkern-2mu$}\hfill
  \mkern-6mu\mathord\rightarrow$}
\def\nmapstofill{$\m@th\mathord\mapstochar\mathord-\mkern-6mu%
  \cleaders\hbox{$\mkern-2mu\mathord-\mkern-2mu$}\hfill
  \mkern-6mu\mathord\not\mkern-2mu\mathord\rightarrow$}
\newcommand{\ar}[1]{\mathrel{\goesto{#1}}}            
\newcommand{\nar}[1]{{\ngoesto{#1\;}}}                
\newcommand{\goto}[1]{\stackrel{#1}{\longrightarrow}} 
\newcommand{\Tr}{\textit{Tr}}                         
\newcommand{\source}{\textit{source}}                 
\newcommand{\target}{\textit{target}}                 
\newcommand{\comp}{\textit{comp\/}}                   
\newcommand{\npc}{\textit{npc}}                       
\newcommand{\afc}{\textit{afc}}                       
\newcommand{\defis}{\stackrel{{\it def}}{=}}          
\newcommand{\NC}{{\it NC}}                            
\newcommand{\Left}{\mathrm{\scriptscriptstyle L}}     
\newcommand{\R}{\mathrm{\scriptscriptstyle R}}        
\newcommand{\Rec}{\mathit{Rec}}                       
\newcommand{\J}{{\rm J}}                              
\newcommand{\aconc}{\mathrel{\mbox{$\smile\hspace{-.95ex}\raisebox{2.5pt}{$\scriptscriptstyle\bullet$}$}}}
\newcommand{\naconc}{\mathrel{\mbox{$\,\not\!\smile\hspace{-.95ex}\raisebox{2.5pt}{$\scriptscriptstyle\bullet$}$}}}
\newcommand{\conc}{\smile}                            
\newcommand{\nconc}{\,\not\!\smile}                   
\newcommand{\dcup}{\mathbin{\plat{$\stackrel{\mbox{\huge .}}{\cup}$}}}   
\newcommand{\plat}[1]{\raisebox{0pt}[0pt][0pt]{#1}}   
\newcommand{\weg}[1]{}                                
\newcommand{\signals}{\ensuremath{\mathrel{\hat{}\!}}}
\newcommand{\sigar}[1]{^{\curvearrowright #1}}           
\newcommand{\sig}[1]{^{\rightarrow #1}}                 
\newcommand{\RL}{L}                                   
\newcommand{\shar}[1]{\mathord{\stackrel{#1}{\rightarrow}}} 
\newcommand{\mylabel}[1]{\hypertarget{lab:#1}{\ \mbox{{\scriptsize\sc (#1)}}}}
\newcommand{\myref}[1]{\hyperlink{lab:#1}{\scriptsize\sc (#1)}}
\renewcommand{\chi}{t}                                
\renewcommand{\zeta}{u}                               
\newcommand{\E}{P}                                    
\newcommand{\F}{Q}                                    
\newcommand{\AI}{A}                                   
\newcommand{\We}{We\ }
\newcommand{\we}{we\ }
\newcommand{\acGH}{\aconc_{\mbox{\tiny \cite{GH15a}}}}    
\newcommand{\signal}{indicator}
\begin{document}

\def\publicationstatus{An extended abstract of this paper appeared in the proceedings of FoSSaCS'19,
  doi:\href{http://doi.org/10.1007/978-3-030-17127-8_29}{10.1007/978-3-030-17127-8_29}.}
\def\titlerunning{Justness}
\def\authorrunning{Rob van Glabbeek}
\title{\titlerunning\\\Large A Completeness Criterion for Capturing Liveness Properties} 
\author{\authorrunning
\institute{Data61, CSIRO, Sydney, Australia}
\institute{School of Computer Science and Engineering,
University of New South Wales, Sydney, Australia}
}
\maketitle

\begin{abstract}
This paper poses that transition systems constitute a good model of distributed systems only in combination
with a criterion telling which paths model complete runs of the represented systems.
Among such criteria, progress is too weak to capture relevant liveness properties, and fairness is often
too strong; for typical applications \we advocate the intermediate criterion of justness.
Previously, \we proposed a definition of justness in terms of an asymmetric concurrency relation
between transitions. Here \we define such a concurrency relation for the transition systems
associated to the process algebra CCS as well as its extensions with broadcast communication and signals,
thereby making these process algebras suitable for capturing liveness properties requiring justness.
\end{abstract}

\section{Introduction}

Transition systems are a common model for distributed systems. They consist of sets of states, also
called \emph{processes}, and transitions---each transition going from a source state to a target
state. A given distributed system $\mathcal{D}$ corresponds to a state $P$ in a transition system
$\IT$---the initial state of $\mathcal{D}$.  The other states of $\mathcal{D}$ are the processes in $\IT$
that are reachable from $P$ by following the transitions.
A run of $\mathcal{D}$ corresponds with a \emph{path} in $\IT$: a finite or infinite alternating
sequence of states and transitions, starting with $P$, such that each transition goes from the state before to the state after it.
Whereas each finite path in $\IT$ starting from $P$ models a \emph{partial run} of $\mathcal{D}$,
i.e., an initial segment of a (complete) run, typically not each path models a run. Therefore a transition system constitutes a good model of distributed systems
only in combination with what \we here call a \emph{completeness criterion}: a selection of a
subset of all paths as \emph{complete paths}, modelling runs of the represented system.

A \emph{liveness property} says that ``something [good] must happen'' eventually \cite{Lam77}.
Such a property holds for a distributed system if the [good] thing happens in each of its possible runs.
One of the ways to formalise this in terms of transition systems is to postulate a set of good
states $\Ge$, and say that the liveness property $\Ge$ holds for the process $P$ if all complete
paths starting in $P$ pass through a state of $\Ge$ \cite{GH19}. Without a completeness criterion
the concept of a liveness property appears to be meaningless.

\begin{example}{Cataline}
The transition system on the right models Cataline eating
{\makeatletter
\let\par\@@par
\par\parshape0
\everypar{}\begin{wrapfigure}[1]{r}{0.308\textwidth}
 \vspace{-5ex}
 \input{Cataline}
  \centerline{\box\graph}
 \end{wrapfigure}
\noindent
a croissant in Paris.
It abstracts from all activity in the world except the eating of that croissant, and thus has two
states only---the states of the world before and after this event---and one transition $t$.
\We depict states by circles and transitions by arrows between them. An initial state is indicated by
a short arrow without a source state. A possible liveness property says that the croissant will be
eaten. It corresponds with the set of states $\Ge$ consisting of state $2$ only.
The states of $\Ge$ are indicated by shading.
\par}
The depicted transition system has three paths starting with state 1: $1$, $1\,t$ and $1\,t\,2$.
The path $1\,t\,2$ models the run in which Cataline finishes the croissant.
The path $1$ models a run in which Cataline never starts eating the croissant, and the path $1\,t$
models a run in which she starts eating it, but never finishes.
The liveness property $\Ge$ holds only when using a completeness criterion that disqualifies the paths
$1$ and $1\,t$, saying that they do not model actual runs of the system, thus leaving $1\,t\,2$ as the sole complete path.%
\end{example}
The transitions of transition systems can be understood to model atomic actions that can be
performed by the represented systems. Although \we allow these actions to be instantaneous or durational, in
the remainder of this paper \we adopt the assumption that ``atomic actions always terminate'' \cite{OL82}.
This is a partial completeness criterion. It rules out the path $1\,t$ in \ex{Cataline}.
\We build in this assumption in the definition of a path by henceforth requiring that finite paths
should end with a state.

\paragraph{Progress}

The most widely employed completeness criterion is \emph{progress}.\footnote{Misra
  \cite{Mis88,Mis01} calls this the `minimal progress assumption'.  In \cite{Mis01} he uses
  `progress' as a synonym for `liveness'.  In session types, `progress' and `global progress' are
  used as names of particular liveness properties \cite{CDPY13}; this use has no relation with ours.}
In the context of \emph{closed systems}, having no run-time interactions with the environment,
it is the assumption that a run will never get stuck in a state with outgoing transitions.
This rules out the path $1$ in \ex{Cataline}, as $t$ is outgoing. When adopting progress as completeness criterion,
the liveness property $\Ge$ holds for the system modelled in \ex{Cataline}.

Progress is assumed in almost all work on process algebra that deals with liveness properties, mostly implicitly. 
{Milner} makes an explicit progress assumption for the process algebra CCS in \cite{Mi80}.
A progress assumption is built into the temporal logics LTL \cite{Pn77}, CTL \cite{EC82} and CTL* \cite{EH86},
namely by disallowing states without outgoing transitions and evaluating temporal formulas
by quantifying over infinite paths only.%
\footnote{Exceptionally, states without outgoing transitions are allowed, and then
    quantification is over all \emph{maximal} paths, i.e., paths that are infinite or end in a state
    without outgoing transitions \cite{DV95}.}
In \cite{KdR83} the `multiprogramming axiom' is a
progress assumption, whereas in \cite{AFK88} progress is assumed as a `fundamental liveness property'.

\begin{definitionA}{\cite{GH19}}{stronger}
Completeness criterion F is \emph{stronger} than completeness criterion H
iff {\rm F} rules out (as incomplete) at least all paths that are ruled out by H.
\end{definitionA}

As \we argue in \cite{TR13,GH15a,GH19}, a progress assumption as above is too strong in the context of
reactive systems. There, a transition typically represents an interaction between the distributed
system being modelled and its environment. In many cases a transition can occur only if both
the modelled system \emph{and} the environment are ready to engage in it. \We therefore distinguish
\emph{blocking} and \emph{non-blocking} transitions. A transition is non-blocking if
the environment cannot or will not block it, so that its execution is entirely under the
control of the system under consideration. A blocking transition on the other hand may fail to occur
because the environment is not ready for it. The same was done earlier in the setting of Petri nets
\cite{Rei13}, where blocking and non-blocking transitions are called \emph{cold} and \emph{hot}, respectively.

In \cite{TR13,GH15a,GH19} \we work with transition systems that are equipped with a partitioning
of the transitions into blocking and non-blocking ones, and reformulate the progress assumption as follows:
\begin{quote}\it
a (transition) 
system in a state that admits a non-blocking transition will eventually progress, i.e., perform a transition.
\end{quote}
In other words, a run will never get stuck in a state with outgoing non-blocking transitions.
In \ex{Cataline}, when adopting progress as our completeness criterion, \we assume that Cataline
actually wants to eat the croissant, and does not willingly remain in State 1 forever.
When that assumption is unwarranted, one would model her behaviour by a transition system different
from that of \ex{Cataline}. However, she may still be stuck in State 1 by lack of any croissant to
eat. If \we want to model the capability of the environment to withhold a croissant, \we classify $t$
as a blocking transition, and the liveness property $\Ge$ does not hold. If \we abstract from a
possible shortage of croissants, $t$ is deemed a non-blocking transition, and, when assuming progress,
$\Ge$ holds.

As an alternative approach to a dogmatic division of transitions in a transition system, one could shift
the status of transitions to the progress property, and speak of $B$-progress when $B$ is
the set of blocking transitions. In that approach, $\Ge$ holds for State 1 of \ex{Cataline} under
the assumption of $B$-progress when $t\notin B$, but not when $t\in B$.

To properly capture reactive systems, \we work with \emph{labelled} transition systems,
where each transition is labelled with the \emph{action} that occurs (or fact that is revealed)
when taking this transition.
The labelling is typically used to describe how the transition synchronises with, and thus is
dependent on, the environment. Whether a transition is blocking is then completely determined by its
label. Hence \we work with sets $B$ of blocking actions and regard a transition as
blocking iff it is labelled by an action in~$B$.

\paragraph{Justness}

Justness is a completeness criterion proposed in \cite{TR13,GH15a,GH19}. It strengthens
progress. It can be argued that once one adopts progress it makes sense to go a step further and
adopt even justness.

\begin{exampleNoEnd}{Cataline and Alice}
The transition system on the right models Alice making an unending
{\makeatletter
\let\par\@@par
\par\parshape0
\everypar{}\begin{wrapfigure}[2]{r}{0.197\textwidth}
 \vspace{-5ex}
 \input{Alice}
  \centerline{\box\graph}
 \end{wrapfigure}
\noindent
sequence of phone calls in London.
There is no interaction of any kind between Alice and Cataline.
Yet, \we may choose to abstracts from all activity in the world except the eating of the croissant by
Cataline, and the making of calls by Alice.
\par}
{\makeatletter
\let\par\@@par
\par\parshape0
\everypar{}\begin{wrapfigure}[2]{r}{0.197\textwidth}
 \vspace{-6ex}
 \input{AliceCataline}
  \centerline{\box\graph}
 \end{wrapfigure}
\noindent
This yields the combined transition system on the bottom right.
Even when taking the transition $t$ to be non-blocking,
progress is not a strong enough completeness criterion to ensure that Cataline will ever eat the croissant,
for the infinite path that loops in the first state is complete.
Nevertheless, as nothing stops Cataline from making progress, in reality $t$ will occur. \cite{GH19} \P
\par}
\end{exampleNoEnd}
This example is not a contrived corner case, but a rather typical illustration of an issue that is
central to the study of distributed systems. Other illustrations of this phenomena occur in
\cite[Section 9.1]{TR13}, \cite[Section 10]{GH15b}, \cite[Section 1.4]{vG15}, \cite{vG16} and \cite[Section 4]{EPTCS255.2}.
The criterion of justness aims to ensure the liveness property occurring in these examples.
In \cite{GH19} it is formulated as follows:
\begin{quote}\it
  Once a non-blocking transition is enabled that stems from a set of parallel components,
  one (or more) of these components will eventually partake in a transition.
\end{quote}
In \ex{Cataline and Alice}, $t$ is a non-blocking transition enabled in the initial state.
It stems from the single parallel component Cataline of the distributed system under consideration.
Justness therefore requires that Cataline must partake in a transition. This can only be $t$,
as all other transitions involve component Alice only. Hence justness says that $t$ must occur.
The infinite path starting in the initial state and not containing $t$ is ruled out as
unjust, and thereby incomplete.

Unlike progress, the concept of justness as formulated above is in need of some formalisation,
i.e., to formally define a component, to make precise for concrete transition systems what it means for a
transition to stem from a set of components, and to define when a component partakes in a transition.

A formalisation of justness for the labelled transition system generated by the process algebra AWN, the
\emph{Algebra for Wireless Networks} \cite{FGHMPT12a}, was provided in \cite{TR13}. In the same vein,
\cite{GH15a} offered a formalisation for the labelled transition systems generated by CCS,
the \emph{Calculus of Communicating Systems} \cite{Mi80}, and its extension ABC, the 
\emph{Algebra of Broadcast Communication} \cite{GH15a}, a variant of CBS, the
\emph{Calculus of Broadcasting Systems} \cite{CBS91}. The same was done for CCS extended with
\emph{signals} in \cite{EPTCS255.2}. The formalisations of \cite{GH15a,EPTCS255.2} coinductively define
\emph{$B$-justness}, where $B$ ranges over sets of actions that are deemed to be blocking, as a
family of predicates on paths, and proceed by a case distinction on the operators in the language.\linebreak[3]
Although these definitions \emph{do} capture the concept of justness formulated above,
it is not easy to see why.

A more syntax-independent, and perhaps more convincing, formalisation of justness occurred in \cite{GH19}.
There it is defined directly on transition systems that are equipped with a, possibly asymmetric,
concurrency relation between transitions. However, the concurrency relation itself is defined only
for the transition system generated by a fragment of CCS, and the generalisation to full CCS, and
other process algebras, is non-trivial.

It is the purpose of this paper to make the definition of justness from \cite{GH19} available to a
large range of process algebras by defining the concurrency relation for CCS, for ABC, and for the
extension of CCS with signals used in \cite{EPTCS255.2}. \We do this in a precise (\Sec{LTSC}) as well as in an
approximate way (\Sec{components}), and show that both approaches lead to the same concept of justness (\Sec{agree}).
Moreover, in all cases \we establish a closure property on the concurrency relation ensuring that justness is a
meaningful notion. \We show that for all these process algebras justness is \emph{feasible}.
Here feasibility is a requirement on completeness criteria advocated in \cite{AFK88,La00-lb,GH19}.
Finally, \we establish agreement between the formalisation of justness from \cite{GH19} and the present
paper, and the original coinductive ones from \cite{GH15a} and \cite{EPTCS255.2}.

\paragraph{Fairness}

Fairness assumptions are special kinds of completeness criteria. They postulate that if certain
activities \emph{can} happen often enough, they \emph{will} in fact happen.

\begin{example}{fairness}
Suppose Bart stands behind a bar and wants to order a beer. But by lack of any formal queueing protocol
many other customers get their beer before Bart does. This situation can be modelled as a transition
system where in each state in which Bart is not served yet there is an outgoing transition
modelling that Bart gets served, but there are also outgoing transitions modelling that someone else
gets served instead. The essence of fairness is the assumption that Bart will get his beer eventually.
Fairness rules out as unfair, and thereby incomplete, any path in which Bart could have gotten a
beer any time, but never will.
\end{example}
Fairness comes in two flavours: \emph{weak} and \emph{strong} fairness. Weak fairness merely rules
out paths in which some task is enabled in each state, yet never occurs. Strong fairness also rules
out paths in which some task is enabled infinitely often, yet never occurs. Here a \emph{task} is an
appropriate set of transitions, in \ex{fairness} all transitions giving Bart a beer.
In \ex{fairness} the liveness property that Bart will get a beer holds under the assumption of weak
fairness, and thus certainly when assuming strong fairness. It does not hold when merely assuming
justness, let alone when merely assuming progress.

Our survey paper \cite{GH19} proposes a unifying definition of strong and weak fairness,
parametrised by the definition of a task. Many notions of fairness found in
the literature are cast as instances of this definition, differing only in how to define tasks.
The same paper also offers a taxonomy of completeness criteria, ordered by strength (cf.\ \df{stronger}).
This taxonomy contains the criteria progress and justness, as well as all these fairness criteria.
Besides strong and weak fairness we also consider a form of fairness even weaker than weak fairness,
requiring a task to be enabled in each state on a path, as well as ``during each transition''.
We are not aware of any completeness criteria occurring in the literature that is not progress,
justness or one of these forms of fairness---or the weakest possible completeness criterion,
declaring all paths complete, thereby ensuring almost no liveness properties.

In \cite{GH19} we argue that fairness assumptions are by default unwarranted.
In real-world situations akin to \ex{fairness} there is in fact no guarantee that Bart will ever get
a beer. This is in contrast to justness, which by default is warranted. One could argue
that a formalisation of justness is not necessary to arrive at a model of concurrency in which
Cataline will eat her croissant, as fairness is an alternative to justness that accomplishes
the same goal. But here we reject that argument on grounds that fairness tends to rule out as
incomplete more paths than necessary. As argued in \cite{vG16}, this can lead to false guarantees
about the satisfaction of certain liveness properties, e.g., Bart getting a beer in \ex{fairness}.

\paragraph{Reading guide}

In \Sec{justness}, following \cite{GH19}, \we present labelled transition systems with a concurrency relation
satisfying some closure property, and define justness as a predicate on paths in any such transition
system. Again following  \cite{GH19}, \we also propose an optional characterisation of the concurrency
relation in terms of more primitive notions of the \emph{necessary} and \emph{affected components} of
a transition.

In \Sec{feasibility} \we show liberal conditions under which $B$-justness meets the requirement
of feasibility.

In \Sec{fairness} \we recall the unifying definition of fairness from \cite{GH19}, and show
how progress can be cast as a particular fairness property. In spite of this we
continue to see progress as a completeness criterion different from fairness.
\We cannot cast justness as a fairness property.

\Sec{PA} recalls the syntax and semantics of CCS, and its extensions ABC and CCSS with broadcast
communication and signals, respectively. It also proposes a simplification of the operational
semantics of CCSS by encoding signal emissions as transitions, and recalls an alternative presentation
of ABC that avoids negative premises in the operational semantics.

In \Sec{LTSC} \we associate labelled transition systems with a concurrency relation to each of the five
process algebras from \Sec{PA}, and show that they satisfy the closure property of \Sec{justness}. 
The concurrency relation is defined in terms of \emph{synchrons}, novel particles out of which
transitions are seen to be composed.
Sections~\ref{sec:justness} and~\ref{sec:LTSC} together constitute a definition of justness valid
for the five process algebras of \Sec{PA}.
For CCS the concurrency relation is symmetric, but for the other four process algebras it is not.
The alternative presentations of CCSS and ABC feature \emph{{\signal} transitions}
that do not model state changes of the represented system.
Instead, an indicator transition reveals a property of its source state $P$, which is also its target state, for
instance that process $P$ is emitting a signal, or that it cannot receive a given broadcast. Indicator transitions
need to be excepted from the justness requirement.

\Sec{components} revisits the component-based characterisation of the concurrency relation
contemplated in \Sec{justness}, and proposes two alternative concepts of system components
associated to a transition, with for each a classification of components as necessary and/or
affected. The \emph{dynamic components} give rise to the exact same concurrency relation as defined
in terms of synchrons in \Sec{LTSC}, whereas the \emph{static components} yield an
underapproximation---a strictly smaller concurrency relation. However, only the static components
satisfy a closure property proposed in \cite{GH19}.

\Sec{computational} provides two computational interpretations of CCS and its extensions,
the default one corresponding to the concurrency relation of \Sec{LTSC}, and thus the dynamic
concurrency relation of \Sec{components}; the other corresponding to the static concurrency relation
of \Sec{components}. \We also provide a natural sublanguage on which the two concurrency relations coincide.

\Sec{agree} shows that the dynamic and static concurrency relations give rise to the very same
concept of justness. Hence, for the study of justness \we may use whichever of these concurrency
relations is the most convenient.
Using this, in \Sec{feasibility2} \we apply the results of \Sec{feasibility} to show that $B$-justness is
feasible for full CCS \cite{Mi90b} and its extensions with broadcast communication or signals.

\Sec{inductive} shows that the concurrency relation of \Sec{LTSC} agrees with the one defined earlier in \cite{GH15a} on pairs of
transitions for which both are defined. Yet, the concurrency relation from \cite{GH15a} was defined
only for transitions with the same source, and hence is not suitable for our formalisation of justness.

In Sections~\ref{sec:coinductive} and \ref{sec:abstract paths} \we establish that the concept of justness based on a
concurrency relation between transitions, as proposed in \cite{GH19} and applied to CCS and its
extension in the present paper, coincides with the original coinductively defined concepts of
justness from \cite{GH15a} and \cite{EPTCS255.2}.

\Sec{conclusion} summarises, and reviews related and future work.

\paragraph{Acknowledgement}
I am grateful to Weiyou Wang, Peter H\"ofner, Victor Dyseryn and Filippo de Bortoli for valuable feedback.
\newpage

\section{Labelled transition systems with concurrency}\label{sec:justness}

\We start with the formal definitions of a labelled transition system, a path, and the completeness
criterion \emph{progress}, which is parametrised by the choice of a collection $B$ of blocking actions.
Then \we define the completeness criterion \emph{justness} on labelled transition systems upgraded
with a concurrency relation.

\begin{definition}{LTS}
A \emph{labelled transition system} (LTS) is a tuple $(S, \Tr, \source,\target,\ell)$ with $S$ and $\Tr$ sets
(of \emph{states} and \emph{transitions}), $\source,\target:\Tr\rightarrow S$ and
$\ell:\Tr\rightarrow \Lab$, for some set of transition labels $\Lab$.
\end{definition}
Here \we work with LTSs labelled over a structured set of labels $(\Lab,Act,\Rec)$,
where $\Rec\subseteq Act \subseteq \Lab$.

In \cite{EPTCS255.2} and in Sections~\ref{sec:signals}--\ref{sec:ABCd}
one encounters LTSs $\IT$ enriched with \emph{{\signal}s}, revealing facts about states.
While these are naturally modelled as unary predicates on the states of $\IT$,
it is technically possible to model them as ordinary transitions $t$, satisfying
$\source(t)=\target(t)$ \cite{Bou18}. This is formalised by declaring a set of \emph{actions}
$Act \subseteq \Lab\!$.
Transitions $t$ model the occurrence of an action $\ell(t)$ if $\ell(t)\in Act$,
or the revelation of the fact $\ell(t)$ otherwise.
Indicator transitions are largely ignored in the definitions below.

$\Rec\subseteq Act$ is the set of \emph{receptive} actions.
Sets $B\subseteq Act$ of blocking actions must always contain $\Rec$.
In CCS and most other process algebras $\Rec=\emptyset$ and $Act=\Lab$.
Let $\Tr^\bullet = \{t \in \Tr \mid \ell(t)\in Act\setminus\Rec\}$ be the set of transitions that are neither
{\signal} transitions nor receptive.

\begin{definition}{path}
A \emph{path} in a labelled transition system $(S,\Tr,$ $\source,\target,\ell)$ is an alternating sequence
$s_0\,t_1\,s_1\,t_2\,s_2\cdots$ of states and non-{\signal} transitions, starting with a state and
either being infinite or ending with a state, such that $\source(t_i)=s_{i-1}$ and
$\target(t_i)=s_i$ for all relevant $i$.
\end{definition}
A \emph{completeness criterion} is a unary predicate on the paths in a labelled transition system.

\begin{definition}{progress}
Let $B\subseteq Act$ be a set of actions with $\Rec\subseteq B$---the \emph{blocking} ones.
Then $\Tr^\bullet_{\neg B} := \{t\in \Tr^\bullet \mid \ell(t)\notin B\}$ is the set of
\emph{non-blocking} transitions.
A path in $\IT$ is \emph{$B$-progressing} if either it is infinite or its last state is
the source of no non-blocking transition $t \in \Tr^\bullet_{\neg B}$.
\end{definition}
$B$-progress is a completeness criterion for any choice of $B\subseteq Act$ with $\Rec\subseteq B$.

\begin{definition}{LTSC}
  A \emph{labelled transition system with concurrency} (LTSC) is a tuple $(S, \Tr, \source,\target,\ell,\aconc)$
  consisting of an LTS $(S, \Tr, \source,\target,\ell)$ and a \emph{concurrency relation}
  ${\aconc} \subseteq \Tr^\bullet \times \Tr$, such that:
  \begin{equation}\label{irreflexivity}
  \mbox{$t \naconc t$ for all $t \in\Tr^\bullet$;}
  \end{equation}
  \begin{equation}\label{closure}\begin{minipage}{5.1in}{
  if $t\in\Tr^\bullet$ and $\pi$ is a path from $\source(t)$ to $s\in S$ such that $t \aconc v$ for
  all transitions $v$ occurring in $\pi$, then there is a $u\in\Tr^\bullet$ such that $\source(u)=s$,
  $\ell(u)=\ell(t)$ and $t \naconc u$.}
  \end{minipage}\end{equation}
\end{definition}
Informally, $t\aconc v$ means that the transition $v$ does not interfere with $t$, in the sense that
it does not affect any resources that are needed by $t$, so that in a state where $t$ and $v$ are
both possible, after doing $v$ one can still do (a future variant $u$ of) $t$.
In many transition systems $\aconc$ is a symmetric relation, denoted $\conc$.

The transition relation in a labelled transition system is often defined as a relation
$\Tr \subseteq S \times \Lab \times S$. This approach is not suitable here, as \we will encounter
multiple transitions with the same source, target and label that ought to be distinguished based on
their concurrency relations with other transitions.

\begin{definition}{justness}
  A path $\pi$ in an LTSC is \emph{$B$-just}, for $\Rec\subseteq B\subseteq Act$,
  if for each suffix $\pi'$ of $\pi$, and for each transition $t \in \Tr^\bullet_{\neg B}$
  enabled in the starting state of $\pi'$, a transition $u$ with $t \naconc u$ occurs in $\pi'$.
\end{definition}
Informally, justness requires that once a non-blocking non-{\signal} transition $t$ is enabled,
sooner or later a transition $u$ will occur that interferes with it, possibly $t$ itself.

Note that, for any $\Rec\subseteq B\subseteq Act$, $B$-justness is a completeness criterion stronger than $B$-progress.

In reasonable extensions of $\aconc$ to $\Tr\times\Tr$, {\signal} transitions $t$ would satisfy
$t\aconc t$, meaning that execution of $t$ in no way affects any resources needed to execute $t$
again. It therefore makes no sense to impose closure property (\ref{closure}), or the justness
requirement, on {\signal} transitions (see \ex{unfeasible}).

\paragraph{Components}

Instead of introducing $\aconc$ as a primitive, it is possible to obtain it as a notion derived from
two functions $\npc: \Tr^\bullet \rightarrow \Pow(\C)$ and $\afc: \Tr \rightarrow \Pow(\C)$, for a
given set of \emph{components} $\C$.
These functions could then be added as primitives to the definition of an LTS\@.
They are based on the idea that a process represents a system built from parallel components.
Each transition is obtained as a synchronisation of activities from some of these components.
Now $\npc(t)$ describes the (nonempty) set of components that are \emph{necessary participants} in
the execution of $t$, whereas $\afc(t)$ describes the components that are \emph{affected} by the
execution of $t$. The concurrency relation is then defined by
\[ t \aconc u ~~\Leftrightarrow~~ \npc(t) \cap \afc(u) = \emptyset \;,\]
saying that $u$ interferes with $t$ if and only if a necessary participant in $t$ is affected by $u$.

Most material in this section stems from \cite{GH19}. However, there $\Tr^\bullet=\Tr$, so that
$\aconc$ is irreflexive, i.e., $\npc(t) \cap \afc(t) \neq \emptyset$ for all $t\in\Tr$.
Moreover, a fixed set $B$ is postulated,
so that the notions of progress and justness are not explicitly parametrised with the choice of $B$.
Furthermore, closure property (\ref{closure}) is new here; it is the weakest closure property that
supports \thm{feasibility} and \pr{closure} below. In \cite{GH19} only the model in which $\aconc$
is derived from functions $\npc$ and $\afc$ comes with a closure property:
\begin{equation}\label{closureComp}\begin{minipage}{5.2in}{
   If $t\in\Tr^\bullet$ and $v\in\Tr$ with $\source(t)=\source(v)$ and $\npc(t)\cap\afc(v)=\emptyset$,
   then\\ there is a $u\in \Tr^\bullet$ with $\source(u)=\target(v)$, $\ell(u)=\ell(t)$ and $\npc(u)=\npc(t)$.}
\end{minipage}\end{equation}
Trivially (\ref{closureComp}) implies (\ref{closure}).

\section{Feasibility}\label{sec:feasibility}

An important requirement on completeness criteria is that any finite path can be extended into a complete
path. This requirement was proposed by {Apt, Francez \& Katz} in \cite{AFK88} and called
\emph{feasibility}. It also appears in {Lamport}~\cite{La00-lb} under the name \emph{machine closure}.
The theorem below list conditions under which $B$-justness is feasible.
Its proof is a variant of a similar theorem from \cite{GH19} showing conditions under which 
notions of strong and weak fairness are feasible.

\begin{theorem}{feasibility}
  If, in an LTSC with set $B$ of blocking actions, only countably many transitions from
  $\Tr^\bullet_{\neg B}$ are enabled in each state, then $B$-justness is feasible.
\end{theorem}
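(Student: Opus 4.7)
The plan is to extend $\pi$ to a $B$-just path by a fair stage-wise procedure, constructing a chain $\pi = \pi_0 \subseteq \pi_1 \subseteq \pi_2 \subseteq \cdots$ of finite paths (each extending its predecessor by at most one transition) and taking the limit $\pi_\omega$ as the result. Call a pair $(i,t)$ an \emph{obligation} in $\pi_n$ if $i$ is a position in $\pi_n$, $t\in\Tr^\bullet_{\neg B}$ is enabled at $s_i$, and no transition $u$ with $t\naconc u$ appears in $\pi_n$ after position $i$. By the countability hypothesis, the non-blocking transitions enabled at each state $s$ may be listed as $t_{s,1},t_{s,2},\ldots$, so every potential obligation is coded by a pair $(i,k)\in\IN\times\IN$. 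Fix a dovetailed enumeration $\phi\colon\IN\to\IN\times\IN$ that visits each such pair infinitely often.

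At stage $n\geq 1$ let $(i,k):=\phi(n)$. If position $i$ already lies in $\pi_{n-1}$, the state $s_i$ has a $k$-th enabled non-blocking transition $t:=t_{s_i,k}$, and $(i,t)$ is still pending in $\pi_{n-1}$, then by pendency every transition $v$ occurring after position $i$ in $\pi_{n-1}$ satisfies $t\aconc v$. Apply closure property~(\ref{closure}) to the subpath of $\pi_{n-1}$ from $s_i=\source(t)$ to its endpoint $s$: this yields a $u\in\Tr^\bullet$ with $\source(u)=s$, $\ell(u)=\ell(t)$, and $t\naconc u$; set $\pi_n:=\pi_{n-1}\cdot u$. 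In every other case set $\pi_n:=\pi_{n-1}$.

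To verify $B$-justness of $\pi_\omega$, fix a suffix starting at $s_i$ and a transition $t\in\Tr^\bullet_{\neg B}$ enabled at $s_i$, say $t=t_{s_i,k}$. Since $s_i$ lies in $\pi_\omega=\bigcup_n\pi_n$, some $\pi_{m_0}$ contains position $i$; and $\phi(n)=(i,k)$ holds at infinitely many stages, so some stage $n>m_0$ with $\phi(n)=(i,k)$ exists. At this stage $(i,t)$ is either still pending---and hence discharged by the construction, placing a suitable $u$ in the suffix---or already discharged by some earlier transition, which likewise lies in the suffix. If instead $\pi_\omega$ is finite, then its terminal state can admit no transition from $\Tr^\bullet_{\neg B}$: any such transition would generate a permanently pending obligation whose code still recurs in $\phi$, eventually forcing an extension and contradicting stabilisation. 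The main technical obstacle is arranging the enumeration to be fair despite the fact that countably many new obligations may be generated at each extension step; the infinite-multiplicity dovetailing handles this cleanly, and the countability hypothesis is used in exactly this step.
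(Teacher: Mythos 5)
Your proposal is correct and follows essentially the same strategy as the paper's proof: repeatedly use closure property~(\ref{closure}) to carry a still-pending enabled non-blocking transition forward to the current endpoint and discharge it by appending an interfering transition with the same label, with the countability hypothesis providing a fair scheduling of all obligations. The only difference is bookkeeping---your dovetailed enumeration visiting each code infinitely often plays the role of the paper's matrix-with-crossing-out acting as a priority queue---and both verifications rest on the same invariant.
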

\begin{proof}
  \We present an algorithm for extending any given finite path $\pi_0$ into a $B$-just path $\pi$.
  The extension uses transitions from $\Tr^\bullet_{\neg B}$ only.
  \We build an $\IN\times\IN$-matrix with column $i$ for the---to be constructed---prefix $\pi_{i}$
  of $\pi$, for $i\geq 0$.
  Column $i$ will list the transitions from $\Tr^\bullet_{\neg B}$ enabled in the last
  state of $\pi_i$, leaving empty most slots if there are only finitely many.
  An entry in the matrix is either (still) empty, filled in with a transition, or crossed out.
  Let $f:\IN\rightarrow \IN\times\IN$ be an enumeration of the entries in this matrix.
  
  At the beginning only $\pi_0$ is known, and all columns of the matrix are empty.
  At each step $i\geq 0$ \we fill in column $i$, extend the path $\pi_i$ into $\pi_{i+1}$ if possible
  by appending one transition (and its target state), and cross out some transitions occurring in the matrix.
  As an invariant, \we maintain that a transition $t$ occurring in column $k$ is already crossed
  out when reaching step $i>k$ iff a transition $u$ occurs in the extension of $\pi_k$ into $\pi_i$
  such that $t \naconc u$.  At each step $i\geq 0$ \we proceed as follows:
  
  Since $\pi_i$ is known, \we fill in column $i$ by listing all transitions from
  $\Tr^\bullet_{\neg B}$ enabled in the last state of $\pi_i$.
  \We take $n$ to be the smallest value such that entry $f(n)\in\IN\times\IN$ is already
  filled in, say with $t\in\Tr^\bullet_{\neg B}$, but not yet crossed out. If such an $n$ does not
  exist, the algorithm terminates, with output $\pi_i$. Let $k$ be the column in which $f(n)$ appears.
  By our invariant, all transitions $v$ occurring in the extension of $\pi_k$ into $\pi_i$
  satisfy $t \aconc v$. By (\ref{closure}) there is a transition $u\in\Tr^\bullet_{\neg B}$ enabled in the
  last state of $\pi_i$ such that $t \naconc u$. \We now extend $\pi_i$ into $\pi_{i+1}$
  by appending $u$ to it, while crossing out all entries $t'$ in the matrix for which $t' \naconc u$,
  including $f(n)$, which is the entry in column $k$ representing the transition $t$. This maintains our invariant.
  
  Obviously, $\pi_{i}$ is a prefix of $\pi_{i+1}$, for $i\geq 0$.
  The desired path $\pi$ is the limit of all the $\pi_i$.
  It is $B$-just, using the invariant, because each transition $t\in\Tr^\bullet_{\neg B}$ that is
  enabled in a state of $\pi$ is either interfered with by a transition occurring in $\pi_0$, or
  will appear in the matrix, which acts like a priority queue, and be eventually crossed out.
\end{proof}
It is possible to strengthen \thm{feasibility} somewhat by calling two transitions $t$ and $t'$
\emph{equivalent} if $t\aconc u \Leftrightarrow t'\aconc u$ for all $u \in \Tr^\bullet$.
An equivalence class of transitions is \emph{enabled} iff one of its elements is.
\begin{corollary}{feasibility}
  If, in an LTSC with set $B$ of blocking actions, only countably many equivalence classes of transitions from
  $\Tr^\bullet_{\neg B}$ are enabled in each state, then $B$-justness is feasible.
\end{corollary}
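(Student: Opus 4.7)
The plan is to adapt the proof of \thm{feasibility} by replacing, in each column of the matrix, the listing of individual transitions by a listing of one representative per equivalence class of enabled transitions. Since $t\sim t'$ is defined precisely by $t\aconc u \Leftrightarrow t'\aconc u$ for all $u\in\Tr^\bullet$, representatives are fully interchangeable with respect to both the closure property~(\ref{closure}) and the $B$-justness condition, so the matrix-priority-queue argument should carry over without substantive change.

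First I would record the key observation: if $t\sim t'$ then $t\naconc u$ iff $t'\naconc u$ for every $u\in\Tr^\bullet$. Hence a transition $u$ interferes with $t$ exactly when it interferes with any, or equivalently every, member of $[t]$. Crossing out a single representative therefore acts as crossing out the entire class, and the $B$-justness requirement is satisfied for $t$ as soon as it is satisfied for some representative of $[t]$.

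Next I would modify the construction of the matrix. At step $i$, rather than listing every transition from $\Tr^\bullet_{\neg B}$ enabled in the last state of $\pi_i$, list exactly one transition from each equivalence class of such transitions; by hypothesis there are countably many classes, so column $i$ fits in $\IN$ as required for the enumeration $f$. The invariant is unchanged in form: a representative $t^*$ placed in column $k$ is crossed out at step $i>k$ iff some transition $u$ in the extension of $\pi_k$ into $\pi_i$ satisfies $t^*\naconc u$. When the priority queue selects an uncrossed $t^*$, closure property~(\ref{closure}) supplies a $u\in\Tr^\bullet_{\neg B}$ with $\source(u)$ equal to the current endpoint of $\pi_i$, $\ell(u)=\ell(t^*)$, and $t^*\naconc u$, exactly as in the original argument.

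Finally, to verify $B$-justness of the limit path $\pi$, I would take any $t\in\Tr^\bullet_{\neg B}$ enabled at a state $s_j$ of $\pi$. A representative $t^*$ of $[t]$ then sits in column $j$, and the same priority-queue bookkeeping as in the proof of \thm{feasibility} shows that $t^*$ must eventually be crossed out, i.e.\ that some transition $u$ with $t^*\naconc u$ occurs in the suffix of $\pi$ starting at $s_j$; by the key observation $t\naconc u$ as well. The only step requiring any care is to verify that the equivalence $\sim$ is strong enough to transport both the application of~(\ref{closure}) and the $B$-justness witness from the representative to every element of its class, but this is immediate from the definition of $\sim$, so the corollary requires no substantively new ideas beyond \thm{feasibility}.
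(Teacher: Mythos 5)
Your proposal is correct and takes essentially the same route as the paper: the paper's own proof of this corollary simply reruns the algorithm of \thm{feasibility} with the matrix entries being equivalence classes of enabled transitions, which is exactly your one-representative-per-class construction. Your explicit observation that interference by an appended $u\in\Tr^\bullet_{\neg B}\subseteq\Tr^\bullet$ transfers between equivalent transitions is precisely the (implicit) reason the paper's one-line proof works.
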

\begin{proof}
  The proof is the same as the one above, except that the matrix now contains equivalence classes of
  enabled transitions.
\end{proof}

\section{Fairness}\label{sec:fairness}

Let $\Tr^\circ = \{t \in \Tr \mid \ell(t)\in Act\}$.
To formalise fairness \we use LTSs $(S,\Tr,\source,\target,\ell,\Tk)$
that are augmented with a set $\Tk\subseteq\Pow({\Tr^\circ})$ of \emph{tasks}
$T\subseteq {\Tr^\circ}$, each being a set of transitions.
The concept of \emph{\J-fairness} from \cite{GH19} is defined only for LTSCs
$(S,\Tr,\source,\target,\aconc,\ell,\Tk)$ augmented with such a $\Tk$.

\begin{definitionA}{\cite{GH19}}{fair}
For an augmented LTS $\IT=(S,\Tr,\source,\target,\ell,[\aconc,]\Tk)$ and a set $\Rec\subseteq B\subseteq Act$ of
blocking actions, a task $T\mathbin\in\Tk$ is $B$-\emph{enabled} in a state $s\in S$ if there exists a
non-blocking transition $t\in T$ with $\ell(t)\notin B$ and $\source(t)=s$. [It is $B$-enabled
\emph{during the execution} of a transition $u\in \Tr$ if there exists a $t\in T$ with
$\ell(t)\notin B$, $\source(t)=\source(u)$ and $t \aconc u$.]
The task \emph{occurs} in a path $\pi$ in $\IT$ if $\pi$ contains a transition $t\mathbin\in T\!$.
It is said to be \emph{relentlessly $B$-enabled} on $\pi$, if each suffix of $\pi$ contains a state
in which it is $B$-enabled.\footnote{This is the case if the task is $B$-enabled in infinitely many states
of $\pi$, in a state that occurs infinitely often in $\pi$, or in the last state of a finite $\pi$.}
It is \emph{perpetually $B$-enabled} on $\pi$, if it is $B$-enabled in every state of $\pi$.
[It is said to be \emph{continuously $B$-enabled} on $\pi$, if it is $B$-enabled in every state
and during every transition of $\pi$.]

A path $\pi$ in $\IT$ is \emph{strongly $B$-fair} if, for every suffix $\pi'$ of $\pi$,
each task that is relentlessly $B$-enabled on $\pi'$, occurs~in~$\pi'$.
A path $\pi$ in $\IT$ is \emph{weakly $B$-fair} if, for every suffix $\pi'$ of $\pi$,
each task that is perpetually $B$-enabled on $\pi'$, occurs in $\pi'$.
[A path $\pi$ in $\IT$ is \emph{\J-$B$-fair} if, for every suffix $\pi'$ of $\pi$,
each task that is continuously $B$-enabled on $\pi'$, occurs in $\pi'$.]
\end{definitionA}
When the set $B$ is defined once and for all or clear from context, \we may omit the parameter $B$.
This was the situation in \cite{GH19}.

In \cite{GH19} many notions of fairness occurring in the literature were cast as instances of this
definition. For each of them the set of tasks $\Tk$ was derived, in different ways, from some other
structure present in the model of distributed systems from the literature. In fact, \cite{GH19}
considers 7 ways to construct the collection $\Tk\!$, and speaks of fairness of \emph{actions},
\emph{transitions}, \emph{instructions}, \emph{synchronisations}, \emph{components},
\emph{groups of components} and \emph{events}. This yields 21 notions of fairness. To compare them,
each is defined formally on a fragment of CCS, and the 21 fairness notions (together with progress,
justness, and a few concepts of fairness found in the literature that are not instances of
\df{fair}) are ordered by strength by placing them in a lattice.

Progress can be cast as a fairness notion in the sense of \df{fair} by taking $\Tk$ to be the
collection of only one task, namely $\Tr^\circ$. Clearly weak, strong and J-fairness all coincide
for this $\Tk$. Likewise, the trivial completeness criterion, declaring all paths complete,
coincides with weak, strong and J-fairness when taking $\Tk=\emptyset$.
Nevertheless, it would be confusing to address these completeness criteria as fairness assumptions.

\We do not see how justness can be cast a fairness notion in the sense of \df{fair}.
However, \we now show that there exists a form of fairness according to \df{fair} that is at least as
strong as justness.
Namely take $\Tk:=\{T_t \mid t \in \Tr^\bullet\}$ where $T_t:= \{u \in \Tr^\circ \mid t \naconc u\}$.

\begin{proposition}{closure}
Given this $\Tk$ and $B\subseteq Act$, any path that is strongly or weakly $B$-fair is certainly $B$-just.
\end{proposition}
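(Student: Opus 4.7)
The plan is to reduce the statement to the weak fairness case. Strong $B$-fairness is a stronger completeness criterion than weak $B$-fairness: a task that is perpetually $B$-enabled on a suffix $\pi'$ is \emph{a fortiori} relentlessly $B$-enabled on $\pi'$, so every strongly $B$-fair path is weakly $B$-fair. Hence it suffices to show that weak $B$-fairness implies $B$-justness for the choice of tasks $\Tk = \{T_t \mid t \in \Tr^\bullet\}$.

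So assume $\pi$ is weakly $B$-fair, fix a suffix $\pi'$ of $\pi$ with starting state $s_0$, and suppose $t \in \Tr^\bullet_{\neg B}$ is enabled in $s_0$. I aim to exhibit a transition $u$ occurring in $\pi'$ with $t \naconc u$, and argue by contradiction: suppose no such $u$ occurs, i.e., every transition $v$ appearing in $\pi'$ satisfies $t \aconc v$, which is to say the task $T_t$ does not occur in $\pi'$. The key step is to deduce, under this hypothesis, that $T_t$ is \emph{perpetually} $B$-enabled on $\pi'$. For any state $s$ appearing in $\pi'$, the prefix of $\pi'$ from $s_0$ to $s$ is a path from $\source(t)$ to $s$ whose transitions are all $\aconc$-related to $t$. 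Applying the closure property (\ref{closure}) yields a $u \in \Tr^\bullet$ with $\source(u) = s$, $\ell(u) = \ell(t)$, and $t \naconc u$. Since $\Tr^\bullet \subseteq \Tr^\circ$ and $\ell(u) = \ell(t) \notin B$, this $u$ witnesses that $T_t$ is $B$-enabled at $s$.

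Weak $B$-fairness applied to the suffix $\pi'$ and the perpetually $B$-enabled task $T_t$ now forces $T_t$ to occur in $\pi'$, contradicting the assumption. Hence some $u \in T_t$ occurs in $\pi'$, which by definition of $T_t$ satisfies $t \naconc u$, as required. The main obstacle—and the only substantive step—is the passage from the assumption that no interfering transition has yet appeared to the perpetual $B$-enabledness of $T_t$. This is precisely the work done by closure property (\ref{closure}): it guarantees that a fresh ``copy'' of $t$ (with the same label and an incompatible concurrency relation) is still available at every state reached via transitions that do not interfere with $t$. Without (\ref{closure}), $T_t$ could silently cease to be enabled and the implication would break down.
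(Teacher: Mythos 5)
Your proof is correct and follows essentially the same route as the paper's: reduce strong to weak fairness, then use closure property (\ref{closure}) on the prefix of $\pi'$ up to each state to show $T_t$ is perpetually $B$-enabled, and invoke weak $B$-fairness to obtain a contradicting occurrence of $T_t$. The extra observation that $\Tr^\bullet\subseteq\Tr^\circ$ (so the witness produced by (\ref{closure}) indeed lies in $T_t$) is a detail the paper leaves implicit, but otherwise the arguments coincide.
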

\begin{proof}
Any path that is strongly $B$-fair is certainly weakly $B$-fair.
This follows trivially from the definitions, for any choice of $\Tk$.
(Likewise, any path that is weakly $B$-fair is certainly J-$B$-fair.)

Suppose $\pi$ is weakly $B$-fair. \We show it is $B$-just. Suppose that $t\in \Tr^\bullet_{\neg B}$ is enabled
in the first state $s$ of a suffix $\pi'$ of $\pi$, i.e., $\source(t)=s$, but all transitions $v$
occurring in $\pi'$ satisfy $t \aconc v$.
Closure property (\ref{closure}) guarantees that for every state $s'$ of $\pi'$ there is a
$u\in\Tr^\bullet$ such that $\source(u)=s'$, $\ell(u)=\ell(t)$ and $t \naconc u$.
Hence task $T_t$ is perpetually $B$-enabled on $\pi'$.
By weak $B$-fairness $T_t$ must occur in $\pi'$, meaning that $\pi'$ contains a transition
$u\in\Tr^\circ$ with $t \naconc u$. This contradicts the assumptions.
\end{proof}

\section{CCS and its extensions with broadcast communication and signals}\label{sec:PA}

This section presents five process algebras: Milner's \emph{Calculus of Communicating Systems} (CCS) \cite{Mi80},
its extensions ABC with broadcast communication \cite{GH15a} and CCSS with signals \cite{EPTCS255.2},
an alternative presentation of CCSS where signal emissions are encoded as transitions, and an alternative
presentation of ABC that avoids negative premises in favour of \emph{discard} transitions.

\subsection{CCS}\label{sec:CCS}

\begin{table*}[t]
\caption{Structural operational semantics of CCS}
\label{tab:CCS}
\normalsize
\begin{center}
\framebox{$\begin{array}{c@{}c@{}c}
\alpha.P \goto{\alpha} P  \mylabel{Act} &
\displaystyle\frac{P\goto{\alpha} P'}{P+Q \goto{\alpha} P'}  \mylabel{Sum-l}&
\displaystyle\frac{Q\goto{\alpha} Q'}{P+Q \goto{\alpha} Q'}  \mylabel{Sum-r} \\[4ex]
\displaystyle\frac{P\goto{\eta} P'}{P|Q \goto{\eta} P'|Q} \mylabel{Par-l}&
\displaystyle\frac{P\goto{c} P' ,~ Q \goto{\bar{c}} Q'}{P|Q \goto{\tau} P'| Q'} \mylabel{Comm}&
\displaystyle\frac{Q \goto{\eta} Q'}{P|Q \goto{\eta} P|Q'} \mylabel{Par-r}\\[4ex]
\displaystyle\frac{P \goto{\ell} P'}{P\backslash \RL \goto{\ell}P'\backslash \RL}~~(\ell\not\in\RL\dcup\bar{\RL}) \mylabel{Res}&
\displaystyle\frac{P \goto{\ell} P'}{P[f] \goto{f(\ell)} P'[f]} \mylabel{Rel} &
\displaystyle\frac{P \goto{\alpha} P'}{A\goto{\alpha}P'}~~(A \stackrel{{\it def}}{=} P) \mylabel{Rec}
\end{array}$}
\vspace{-2ex}
\end{center}
\end{table*}

\noindent
CCS \cite{Mi80} is parametrised with sets ${\A}$ of \emph{agent identifiers} and $\Ch$ of \emph{(handshake communication) names};
each $A\in\A$ comes with a defining equation \plat{$A \stackrel{{\it def}}{=} P$} with $P$ being a CCS expression as defined below.
$\bar{\Ch} := \{ \bar{c} \mid c \in \Ch\}$ is the set of \emph{(handshake communication) co-names}.
Complementation is extended to $\bar\Ch$ by setting $\bar{\bar{\mbox{$c$}}}=c$.
\plat{$Act := \Ch\dcup\bar\Ch\dcup \{\tau\}$} is the set of {\em actions}, where $\tau$ is a special \emph{internal action}.
Below, $A$ ranges over $\A$, $c$ over $\Ch\dcup\bar\Ch$ and $\eta$, $\alpha$, $\ell$ over $Act$.
A \emph{relabelling} is a function $f\!:\Ch\mathbin\rightarrow \Ch$; it extends to $Act$ by
$f(\bar{c})\mathbin=\overline{f(c)}$ and $f(\tau):=\tau$.
The set $\cT_{\rm CCS}$ of CCS expressions or \emph{processes} is the smallest set including:
\begin{center}
\begin{tabular}{@{}lll@{}}
${\bf 0}$ && \emph{inaction}\\
$\alpha.P$  & for $\alpha\mathbin\in Act$ and $P\mathbin\in\cT_{\rm CCS}$ & \emph{action prefixing}\\
$P+Q$ & for $P,Q\mathbin\in\cT_{\rm CCS}$ & \emph{choice} \\
$P|Q$ & for $P,Q\mathbin\in\cT_{\rm CCS}$ & \emph{parallel composition}\\
$P\backslash \RL$ & for $\RL\subseteq\Ch$ and $P\mathbin\in\cT_{\rm CCS}$ & \emph{restriction} \\
$P[f]$ & for $f$ a relabelling and $P\mathbin\in\cT_{\rm CCS}$ & \emph{relabelling} \\
$A$ & for $A\in\A$ & \emph{agent identifier}\\
\end{tabular}
\end{center}
One often abbreviates $\alpha.{\bf 0}$ by $\alpha$, and $P\backslash\{c\}$ by $P\backslash c$.
The traditional semantics of CCS is given by the labelled transition relation
$\mathord\rightarrow \subseteq \cT_{\rm CCS}\times Act \times\cT_{\rm CCS}$, where transitions 
\plat{$P\ar{\ell}Q$} are derived from the rules of \autoref{tab:CCS}.
Here $\bar{L}:=\{\bar{c} \mid c \in L\}$.
The process $\alpha.P$ performs the action $\alpha$ first and subsequently acts as $P$.
The process $P+Q$ may act as either $P$ or $Q$, depending on which of the processes is able to act at all.
The parallel composition $P|Q$ executes an action from $P$, an action from $Q$, or in the case where
$P$ and $Q$ can perform complementary actions $c$ and $\bar{c}$, the process can perform a synchronisation, resulting in an internal action $\tau$.
The process $P \backslash \RL$
inhibits execution of the actions from $\RL$ and their complements. 
The relabelling $P[f]$ acts like process $P$ with all labels $\ell$ replaced by $f(\ell)$.
Finally, the rule for agent identifiers says that an agent $A$ has the same transitions as the body $P$ of its defining equation.

\subsection{ABC---The Algebra of Broadcast Communication}\label{sec:ABC}

The Algebra of Broadcast Communication (ABC) \cite{GH15a} is parametrised with sets ${\A}$ of \emph{agent identifiers},
$\B$ of \emph{broadcast names} and $\Ch$ of \emph{handshake communication names};
each $\AI\in\A$ comes with a defining equation \plat{$\AI \stackrel{{\it def}}{=} P$}
with $P$ being a guarded ABC expression as defined below.

The collections $\B!$ and $\B?$ of \emph{broadcast} and \emph{receive}
actions are given by $\B\sharp:=\{b\sharp \mid b\mathbin\in\B\}$ for $\sharp \in \{!,?\}$.
\plat{$Act := \B! \dcup \B? \dcup \Ch \dcup \bar\Ch \dcup \{\tau\}$} is the set of \emph{actions}.
Below, $\AI$ ranges over $\A\!$, $b$ over $\B$, $c$ over $\Ch\dcup\bar\Ch$,
$\eta$ over $\Ch\dcup\bar\Ch\dcup\{\tau\}$ and $\alpha,\ell$ over $Act$.
A \emph{relabelling} is a function $f\!:(\B\mathbin\rightarrow \B) \dcup (\Ch\mathbin\rightarrow \Ch)$.
It extends to $Act$ by $f(\bar{c})\mathbin=\overline{f(c)}$, $f(b\sharp)\mathbin{:=}f(b)\sharp$ and $f(\tau):=\tau$.
The set $\cT_{\rm ABC}$ of ABC expressions is defined exactly as $\cT_{\rm CCS}$.
An expression is guarded if each agent identifier occurs within the scope of a prefixing operator.
The structural operational semantics of ABC is the same as the one for CCS (see \tab{CCS}) but
augmented with the rules for broadcast communication in \tab{ABC}.

\begin{table*}[b]
  \vspace{-2ex}
  \caption{Structural operational semantics of ABC broadcast communication}
  \vspace{1ex}
\normalsize
\centering
\label{tab:ABC}
\framebox{
$\begin{array}{@{}ccc@{}}
\!\displaystyle\frac{\E\mathbin{\goto{b\sharp_1}} \E' ,~ \F \nar{b?}}{\E|\F \goto{b\sharp_1} \E'| \F}  \mylabel{Bro-l}&
\displaystyle\frac{\E\mathbin{\goto{b\sharp_1}} \E' ,~ \F \goto{b\sharp_2} \F'}{\E|\F \goto{b\sharp} \E'| \F'}
 \scriptstyle \sharp_1\circ\sharp_2 = \sharp \neq \_ ~~\mbox{with}~~
    \begin{array}{c@{\ }|@{\ }c@{\ \ }c}
    \scriptstyle \circ & \scriptstyle ! & \scriptstyle ? \\
    \hline
    \scriptstyle ! & \scriptstyle \_ & \scriptstyle ! \\
    \scriptstyle ?& \scriptstyle ! & \scriptstyle ? \\
    \end{array}
 \mylabel{Bro-c} &
\displaystyle\frac{\E\nar{b?} ,~ \F \mathbin{\goto{b\sharp_2}} \F'}{\E|\F \goto{b\sharp_2} \E| \F'}  \mylabel{Bro-r}\\[-3pt]
\end{array}$}
\end{table*}

ABC is CCS augmented with a
formalism for broadcast communication taken from the Calculus of Broadcasting Systems (CBS)~\cite{CBS91}.
The syntax without the broadcast and receive actions and all rules except \myref{Bro-l},
\myref{Bro-c} and \myref{Bro-r} are taken verbatim from CCS\@. However, 
the rules now cover the different name spaces; \myref{Act} for example allows labels of
broadcast and receive actions. The rule \myref{Bro-c}\linebreak[3]---without rules
like \myref{Par-l} and \myref{Par-r} for label $b!$---implements a
form of broadcast communication where any broadcast $b!$
performed by a component in a parallel composition is guaranteed to be received by any other
component that is ready to do so, i.e., in a state that admits a $b?$-transition.
In order to ensure associativity of the parallel composition, one also needs this 
rule for components receiving at the same time ($\sharp_1\mathord=\sharp_2\mathord=\mathord{?}$).
The rules \myref{Bro-l} and \myref{Bro-r} are added to make
broadcast communication \emph{non-blocking}: without them a component could be delayed in
performing a broadcast simply because one of the other components is not ready to receive it.

\subsection{CCS with signals}\label{sec:signals}

\emph{CCS with signals} (CCSS) \cite{EPTCS255.2} is CCS extended with a signalling operator
$P\signals s$. Informally, $P\signals s$ emits the signal $s$ to be read by another
process. $P\signals s$ could for instance be a traffic light emitting the signal \emph{red}.
The reading of the signal emitted by $P\signals s$ does not interfere with any transition
of $P$, such as jumping to \emph{green}.
Formally, CCS is extended with a set $\Sig$ of \emph{signals}, ranged over by $s$ and $r$.
In CCSS the set of actions is defined as \plat{$Act := \Sig \dcup \Ch\dcup\bar\Ch\dcup\{\tau\}$}.
A relabelling is a function $f:(\Sig\rightarrow\Sig)\dcup(\Ch\rightarrow\Ch)$.
As before it extends to $Act$ by $f(\bar{c})\mathbin=\overline{f(c)}$ and $f(\tau):=\tau$.
The set $\cT_{\rm CCSS}$ of CCSS expressions is defined just as $\cT_{\rm CCS}$, but now also
$P\signals s$ is a process for $P\in\cT_{\rm CCSS}$ and $s\in\Sig$, and restriction also
covers signals.

The semantics of CCSS is given by the labelled transition relation
$ \mathord\rightarrow \subseteq \cT_{\rm CCSS}\times Act \times \cT_{\rm CCSS}$
and a predicate $\sigar{}\subseteq \cT_{\rm CCSS}\times \Sig$ that
are derived from the rules of CCS (\autoref{tab:CCS}, where $\eta,\alpha,\ell$ range over $Act$ and
$L\subseteq\Ch\dcup\Sig$), and the rules of \autoref{tab:CCSssos}.
\begin{table}[t]
\caption{Structural operational semantics for signals of CCSS}
\normalsize
\begin{center}
\framebox{$\begin{array}{cccc}
(P \signals s)\sigar{s}&
\displaystyle\frac{P\ar{\alpha}P'}{P\signals r \ar{\alpha}P'}&
\displaystyle\frac{{P}\sigar{s}}{(P+Q) \sigar{s}}&
\displaystyle\frac{{Q}\sigar{s}}{(P+Q) \sigar{s}}\\[4ex]
\displaystyle\frac{P\sigar{s}}{(P|Q) \sigar{s}} &
\displaystyle\frac{P\sigar{s},~ Q \ar{s} Q'}{P|Q \ar{\tau} P| Q'} &
\displaystyle\frac{P\ar{s}P',~ Q \sigar{s}}{P|Q \ar{\tau} P'| Q}&
\displaystyle\frac{Q \sigar{s}}{(P|Q) \sigar{s}}\\[4ex]
\displaystyle\frac{P\sigar{s}}{(P\signals r)\sigar{s}}&
\displaystyle\frac{P \sigar{s}}{(P\backslash \RL) \sigar{s}}~~(s\not\in\RL) &
\displaystyle\frac{P \sigar{s}}{P[f] \sigar{f(s)}} &
\displaystyle\frac{P \sigar{s}}{A\sigar{s}}~~(A \stackrel{{\it def}}{=} P)
\end{array}$}
\end{center}
\label{tab:CCSssos}
\end{table}
The predicate $P\sigar{s}$ indicates that process $P$ emits the signal $s$,
whereas a transition $P \ar{s} P'$ indicates that $P$ reads the signal $s$ and thereby turns into $P'$.
The first rule is the base case showing that a process $P\signals s$ emits the signal $s$.
The second rule of \autoref{tab:CCSssos} models the fact that signalling cannot prevent a process from making progress. 
After having taken an action, the signalling process loses its ability to emit the signal.
The two rules in the middle of \autoref{tab:CCSssos} state that the action of reading a signal by one component in
(parallel) composition together with the emission of the same signal by another component,
results in an internal transition $\tau$; similar to the case of handshake communication.
Note that the component emitting the signal does not change through this interaction.
All the other rules of \autoref{tab:CCSssos} lift the emission of $s$ by a subprocess $P$ to the overall process.

\subsection{Encoding signal emissions as transitions}\label{sec:CCSS}

A more compact presentation of CCSS can be obtained by encoding a signal emission $P\sigar s$ as a
transition \plat{$P \goto{\bar s} P$}; this is done in \cite{Bou18}.
The price to be paid for the resulting simplification of the operational
semantics is that the new transitions \plat{$P \goto{\bar s} P$} should not be counted in the definition of justness,
since they do not model changes in the state of the represented system.

In this presentation of CCSS the set of labels is defined as $\Lab:=Act \dcup \bar\Sig$, where $Act$
is defined as in the previous section and $\bar{\Sig} := \{ \bar{s} \mid s \in \Sig\}$.
Complementation is extended to $\bar\Ch \dcup \bar\Sig$ by setting $\bar{\bar{\mbox{$c$}}}:=c$, with $c\in\Ch\dcup\Sig$.
A relabelling is a function $f:(\Sig\rightarrow\Sig)\dcup(\Ch\rightarrow\Ch)$;
it extends to $\Lab$ by $f(\bar{c})\mathbin=\overline{f(c)}$ for $c\in\Ch\dcup\Sig$, and $f(\tau):=\tau$.
The semantics is given by the labelled transition relation
$ \mathord\rightarrow \subseteq \cT_{\rm CCSS}\times \Lab \times \cT_{\rm CCSS}$
derived from the rules of CCS (\autoref{tab:CCS}), where now $\eta,\ell$ range over $\Lab$, $\alpha$
over \plat{$Act$}, $c$ over $\Ch \dcup\Sig$ and $L\subseteq\Ch\dcup\Sig$,
augmented with the rules of \autoref{tab:CCSS}.
\begin{table}[t]
\caption{Structural operational semantics of CCSS when signal emissions are encoded as transitions}
\normalsize
\begin{center}
\framebox{$\begin{array}{cccc}
P \signals s\goto{\bar s} P\signals s&
\displaystyle\frac{{P}\goto{\bar s}P'}{P+Q\goto{\bar s}P'+Q}&
\displaystyle\frac{{Q}\goto{\bar s}Q'}{P+Q\goto{\bar s}P+Q'}\\[4ex]
\displaystyle\frac{P\goto{\alpha}P'}{P\signals r \goto{\alpha}P'}&
\displaystyle\frac{P\goto{\bar s}P'}{P\signals r \goto{\bar s}P'\signals r}&
\displaystyle\frac{P\goto{\bar s} P'}{A\goto{\bar s}A}~~(A \stackrel{{\it def}}{=} P)
\end{array}$}
\end{center}
\label{tab:CCSS}\vspace{-2ex}
\end{table}

\subsection{Using {\signal} transitions to avoid negative premises in ABC}\label{sec:ABCd}

Finally, \we present an alternative operational semantics ABCd of ABC that avoids negative premises.
The price to be paid is the introduction of {\signal} transitions that indicate when a process does
not admit a receive action.%
\footnote{A process $P$ admits an action $\alpha\in Act$ if there exists a transition $P\ar{\alpha}Q$.\label{admit}}
To this end, let $\B\mathord{:} :=\{b\mathord: \mid b\in\B\}$ be the set of \emph{broadcast discards}, and
\plat{$\Lab := \B\mathord: \dcup Act$} the set of \emph{transition labels}, with $Act$ as in \Sec{ABC}.
The semantics is given by the labelled transition relation
$\mathord\rightarrow \subseteq \cT_{\rm ABC}\times \Lab \times \cT_{\rm ABC}$
derived from the rules of CCS (\autoref{tab:CCS}), where now $c$ ranges over $\Ch\dcup\bar\Ch$,
$\eta$ over $\Ch\dcup\bar\Ch\dcup\{\tau\}$, $\alpha$ over $Act$ and $\ell$ over $\Lab$,
and moreover $L\subseteq \Ch$ and $f:(\B\rightarrow\B)\dcup(\Ch\rightarrow\Ch)$,
augmented with the rules of \autoref{tab:ABCd}.
\begin{table}[b]\vspace{-3ex}%
\caption{Structural operational semantics of ABC broadcast communication with discard transitions}
\normalsize
\begin{center}
  \framebox{
$\begin{array}{ccc}
{\bf 0} \ar{b:} {\bf 0} & 
\alpha.\E \ar{b:} \alpha.\E ~~ \mbox{($\alpha \mathord{\neq} b?$)} & 
\displaystyle\frac{\E\ar{b:} \E' ,~ \F \ar{b:} \F'}{\E+\F \ar{b:} \E'+ \F'} \\[2ex] 
\multicolumn{2}{c}{\displaystyle\frac{\E\ar{b\sharp_1} \E' ,~ \F \ar{b\sharp_2} \F'}{\E|\F \ar{b\sharp} \E'| \F'}
~~\scriptstyle \sharp_1\circ\sharp_2 = \sharp \neq \_ ~~\mbox{with}~~
\begin{minipage}{0.15\textwidth}
\vspace{-1mm}
$~~\begin{array}{c@{\ }|@{\ }c@{\ \ }c@{\ \ }c}
     \circ &  ! &  ? &  : \\
    \hline
     ! &  \_ &  ! &  ! \\
     ?&  ! &  ? &  ? \\
     : &  ! &  ? &  :
    \end{array}$
\end{minipage}
}& 
\displaystyle\frac{P \ar{b:} P'}{A\ar{b:}A}~~(A \stackrel{{\it def}}{=} P)
\end{array}$}
\end{center}
\label{tab:ABCd}
\end{table}

\begin{lemma}{discards} {\rm \cite{CBS91}}
$P \ar{b:} Q$ iff $Q=P \wedge P \nar{b?}$\,, 
for $P,Q\in\cT_{\rm ABC}$ and $b\in\B$.
\end{lemma}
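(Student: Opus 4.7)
The plan is to prove both implications simultaneously by structural induction on $P\in\cT_{\rm ABC}$, with the joint inductive claim: (i) every derivable $P\ar{b:}Q$ satisfies $Q=P$, and (ii) $P\ar{b:}P$ is derivable iff $P\nar{b?}$. For each operator one inspects which rules of \tab{CCS} (applied with $\ell$ ranging over $\Lab$, hence including the label $b:$) and of \tab{ABCd} can derive a $b?$- or $b:$-transition of $P$, and verifies that the discard rules fire exactly in those situations where no receive rule does, always returning the source as target.

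The cases for ${\bf 0}$ and $\alpha.E$ are immediate: the only applicable rules are the displayed axioms, and for $\alpha.E$ the side condition $\alpha\neq b?$ is exactly the non-applicability of \myref{Act} to label $b?$. For choice $E+F$, the rules \myref{Sum-l} and \myref{Sum-r} give $E+F\ar{b?}$ iff $E\ar{b?}$ or $F\ar{b?}$, while the displayed choice rule for $b:$ fires exactly when both summands admit discards, with the induction hypothesis pinning the target $E'+F'$ to $E+F$. Restriction is routine since $\RL\subseteq\Ch$ forces $b:\notin\RL\dcup\bar{\RL}$, so \myref{Res} lifts both labels symmetrically; relabelling is analogous, tracing $b$ through $f$; for the agent identifier $A$ with $A\stackrel{{\it def}}{=}P$ one combines \myref{Rec} with the displayed discard rule, noting that the target is pinned to $A$ regardless of the body (and using that guardedness makes the recursion through the defining equation well-founded for this inductive argument).

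The main case is parallel composition, handled by a case analysis on the $\sharp_1\circ\sharp_2$ table of \tab{ABCd}. The only entry producing $:$ is $(:,:)$, so any derivation of $E|F\ar{b:}R$ uses $E\ar{b:}E'$ and $F\ar{b:}F'$; the induction hypothesis forces $E'=E$, $F'=F$ and $E,F\nar{b?}$. One then verifies $E|F\nar{b?}$ by inspecting the three combinations $(?,?),(?,:),(:,?)$ that yield $?$, each requiring at least one of $E,F$ to admit $b?$. For the converse, if $E|F\nar{b?}$ I argue by contradiction that $E\nar{b?}$: were $E\ar{b?}E'$, then either $F\ar{b?}F'$ (case $(?,?)$) or else $F\nar{b?}$ and by the induction hypothesis $F\ar{b:}F$ (case $(?,:)$); in both subcases the combined rule delivers $E|F\ar{b?}$, contradicting the assumption. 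Symmetrically $F\nar{b?}$. The induction hypothesis applied to each subterm then yields $E\ar{b:}E$ and $F\ar{b:}F$, which compose via $(:,:)\mapsto{:}$ to $E|F\ar{b:}E|F$.

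The principal obstacle is precisely this tight coupling in the parallel case: the backward direction on the composite uses the forward direction on each subterm (to rule out receives before composing discards), while the forward direction on the composite uses the backward direction (to turn the absence of a receive on a component into a discard). The two halves must therefore be proved in lockstep; splitting into two successive inductions would not close. A secondary technical point is the agent-identifier case, most cleanly handled by rule induction on the derivation for the forward direction and by well-founded induction on $P$ (exploiting guardedness) for the backward direction.
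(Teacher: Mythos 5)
Your proof is correct and takes essentially the same route as the paper, which disposes of this lemma with ``A straightforward induction on derivability of transitions'': your simultaneous induction, with the two directions handled in lockstep (rightly identified as essential in the parallel case) and the guardedness-based well-founded measure for the agent-identifier case, is exactly the elaboration of that routine argument. The one caveat---shared with the paper's one-liner, not specific to you---is the relabelling case: declaring it ``analogous'' tacitly assumes $f$ acts injectively (indeed bijectively) on broadcast names, since with $f(b_1)=f(b_2)=b$ and $b_1\neq b_2$ the process $(b_1?.{\bf 0})[f]$ has both a $b?$-transition and a $b{:}$-loop, so the stated equivalence is delicate precisely there.
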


\begin{proof}
A straightforward induction on derivability of transitions.
\end{proof}

\begin{corollary}{modified same}
The structural operational semantics of ABC from Sections~\ref{sec:ABC} and ~\ref{sec:ABCd}
yield the same labelled transition relation $\longrightarrow$ when transitions labelled $b\mathord{:}$ are ignored.
\hfill$\Box$
\end{corollary}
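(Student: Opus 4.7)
}

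The plan is to show the two semantics derive exactly the same non-discard transitions by a straightforward induction on derivation depth, using \lem{discards} as the bridge that converts the negative premise $\F\nar{b?}$ of Sections~\ref{sec:ABC} into the positive premise $\F\ar{b:}\F'$ of \Sec{ABCd} and vice versa. Since all the rules inherited from CCS (\tab{CCS}) appear verbatim in both presentations, and since discard labels $b\mathord{:}$ never appear in premises or conclusions of non-broadcast rules, the only transitions whose derivability needs to be compared are those labelled by some $b!$ or $b?$.

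First I would check the base cases and the non-broadcast inductive cases, which are immediate because the relevant rules are identical in the two systems and, by the syntactic form of the rules in \tab{ABCd}, a $b\mathord{:}$-transition can only be used to derive another $b\mathord{:}$-transition; hence derivations of non-discard transitions in the ABCd semantics never rely on discard premises except via the combined broadcast rule. Next, for the broadcast case, I would argue both directions. Given a derivation in the Sections~\ref{sec:ABC} semantics ending in \myref{Bro-l}, \myref{Bro-c} or \myref{Bro-r}, each negative premise of the form $R\nar{b?}$ can, by \lem{discards}, be replaced by a positive premise $R\ar{b:}R$, and the three rules collapse into a single instance of the combined rule of \tab{ABCd} with the appropriate values of $\sharp_1,\sharp_2 \in \{!,?,:\}$ yielding $\sharp\in\{!,?\}$; the inductive hypothesis takes care of the positive premises. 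Conversely, given a derivation in the ABCd semantics ending in the combined rule with $\sharp\neq\,:\,$, I would split into cases on $(\sharp_1,\sharp_2)$: the three cases with at least one $\sharp_i=\,:\,$ yield, by \lem{discards}, a negative premise $\nar{b?}$ matching exactly the side condition of \myref{Bro-l} or \myref{Bro-r}, while the cases with $\sharp_1,\sharp_2\in\{!,?\}$ correspond directly to \myref{Bro-c}.

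The only step requiring a little care is ensuring that \lem{discards} can be applied to the intermediate premises in the ABCd direction --- i.e.\ that every $b\mathord{:}$-transition appearing as a premise in a derivation of a non-discard transition has $Q=P$ and certifies $P\nar{b?}$. This is immediate from \lem{discards} itself, whose statement is universal, but it is the point where the argument pivots from ``positive-premise derivations'' to ``negative side-conditions''. Since that lemma is already proved, no obstacle remains; the whole corollary reduces to a case analysis on the last rule of a derivation, and can reasonably be left to the reader, as the author effectively does by writing $\Box$ immediately after the statement.
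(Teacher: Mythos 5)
Your proposal is correct and matches the paper's intent exactly: the paper gives no separate argument for this corollary, treating it as immediate from \lem{discards} by swapping the negative premises $\nar{b?}$ of \myref{Bro-l}/\myref{Bro-r} for discard premises in the combined rule of \tab{ABCd} and vice versa, which is precisely the rule-by-rule induction you spell out. (Only a cosmetic slip: with $\sharp\neq{:}$ there are four cases involving a discard premise, namely $(!,{:})$, $(?,{:})$, $({:},!)$ and $({:},?)$, not three, but this does not affect the argument.)
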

This approach stems from the Calculus of Broadcasting Systems (CBS)~\cite{CBS91}.

\section{An LTSC for CCS and its extensions}\label{sec:LTSC}

The forthcoming material applies to each of the process algebras from \Sec{PA}.
Let $\cT$ be the set of processes or expressions in the appropriate language. 

\We allocate an LTS as in \df{LTS} to these languages by taking $S$ to be the set $\cT$ of processes, and $\Tr$ the set
of \emph{derivations} $\chi$ of transitions \plat{$P\goto{\ell}Q$} with $P,Q\in \cT$.
Of course $\source(\chi)\mathbin=P$, $\target(\chi)\mathbin=Q$ and $\ell(\chi)\mathbin=\ell$.
A \emph{derivation} of a formula $\varphi$ (either a transition \plat{$P\goto{\ell}Q$} or a
predicate $P\sigar s$) is a well-founded tree with the nodes labelled by formulas, such that the
root has label $\varphi$, and if $\mu$ is the label of a node and $K$ is the sequence of labels of the
children of this node then \plat{$\frac{K}{\mu}$} is an instance of a rule of Tables~\ref{tab:CCS}--\ref{tab:ABCd}.

{\makeatletter
\let\par\@@par
\par\parshape0
\everypar{}\begin{wrapfigure}[17]{r}{0.44\textwidth}
 \input{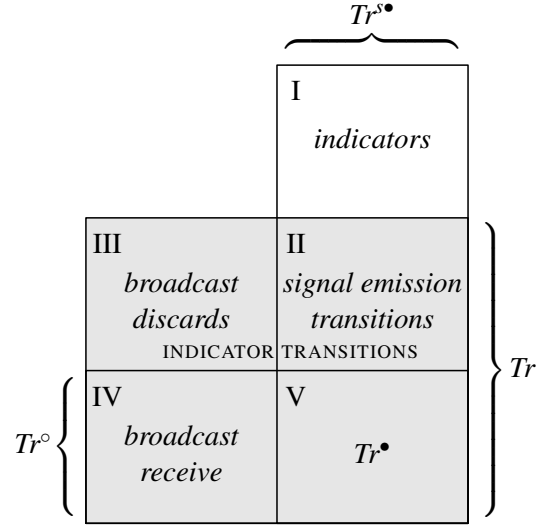}
  \centerline{\box\graph}
 \caption{\it\small Indicators and transitions}
 \label{transitions}
 \end{wrapfigure}
\noindent
In classical process algebra, and in \Sec{PA}, a transition is a formula of the form \plat{$P \goto\ell Q$}.
The CCS process $P=A|(\bar c + \tau)$ for instance, where the agent identifier $A$ has the defining
equation \plat{$A \stackrel{{\it def}}{=} c.A$}, has 3 outgoing transitions:
\plat{$P \goto{c} P$}, \plat{$P \goto{\bar c} A| {\bf 0}$} and \plat{$P \goto{\tau} A| {\bf 0}$}.
The last of these transitions can be derived in two different ways:\linebreak through a synchronisation between
$c$ and $\bar c$, or through the $\tau$ from the right component.
Here \we distinguish these two $\tau$-transitions, so that \we can say that one of them is
concurrent with the $c$-transition whereas the other is not. This is the reason that as the
transitions in the sense of \Sec{justness} \we take the derivations of transitions in the sense of \Sec{PA}.
When confusion is unlikely, \we will use the word ``transition'' for an element of $\Tr$, which is a derivation of a
formula \plat{$P \goto\alpha Q$}. Likewise, an ``{\signal}'' is a derivation of a formula $P \sigar{s}$.

In this paper \we distinguish five kinds of formula derivations, as indicated in \fig{transitions}.
Class I contains the {\signal}s, and Classes II--V the transitions. Classes II, III and IV contain the
transitions with labels in $\bar\Sig$, $\B{:}$ and $\B?$, respectively, and Class~V contains
all others. Recall that \plat{$\Lab = Act \dcup \bar\Sig \dcup \B{:}$}. So Classes
II and III contain the {\signal} transitions, whereas the transitions in Classes IV, V model action occurrences.
The latter ones were collected in the set $\Tr^\circ$ in \Sec{fairness}.
Class IV is inhabited only for ABC or ABCd, and class III only for ABCd. Class I is inhabited only
for the version of CCSS from \Sec{signals}, and Class II only for the one from \Sec{CCSS}.

By \df{path} a path contains transitions from $\Tr^\circ$ only. To properly define the concept of justness
(cf. \df{justness}) \we need a concurrency relation of type ${\aconc} \subseteq \Tr^\bullet \times \Tr^\circ$.
However, at no extra effort \we extend it to ${\aconc} \subseteq \Tr^\bullet \times \Tr$;
in fact $t \aconc u$ will always hold when $u \in \Tr{\setminus}\Tr^\circ$.
Although not required by \df{LTSC}, for some purposes it turns out to be handy to extend the type of
the concurrency relation further to ${\aconc} \subseteq \Tr^{s\bullet} \times \Tr$, where
$\Tr^{s\bullet}$ contains all derivations from Classes I, II and V---see \Sec{inductive}.

\We take $\Rec:=\B?$ in ABC and ABCd: broadcast receipts can always be blocked by the
environment, namely by not broadcasting the requested message. For CCS and CCSS \we take $\Rec:=\emptyset$,
thus allowing environments that can always participate in certain handshakes, and/or always emit
certain signals.

Following \cite{GH15a}, \we give a name to any derivation of a transition:
The unique derivation of the transition \plat{$\alpha.P \ar{\alpha} P$} using the rule \myref{Act} is
called \plat{$\shar{\alpha}P$}. The derivation obtained by application of \myref{Comm} or \myref{Bro-c}
on the derivations $\chi$ and $\zeta$ of the premises of that rule is called $\chi|\zeta$. The
derivation obtained by application of \myref{Par-l} or \myref{Bro-l} on the derivation $\chi$ of the
(positive) premise of that rule, and using process $Q$ at the right of
$|$, is $\chi|Q$. In the same way, \myref{Par-r} and \myref{Bro-r} yield $P|\zeta$,
whereas \myref{Sum-l}, \myref{Sum-r}, \myref{Res}, \myref{Rel} and \myref{Rec} yield $\chi{+}Q$, $P{+}\chi$,
$\chi\backslash \RL$, $\chi[f]$ and \plat{$A\mathord:\chi$}.
These names reflect the syntactic structure of derivations: $\chi|P \not=P |\chi$ 
and $(\chi|\zeta)|v \not= \chi|(\zeta|v)$.

For CCSS as in \Sec{signals} we also name each {\signal} $\xi\notin\Tr$.
The unique derivation of the formula \plat{$(P\signals s)\sigar s$} using the first rule of
\tab{CCSssos} is called \plat{$P\sig s$}.  The other rules of \tab{CCSssos} yield derivations
$\chi\signals r$, $\xi+Q$, $P+\xi$, $\xi|Q$, $\xi|\chi$, $\chi|\xi$, $P|\xi$, $\xi\signals r$,
$\xi\backslash\RL$, $\xi[f]$ and $A{:}\xi$, where $\xi$ is the derivation of the {\signal} premise,
and $\chi$ of the transition premise of the rule. The derivations resulting from the rules from
\Sec{CCSS} are (named) the same as the ones from \Sec{signals}, but now they are all derivations of
transitions $t \in\Tr$; in particular \plat{$P\sig s$} is now the unique derivation of the transition
\plat{$P\signals s \goto{\bar s} P\signals s$} using the first rule of \tab{CCSS}.
\par}

The derivations obtained by application of the rules of \tab{ABCd} are called
$b{:}{\bf 0}$, $b{:}\alpha.P$, $t+u$, $t|u$ and $A{:}t$, where $t$ and $u$ are the derivations of the
premises of these rules.

\paragraph{Synchrons}

Let $\mathit{Arg} := \{+_\Left, +_\R, |_\Left, |_\R, \backslash \RL, [f], A{:}, \mbox{}\signals r
\mid L\subseteq\Ch \dcup \Sig \wedge f \,\mbox{a relabelling} \wedge A\in\A \wedge r\in\Sig\}$.
A \emph{synchron} is an expression $\sigma(\shar{\alpha}P)$ or $\sigma(P\sig s)$ or $\sigma(b{:})$
with $\sigma\in\mathit{Arg}^*$, $\alpha\in Act$, $s \in\Sig$, $P\in\cT$ and $b\in\B$.
An \emph{argument} $\iota \in \mathit{Arg}$ is applied
componentwise to a set $\Sigma$ of synchrons: $\iota(\Sigma) := \{\iota\varsigma \mid \varsigma \in \Sigma\}$.
The set $\varsigma(P)$ of synchrons of a CCS, ABC or CCSS process $P$ is inductively defined by\vspace{-2ex}
\[\begin{array}{l@{~=~}l@{\hspace{1cm}}l@{~=~}l}
\varsigma({\bf 0}) & \emptyset
&
\varsigma(\alpha.P) & \{(\shar{\alpha}P)\}
\\
\varsigma(P+Q) & +_L\varsigma(P) \dcup +_R\varsigma(Q)
&
\varsigma(P|Q) & |_L\varsigma(P) \dcup |_R\varsigma(Q)
\\
\varsigma(P\backslash \RL) & \backslash \RL \, \varsigma(P)
&
\varsigma(P[f]) & [f] \varsigma(P)
\\
\varsigma(A) & A{:}\varsigma(P) \mbox{~~when~ \plat{$A \stackrel{{\it def}}{=} P$}}
&
\varsigma(P\signals s) & \{(P\sig s)\} \dcup {}\signals s\, \varsigma(P)\;.
\end{array}\]
Thus, a synchron of a process $Q$ can be seen as a path in the parse tree of $Q$ to an unguarded
subexpression $\alpha.P$ or $P\signals s$ of $Q$---except that recursion \plat{$A \stackrel{{\it def}}{=} P$}
gets unfolded in the construction of such a path. Here a subexpression of $Q$ occurs
\emph{unguarded} if it does not lay within a subexpression $\beta.R$ of $Q$.

For ABCd \we amend the clauses for inaction and prefixing:
\[\begin{array}{l@{~=~}l@{\hspace{1cm}}l@{~=~}l}
\varsigma({\bf 0}) & \{(b{:}) \mid b\in \B\}
&
\varsigma(\alpha.P) & \{(b {:}) \mid b\in \B \wedge b? \neq \alpha\} \dcup \{(\shar{\alpha}P)\}
\end{array}\]

The set of synchrons $\varsigma(\chi)$ of a derivation $\chi$ of a transition \plat{$P\goto{\ell}Q$} or formula $P\sigar s$ is
defined by
\[\begin{array}{l@{~=~}l@{\hspace{1cm}}l@{~=~}l@{\hspace{1cm}}l@{~=~}l}
\varsigma(\shar{\alpha}P) & \{(\shar{\alpha}P)\}
&
\varsigma(\chi+Q)         & +_L\varsigma(\chi)
&
\varsigma(P+\chi)         & +_R\varsigma(\chi)
\\
\varsigma(\chi|Q)         & |_L\varsigma(\chi)
&
\varsigma(\chi|\zeta)     & |_L\varsigma(\chi) \dcup |_R\varsigma(\zeta)
&
\varsigma(P|\zeta)        & |_R\varsigma(\zeta)
\\
\varsigma(\chi\backslash \RL) & \backslash \RL\, \varsigma(\chi)
&
\varsigma(\chi[f])        & [f] \varsigma(\chi)
&
\varsigma(A{:}\chi)    & A{:}\varsigma(\chi)
\\
\varsigma(P\sig s)  & \{(P\sig s)\}
&
\varsigma(\chi\signals r) & {}\signals r\, \varsigma(\chi)
\\
\varsigma(b{:}{\bf 0}) & \{(b{:})\}
&
\varsigma(b{:}\alpha.P) & \{(b{:})\}
&
\varsigma(\chi+v)         & +_L\varsigma(\chi) \dcup +_R\varsigma(v)\;.
\end{array}\]
Thus, a synchron of $\chi$ represents a path in the proof-tree $\chi$ from its root to a leaf.
Note that \we use the symbol $\varsigma$ as a variable ranging over synchrons, and as the name of
two functions---context disambiguates.

\begin{lemma}{synchrons}
If $\chi$ is a derivation of \plat{$P\goto{\ell}Q$} or \plat{$P\sigar s$} then $\varsigma(\chi)\subseteq\varsigma(P)$.
\end{lemma}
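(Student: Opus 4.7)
The plan is to prove the lemma by a straightforward structural induction on the derivation $\chi$, with one case for each operational rule appearing in Tables~\ref{tab:CCS}--\ref{tab:ABCd}. The induction hypothesis is exactly the statement of the lemma applied to any strict subderivation.

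For the base cases I would treat the axioms directly. For $\chi = \shar\alpha P$, derived by rule \myref{Act}, I note $\varsigma(\chi) = \{(\shar\alpha P)\} = \varsigma(\alpha.P)$, so the inclusion holds with equality. For $\chi = P\sig s$, derived from the first rule of \tab{CCSssos} (or \tab{CCSS}), I use that $\varsigma(P\signals s)$ is defined as $\{(P\sig s)\} \dcup {}\signals s\,\varsigma(P)$, whose first summand already contains $(P\sig s)$. For the ABCd axioms $b{:}{\bf 0}$ and $b{:}\alpha.P$ (the latter with $\alpha \neq b?$), the modified clauses for $\varsigma({\bf 0})$ and $\varsigma(\alpha.P)$ in ABCd were designed precisely to include $(b{:})$; in the prefixing case the side condition $\alpha \neq b?$ is needed to land in the admissible part of $\varsigma(\alpha.P)$.

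For the inductive cases, the key observation is that every clause defining $\varsigma$ on derivations is obtained from the corresponding clause defining $\varsigma$ on processes by applying the same argument prefix ($+_L, +_R, |_L, |_R, \backslash L, [f], A{:}, {}\signals r$) to the synchron sets of the subderivations. Hence for unary contexts the induction step reduces to: if $\varsigma(\chi') \subseteq \varsigma(P')$ then $\iota\,\varsigma(\chi') \subseteq \iota\,\varsigma(P')$, and the right-hand side is either equal to, or a summand of, $\varsigma$ applied to the composed source. The binary cases (\myref{Sum-l}/\myref{Sum-r}, \myref{Par-l}/\myref{Par-r}, \myref{Comm}, \myref{Bro-l}/\myref{Bro-c}/\myref{Bro-r}, the middle rules of \tab{CCSssos}, and the ABCd rules for $+$ and $|$ on discards) only use each side to provide the $+_L$/$+_R$ or $|_L$/$|_R$ summand it contributes, so the union of inclusions given by the two applications of the induction hypothesis yields the desired inclusion into $\varsigma(P_1 \circ P_2)$.

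The one case that needs a little attention is \myref{Rec}: here $\chi = A{:}\chi'$ with $\chi'$ a derivation whose source is the body $P$ of the defining equation $A \stackrel{{\it def}}{=} P$. The induction hypothesis gives $\varsigma(\chi') \subseteq \varsigma(P)$, and the defining clause $\varsigma(A) = A{:}\varsigma(P)$ closes the argument via $\varsigma(A{:}\chi') = A{:}\varsigma(\chi') \subseteq A{:}\varsigma(P) = \varsigma(A)$. The analogous case for signal emission of an agent identifier, and the ABCd rule for discard transitions of agent identifiers, work the same way. I do not anticipate any genuine obstacle; the bookkeeping is routine once one observes that the synchron function has been defined so as to commute with the argument-prefix operators throughout.
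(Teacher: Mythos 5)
Your proposal is correct and is essentially the paper's own argument: the paper proves the lemma by exactly the structural induction on $\chi$ you describe (it merely calls it ``a trivial structural induction'' without spelling out the cases). The case analysis you give, including the ABCd discard axioms and the \myref{Rec} case, is the routine bookkeeping the paper leaves implicit.
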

\begin{proof}
A trivial structural induction on $\chi$.
\end{proof}
Each transition derivation can be seen as the synchronisation of one or more synchrons.

\begin{example}{concurrent}
The CCS process $P=\left(\big(c.Q + (d.R|e.S)\big)|\bar c.T\right)\backslash c$ has 3 outgoing transitions:
$P \ar{\tau} (Q|T)\backslash c$, $P \ar{d} ((R|e.S)|\bar c.T)\backslash c$ and $P \ar{e} ((d.R|S)|\bar c.T)\backslash c$.
Let $\chi_\tau$, $\chi_d$ and $\chi_e\in\Tr$ be the unique derivations of these transitions.
Then $\chi_\tau$ is a synchronisation of two synchrons, whereas $\chi_d$ and $\chi_e\in\Tr$ have
only one each:
\plat{$\varsigma (\chi_\tau) = \{\backslash c\, {|_L}\, {+_L} (\shar{c}Q) , \backslash c\, {|_R} (\shar{\bar c}T)\}$},
\plat{$\varsigma (\chi_d) = \{\backslash c\, {|_L}\, {+_R}\, {|_L} (\shar{d}R)\}$} and
\plat{$\varsigma (\chi_e) = \{\backslash c\, {|_L}\, {+_R}\, {|_R} (\shar{e}S)\}$}.
The derivations $\chi_d, \chi_e\in\Tr$ can be seen as \emph{concurrent}, because their
synchrons come from opposite sides of the same parallel composition; one would expect that after
one of them occurs, a variant of the other is still possible. Indeed, there is a transition
$((d.R|S)|\bar c.T)\backslash c \ar{d} ((R|S)|\bar c.T)\backslash c$. Let $\chi'_d$ be its unique derivation.
The derivations $\chi_d$ and $\chi'_d$ are surely different, for they have a different source state.
Even their synchrons are different: \plat{$\varsigma (\chi'_d) = \{\backslash c\, {|_L}\, {|_L} (\shar{d}R)\}$}.
Nevertheless, $\chi'_d$ can be recognised as a future variant of $\chi_d$: its only synchron has
merely lost an argument $+_R$. This choice got resolved when taking the transition~$\chi_e$.
\end{example}
\We proceed to formalise the concepts ``future variant'' and ``concurrent'' that occur above, by
defining two binary relations ${\leadsto} \subseteq \Tr^{s\bullet}\times\Tr^{s\bullet}$ and
${\aconc} \subseteq \Tr^{s\bullet}\times\Tr$ such that the following
properties hold:\vspace{-4ex}

\begin{equation}\label{closureA}\begin{minipage}{5.8in}{
  The relation $\leadsto$ is reflexive and transitive.}
\end{minipage}\vspace{-5ex}\end{equation}

\begin{equation}\label{closureB}\begin{minipage}{5.8in}{
  If $t \leadsto t'$ and $t\aconc v$, then $t'\aconc v$.}
\end{minipage}\vspace{-5ex}\end{equation}

\begin{equation}\label{closureC}\begin{minipage}{5.8in}{
  If $\chi \mathbin{\aconc} v$ with $\source(\chi)\mathbin=\source(v)$
  then $\exists \chi'\mathbin\in\Tr^{s\bullet}$ with $\source(\chi')\mathbin=\target(v)$ and $\chi \leadsto \chi'\!$.}
\end{minipage}\vspace{-5ex}\end{equation}

\begin{equation}\label{closureD}\begin{minipage}{5.8in}{
  If $\chi\leadsto \chi'$ then $\ell(\chi')=\ell(\chi)$ and
  when moreover $t\in\Tr^\bullet$ (that is, $\ell(t)\in Act{\setminus}\Rec$)
  then $\chi \naconc \chi'$.}
\end{minipage}\pagebreak\end{equation}

\noindent
Here  $\source(t):=P$ and $\ell(t)=\bar s$ if $t$ is the derivation of a formula $P\sigar{s}$.
With $\chi \aconc v$ \we mean that the possible occurrence of
$\chi$ is unaffected by the occurrence of $v$.
Although for CCS the relation $\aconc$ is symmetric (and $\Tr^{s\bullet}=\Tr$), for ABC and CCSS it is not:

\begin{exampleA}{\cite{GH15a}}{concurrency broadcast}
Let $P$ be the process $b!|(b?+c)$, and let $\chi$ and
$v$ be the derivations of the $b!$- and $c$-transitions of $P$.
The broadcast $b!$ is in our view completely under the control of the left component; it will occur
regardless of whether the right component listens to it or not. It so happens that if $b!$ occurs in
state $P$, the right component will  listen to it, thereby disabling the possible occurrence of $c$.
For this reason \we have $\chi \aconc v$ but $v \naconc\chi$.
\end{exampleA}

\begin{example}{signalling}
Let $P$ be the process $a\signals s | s$, and let $\chi$ and
$v$ be the derivations of the $a$- and $\tau$-transitions of $P$.\linebreak[3]
The occurrence of $t$ disrupts the emission of the signal $s$, thereby disabling the $\tau$-transition.
However, reading the signal does not affect the possible occurrence of $t$.
For this reason, $\chi \aconc v$ but $v \naconc\chi$.
\end{example}

\begin{proposition}{closureSquig}
  Assume (\ref{closureA})--(\ref{closureC}).
  If $\chi\in\Tr^{s\bullet}$ and $\pi$ is a path from $\source(t)$ to $P\in \cT$ such that $\chi \aconc v$ for
  all transitions $v$ occurring in $\pi$, then there is a $t'\in\Tr^{s\bullet}$ such that $\source(t')=P$ and
  $\chi \leadsto \chi'$.
\end{proposition}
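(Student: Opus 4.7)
The plan is to prove the statement by induction on the length of the path $\pi$, where the three closure properties (\ref{closureA})--(\ref{closureC}) each play a distinct role: reflexivity handles the base case, (\ref{closureC}) pushes $\chi$ across a single transition, and (\ref{closureB}) together with transitivity lets me chain the single-step result along $\pi$.

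For the base case, $\pi$ is a single state, so $\source(\chi) = P$ and I can take $\chi' := \chi$; the hypothesis $\chi \leadsto \chi'$ is met by reflexivity of $\leadsto$ from (\ref{closureA}).

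For the inductive step, decompose $\pi$ as $\pi_1 \cdot v \cdot P$, where $\pi_1$ is a path from $\source(\chi)$ to $\source(v)$ and $\target(v) = P$. Since every transition of $\pi_1$ also occurs in $\pi$, the hypothesis $\chi \aconc w$ holds for every transition $w$ in $\pi_1$; the induction hypothesis yields a $\chi_1 \in \Tr^{s\bullet}$ with $\source(\chi_1) = \source(v)$ and $\chi \leadsto \chi_1$. Since $\chi \aconc v$ by assumption and $\chi \leadsto \chi_1$, property (\ref{closureB}) gives $\chi_1 \aconc v$. Now (\ref{closureC}) applies to $\chi_1$ and $v$ (which share a source), producing $\chi' \in \Tr^{s\bullet}$ with $\source(\chi') = \target(v) = P$ and $\chi_1 \leadsto \chi'$. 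Finally, transitivity of $\leadsto$ from (\ref{closureA}) delivers $\chi \leadsto \chi'$.

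There is no real obstacle here: the abstract closure conditions were clearly tailored to make this lifting from a single step to an arbitrary path go through by a routine induction. The only mild subtlety is remembering to invoke (\ref{closureB}) before (\ref{closureC}), since (\ref{closureC}) needs the pushed-forward witness $\chi_1$ (not the original $\chi$) to be unaffected by $v$, and the source of $\chi_1$ must match the source of $v$, which is exactly what the induction hypothesis arranges.
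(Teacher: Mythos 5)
Your proof is correct and follows essentially the same route as the paper: induction on the length of $\pi$, splitting off the last transition, applying the induction hypothesis, then (\ref{closureB}) to transfer $\aconc$ to the pushed-forward derivation, (\ref{closureC}) to cross the final transition, and transitivity of $\leadsto$ from (\ref{closureA}) to conclude.
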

\begin{proof}
  By induction on the length of $\pi$.

  The induction base is trivial, taking $t':=t$, and applying the reflexivity of $\leadsto$.

  So assume $\chi\in\Tr^{s\bullet}$ and $\pi$ is a path from $\source(\chi)$ with as last transition $v'$ with
  $\source(v')=P$ and $\target(v')=Q$, such that $\chi \aconc v$ for all transitions $v$ occurring in $\pi$.
  By induction, there is a $t'\in\Tr^\bullet$ such that $\source(t')=P$ and $\chi \leadsto \chi'$.
  By (\ref{closureB}) $t'\aconc v'$. By (\ref{closureC})
  there is a $t''\in\Tr^{s\bullet}$ such that $\source(t'')=Q$ and $t' \leadsto t''$.
  Now apply the transitivity of $\leadsto$.
\end{proof}

\begin{corollary}{closureSquig}
  Assume (\ref{closureA})--(\ref{closureD}).
  If $\chi\in\Tr^\bullet$ and $\pi$ is a path from $\source(t)$ to $P\in \cT$ such that $\chi \aconc v$ for
  all transitions $v$ occurring in $\pi$, then there is a $t'\in\Tr^\bullet$ such that $\source(t')=P$,
  $\ell(\chi')=\ell(\chi)$ and $\chi \naconc \chi'$.
\end{corollary}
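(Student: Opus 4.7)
The plan is to derive the corollary essentially by invoking \pr{closureSquig} and then applying the extra closure property (\ref{closureD}) to sharpen its conclusion.

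First I would apply \pr{closureSquig} directly. Its hypotheses are exactly those we have: $\chi \in \Tr^\bullet \subseteq \Tr^{s\bullet}$, and $\pi$ is a path from $\source(\chi)$ to $P$ such that $\chi \aconc v$ for every transition $v$ on $\pi$. The proposition then yields some $\chi' \in \Tr^{s\bullet}$ with $\source(\chi') = P$ and $\chi \leadsto \chi'$. This requires only (\ref{closureA})--(\ref{closureC}), which are part of the standing assumptions of the corollary.

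Next I would upgrade $\chi'$ from $\Tr^{s\bullet}$ to $\Tr^\bullet$, and establish the two remaining conclusions using (\ref{closureD}). From $\chi \leadsto \chi'$, property (\ref{closureD}) immediately gives $\ell(\chi') = \ell(\chi)$. Since $\chi \in \Tr^\bullet$ means $\ell(\chi) \in Act \setminus \Rec$, we conclude $\ell(\chi') \in Act \setminus \Rec$ as well, and in particular $\ell(\chi') \in Act$, so $\chi' \in \Tr^\bullet$ (rather than lying in $\Tr^{s\bullet} \setminus \Tr^\bullet$, which would consist of indicator derivations with labels in $\bar\Sig$). The second clause of (\ref{closureD}), which is triggered precisely when $\chi \in \Tr^\bullet$, then yields $\chi \naconc \chi'$.

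I do not anticipate any real obstacle: the corollary is a bookkeeping consequence of combining the proposition with (\ref{closureD}). The only mild subtlety is checking that the upgrade from $\Tr^{s\bullet}$ to $\Tr^\bullet$ goes through, which relies on the fact that labels are preserved under $\leadsto$ and that membership in $\Tr^\bullet$ depends only on the label. Note also that the statement as written should read ``$\chi' \in \Tr^\bullet$'' rather than ``$t' \in \Tr^\bullet$''; I would simply use $\chi'$ consistently in the proof to match the remaining variables in the conclusion.
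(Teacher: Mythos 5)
Your proposal is correct and follows exactly the paper's own argument: invoke Proposition~\ref{pr:closureSquig} to get $\chi'\in\Tr^{s\bullet}$ with $\chi\leadsto\chi'$, then use (\ref{closureD}) for label preservation and $\chi\naconc\chi'$, noting that $\chi\in\Tr^\bullet$ together with $\ell(\chi')=\ell(\chi)$ forces $\chi'\in\Tr^\bullet$. Your remark about the typo ($t'$ versus $\chi'$) is also apt; no gaps.
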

\begin{proof}
Immediately from \pr{closureSquig} and (\ref{closureD}).
Note that $t\in\Tr^\bullet \wedge \ell(t')=\ell(t)$ implies that $t'\in\Tr^\bullet$.
\end{proof}
It follows that the LTS $(\cT,\Tr,\source,\target,\ell)$, augmented with the
concurrency relation $\aconc$ restricted to $\Tr^\bullet \times \Tr$, is an LTSC in the sense of \df{LTSC}.
By (\ref{closureA}) and (\ref{closureD}) $\aconc$ is irreflexive on $\Tr^\bullet$, so property
(\ref{irreflexivity}) holds. That (\ref{closure}) holds is stated by \cor{closureSquig}.

\We now proceed to define the relations $\leadsto$ and $\aconc$ on synchrons, and then lift
them to derivations. Subsequently, \we establish (\ref{closureA})--(\ref{closureD}).

\begin{definition}{dynamic}
The elements $+_L$, $+_R$, $A{:}$ and $\mbox{}\signals r$ of {\it Arg} are called \emph{dynamic}
\cite{Mi80}; the others are \emph{static}.  (Static operators stay around when their arguments perform transitions.)
For $\sigma \in \mathit{Arg}^*$ let $\mathit{static}(\sigma)$ be the result of removing all dynamic elements from $\sigma$.
Moreover, for $\varsigma=\sigma\upsilon$ with \plat{$\upsilon\in\{(\shar{\alpha}P),(P\sig s),(b{:})\}$}
let $\mathit{static}(\varsigma):=\mathit{static}(\sigma)\upsilon$.
\end{definition}

\begin{definition}{possible successor synchron}
A synchron $\varsigma'$ is a \emph{possible successor} of a synchron $\varsigma$, notation
$\varsigma \leadsto \varsigma'$, if either $\varsigma'=\varsigma$, or $\varsigma$ has the form $\sigma_1|_D\varsigma_2$ for some
$\sigma_1\in {\it Arg}^*$, $D\in\{L,R\}$ and synchron $\varsigma_2$, and $\varsigma'=\mathit{static}(\sigma_1)|_D\varsigma_2$.
\end{definition}

\begin{definition}{concurrent synchrons}
Two synchrons $\varsigma$ and $\upsilon$ are \emph{directly concurrent}, notation $\varsigma \smile_d \upsilon$,
if $\varsigma$ has the form $\sigma_1 |_D \varsigma_2$ and $\upsilon= \sigma_1 |_E \upsilon_2$ with $\{D,E\}=\{L,R\}$.
Two synchrons $\varsigma'$ and $\upsilon'$ are \emph{concurrent}, notation $\varsigma'\! \smile \upsilon'\!$, if
$\exists\varsigma,\!\upsilon.\, \varsigma'\!\mathrel{\reflectbox{$\leadsto$}} \varsigma\conc_d \upsilon\leadsto\upsilon'\!$.
\end{definition}

\begin{lemma}{direct concurrency}
  If $\varsigma,\upsilon \in \varsigma(P)$ for some $P\in\cT$ and $\varsigma \smile \upsilon$,
  then $\varsigma \smile_d \upsilon$.
\end{lemma}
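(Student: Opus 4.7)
Unfolding the definition of $\conc$, if $\varsigma\conc\upsilon$ then there exist synchrons
$\varsigma_0=\sigma_1|_D\varsigma_2$ and $\upsilon_0=\sigma_1|_E\upsilon_2$
with $\{D,E\}=\{L,R\}$, such that $\varsigma_0\leadsto\varsigma$ and $\upsilon_0\leadsto\upsilon$.
By \df{possible successor synchron}, $\varsigma\in\{\sigma_1|_D\varsigma_2,\,\mathit{static}(\sigma_1)|_D\varsigma_2\}$ and
$\upsilon\in\{\sigma_1|_E\upsilon_2,\,\mathit{static}(\sigma_1)|_E\upsilon_2\}$.
In the two symmetric cases where both $\leadsto$-steps use the same version of the prefix (both $\sigma_1$, or both $\mathit{static}(\sigma_1)$), the synchrons $\varsigma$ and $\upsilon$ share that prefix and diverge into opposite sides of a parallel composition, so $\varsigma\conc_d\upsilon$ is immediate.

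The delicate case is when the two $\leadsto$-steps disagree, say $\varsigma=\sigma_1|_D\varsigma_2$ and $\upsilon=\mathit{static}(\sigma_1)|_E\upsilon_2$. My plan is to use the hypothesis $\varsigma,\upsilon\in\varsigma(P)$ to force $\sigma_1=\mathit{static}(\sigma_1)$, collapsing this case to the previous one. Assume for contradiction that $\sigma_1$ contains a dynamic operator, and decompose $\sigma_1=\pi\iota\rho$ with $\pi$ consisting solely of static elements of $\mathit{Arg}$ and $\iota$ the leftmost dynamic one. A routine induction on $\pi$, using the clauses of $\varsigma(\cdot)$ for the static constructors $|$, $\backslash L$ and $[f]$, shows that any synchron of $P$ whose prefix begins with $\pi$ corresponds to descending $\pi$ in the parse tree of $P$ to a uniquely determined subexpression $Q$, and its continuation then lies in $\varsigma(Q)$. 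Applied to both $\varsigma$ and $\upsilon$ this yields $\iota\rho|_D\varsigma_2\in\varsigma(Q)$ and $\mathit{static}(\rho)|_E\upsilon_2\in\varsigma(Q)$.

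A case split on $\iota$ now produces the desired contradiction. If $\iota\in\{+_L,+_R\}$ then $Q$ is a sum $Q_1+Q_2$, so every synchron in $\varsigma(Q)$ begins with $+_L$ or $+_R$; but $\mathit{static}(\rho)|_E\upsilon_2$ contains no dynamic symbol in its prefix up to $|_E$, ruling that out. If $\iota=A{:}$ then $Q=A$ and every element of $\varsigma(Q)$ begins with the dynamic symbol $A{:}$, again contradicted. If $\iota={}\signals r$ then $Q=Q'\signals r$ and $\varsigma(Q)=\{Q'\sig r\}\cup{}\signals r\,\varsigma(Q')$; since $\mathit{static}(\rho)|_E\upsilon_2$ ends with $|_E\upsilon_2$ it is not the indicator $Q'\sig r$, so it must begin with $\signals r$, the same contradiction. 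The main obstacle is the auxiliary uniqueness claim about $\pi$ descending to a unique $Q$; once that is in place, the three-way case split on $\iota$ is routine bookkeeping.
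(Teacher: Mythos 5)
Your overall strategy---exploiting that $\varsigma$ and $\upsilon$ are paths in the parse tree of the same $P$, descending along a common static prefix, and deriving a contradiction from the first dynamic symbol---is exactly the idea behind the paper's own proof, and your three-way case split on $\iota$ is sound. But the unfolding step at the start contains a genuine gap. From $\varsigma_0=\sigma_1|_D\varsigma_2$ and $\varsigma_0\leadsto\varsigma$ you conclude $\varsigma\in\{\sigma_1|_D\varsigma_2,\ \mathit{static}(\sigma_1)|_D\varsigma_2\}$. This does not follow from \df{possible successor synchron}: the decomposition used there is existentially quantified, so the $\leadsto$-step may be taken at \emph{any} occurrence of $|_L/|_R$ in $\varsigma_0$, not necessarily at the occurrence of $|_D$ exhibited by the direct-concurrency witness. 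If $\sigma_1=\tau_1|_{D_1}\kappa$ the step may staticise only $\tau_1$, yielding $\varsigma=\mathit{static}(\tau_1)|_{D_1}\kappa|_D\varsigma_2$, whose prefix before the inherited $|_D$ is only partially staticised; and if $\varsigma_2=\kappa|_{D_1}\tau_2$ the step may be taken inside $\varsigma_2$, yielding $\varsigma=\mathit{static}(\sigma_1)|_D\mathit{static}(\kappa)|_{D_1}\tau_2$. (Compare \ex{concurrent}: the synchron $\backslash c\, {|_L}\, {+_R}\, {|_L} (\shar{d}R)$ has $\backslash c\, {|_L}\, {|_L} (\shar{d}R)$ as successor, obtained by splitting at the \emph{inner} $|_L$.) Hence your trichotomy---both prefixes plain, both fully static, or the mixed case---does not exhaust the configurations allowed by $\varsigma\smile\upsilon$, and the written proof covers only a strict subset of them.

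The gap is repairable with the machinery you already set up. In every case the prefix of $\varsigma$ preceding the inherited $|_D$ has the form $\mathit{static}(\alpha_1)\beta_1$, and that of $\upsilon$ preceding $|_E$ the form $\mathit{static}(\alpha_2)\beta_2$, where $\sigma_1=\alpha_1\beta_1=\alpha_2\beta_2$ and each $\alpha_i$ is empty, is all of $\sigma_1$, or ends with a $|$-argument. Writing (w.l.o.g.) $\alpha_2=\alpha_1\gamma$, it suffices to show that $\gamma$ is static, for then the two prefixes coincide and $\varsigma\conc_d\upsilon$ follows; and this is precisely what your descent-plus-case-split establishes, once one notes that the continuation of $\upsilon$ after the common static prefix then begins with a static symbol or with $|_E$ (using that $\gamma$, when nonempty and $\beta_2\neq\varepsilon$, ends with a static $|$-argument, and that $\beta_2=\varepsilon$ when $\alpha_2=\sigma_1$). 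For comparison, the paper avoids enumerating successors: it first argues that w.l.o.g.\ one of the two $\leadsto$-steps is trivial ($\varsigma^\dagger=\varsigma$ or $\upsilon^\dagger=\upsilon$) and then applies the same parse-tree/first-difference argument, ruling out a first difference $+_L$ versus $+_R$. Your stronger implicit claim---that both $\leadsto$-steps must be taken at the exhibited $|_D$ and $|_E$---is what fails.
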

\begin{proof}
  By assumption, there are $\varsigma^\dagger,\upsilon^\dagger$ with
  $\varsigma \mathrel{\reflectbox{$\leadsto$}} \varsigma^\dagger \conc_d \upsilon^\dagger\leadsto\upsilon$.
  W.l.o.g.\ \we choose $\varsigma^\dagger$ and $\upsilon^\dagger$ such that either $\varsigma^\dagger=\varsigma$
  or $\upsilon^\dagger=\upsilon$.
  The synchrons $\varsigma$ and $\upsilon$ describe paths in the parse tree of $P$, so the first
  symbol where they differ must be $\textit{op}_\Left$ versus $\textit{op}_\R$ for
  some binary operator {\it op}.
  The possibility that ${\it op}={+}$ quickly leads to a contradiction, so $\varsigma \smile_d \upsilon$.
\end{proof}

\paragraph{Necessary and active synchrons}
All synchrons of the form \plat{$\sigma(\shar{\alpha}P)$} are \emph{active};
their execution causes a transition \plat{$\alpha.P \mathbin{\goto{\alpha}} P$} in the relevant component of the
represented system. Synchrons $\sigma(P\sig s)$ and $\sigma(b{:})$ are passive; they are not causing any state change.
Let $a\varsigma(t)$ denote the set of active synchrons of a derivation $t$.
It follows that $\Tr^\circ$ (see \Sec{fairness}) is the set of transitions $t\in\Tr$ with $a\varsigma(t)\neq\emptyset$.

Whether a synchron $\varsigma\in\varsigma(t)$ of a transition or {\signal} $t$
is \emph{necessary} for $t$ to occur is defined only for $t\in\Tr^{s\bullet}$.
If $\chi$ is the derivation of a broadcast transition, i.e., $\ell(\chi)=b!$ for some $b\in\B$,
then exactly one synchron $\upsilon\in\varsigma(\chi)$ is of the form \plat{$\sigma(\shar{b!}P)$},
while all the other $\varsigma\in\varsigma(\chi)$ are of the form \plat{$\sigma'(\shar{b?}Q)$} (or
possibly \plat{$\sigma'(b{:})$} in ABCd).
Only the synchron $\upsilon$ is necessary for the broadcast to occur, as a broadcast is unaffected
by whether or not someone listens to it. Hence \we define $n\varsigma(\chi):=\{\upsilon\}$.
For all other $\chi\in\Tr^{s\bullet}$
\we set $n\varsigma(\chi):=\varsigma(\chi)$, thereby declaring all synchrons of the derivation necessary.

\begin{lemma}{synchrons of same derivation}
  If $\chi\in\Tr^{s\bullet}$ and $\varsigma,\upsilon \in n\varsigma(\chi)\dcup a\varsigma(\chi)$ with $\varsigma\neq\upsilon$, then $\varsigma \smile_d \upsilon$.
\end{lemma}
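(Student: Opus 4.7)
The plan is a structural induction on $\chi$, viewing each synchron $\varsigma \in \varsigma(\chi)$ as a root-to-leaf path through the proof tree of $\chi$: the sequence of argument symbols preceding the leaf $(\shar\alpha P)$, $(P\sig s)$ or $(b{:})$ exactly records the nodes traversed on the descent. Two distinct synchrons $\varsigma, \upsilon \in \varsigma(\chi)$ must first diverge at some internal node, and that node must be an application of a binary inference rule, with $\varsigma$ descending into one premise and $\upsilon$ into the other.

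First I would catalogue the binary rules from Tables~\ref{tab:CCS}--\ref{tab:ABCd}: these are \myref{Comm}, \myref{Bro-c} (in both its ABC and ABCd formulations), the two signal-synchronisation rules of CCSS from \Sec{signals}, and the ABCd discard-sum rule $\frac{\E\ar{b:}\E',\ \F\ar{b:}\F'}{\E+\F\ar{b:}\E'+\F'}$. In every case but the last, the two premises contribute synchrons under $|_L$ and $|_R$ respectively, so a fork at such a node yields the shape $\sigma_1 |_L \varsigma_2$ versus $\sigma_1 |_R \upsilon_2$ with common prefix $\sigma_1$ given by the proof-tree path from the root of $\chi$ down to the fork; \df{concurrent synchrons} then gives $\varsigma \smile_d \upsilon$ at once. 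Only the discard-sum rule yields a $+_L/+_R$ fork, which is not of the required shape.

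The key step is therefore to show that no pair of synchrons in $n\varsigma(\chi) \cup a\varsigma(\chi)$ ever diverges at a discard-sum node. The conclusion of that rule is labelled $b{:}$, so the entire sub-derivation rooted there consists of synchrons of the form $\sigma(b{:})$, all of which are passive. When $\chi$ is a broadcast transition (label $b!$), the unique necessary synchron is the broadcaster $\sigma(\shar{b!}P)$, which is active; hence $n\varsigma(\chi) \subseteq a\varsigma(\chi)$ and $n\varsigma(\chi) \cup a\varsigma(\chi) = a\varsigma(\chi)$ contains only active synchrons, none of which can sit beneath a discard-sum fork. When $\chi \in \Tr^{s\bullet}$ is not a broadcast, its label lies outside $\B! \cup \B? \cup \B{:}$; and since $b{:}$-sub-derivations can propagate upward only through \myref{Bro-c} (preserving the broadcast name $b$) or through label-preserving unary rules, such a $\chi$ contains no $b{:}$-sub-derivation at all, and hence no discard-sum application.

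The most delicate part will be this last label-tracking argument, which requires inspecting each rule of Tables~\ref{tab:CCS}--\ref{tab:ABCd} to confirm that a derivation whose label is neither a broadcast, nor a receive, nor a discard cannot secretly contain a $b{:}$-sub-derivation. The Class-I (signal) and Class-II ($\bar s$-transition) instances of the lemma come out for free, since every rule concluding a signal formula $P\sigar{s}$ (Table~\ref{tab:CCSssos}) or a $\bar s$-labelled transition (Table~\ref{tab:CCSS}) is unary in the proof-theoretic sense, so each such derivation has a single synchron and the claim is vacuous.
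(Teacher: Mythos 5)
Your proposal is correct and takes essentially the approach the paper leaves implicit: the paper dismisses this lemma with ``a trivial structural induction on $\chi$'', and your analysis---two distinct synchrons must diverge at a binary rule application, every such fork except the ABCd discard-sum rule is an $|_L$/$|_R$ fork giving $\smile_d$ immediately, and synchrons passing through a discard-sum node end in a $(b{:})$ leaf and hence are neither active nor (when $\ell(\chi)\in\B!$) necessary, while for non-broadcast $\chi\in\Tr^{s\bullet}$ no discard sub-derivation can occur at all---is exactly what that induction unfolds. Two harmless imprecisions: $b{:}$-sub-derivations also propagate upward through relabelling (which is unary but renames $b$) and through the discard-sum rule itself, so your enumeration ``Bro-c or label-preserving unary rules'' is not literally exhaustive, but the invariant you actually need---every rule with a $\B{:}$-labelled premise has its conclusion labelled in $\B!\dcup\B?\dcup\B{:}$---does hold, so the argument stands.
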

\begin{proof}
A trivial structural induction on $\chi$.
\end{proof}

\begin{lemma}{future synchrons not concurrent}
If $\varsigma \leadsto \varsigma'$ and $\varsigma \leadsto \varsigma''$ then $\varsigma' \not\smile \varsigma''$.
Also, if $\varsigma \leadsto \varsigma''$ and $\varsigma' \leadsto \varsigma''$ then $\varsigma \not\smile \varsigma'$.
\end{lemma}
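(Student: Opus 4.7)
The plan is to introduce an invariant on synchrons that is preserved by $\leadsto$ but violated by $\smile_d$; the lemma then falls out of \df{concurrent synchrons} without any induction. For a synchron $\varsigma = o_1 o_2 \cdots o_n \upsilon$, where $o_i \in \mathit{Arg}$ and $\upsilon$ is a leaf of the form $(\shar{\alpha}P)$, $(P\sig s)$ or $(b{:})$, define the \emph{static skeleton} $\mathit{sk}(\varsigma)$ to be the sequence obtained from $o_1 \cdots o_n$ by deleting every dynamic operator ($+_L,\ +_R,\ A{:},\ \signals r$), followed by $\upsilon$.

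The first step is to verify that $\leadsto$ preserves the static skeleton, i.e.\ $\varsigma \leadsto \varsigma'$ implies $\mathit{sk}(\varsigma) = \mathit{sk}(\varsigma')$. If $\varsigma' = \varsigma$ this is immediate; otherwise \df{possible successor synchron} gives $\varsigma = \sigma_1 |_D \varsigma_2$ and $\varsigma' = \mathit{static}(\sigma_1) |_D \varsigma_2$, and the only change from $\varsigma$ to $\varsigma'$ is the deletion of dynamic operators in $\sigma_1$, which $\mathit{sk}$ discards in both cases.

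The second step is to show that directly concurrent synchrons have distinct static skeletons. If $\varsigma \smile_d \upsilon$ then $\varsigma = \sigma |_D \varsigma_2$ and $\upsilon = \sigma |_E \upsilon_2$ with $\{D,E\}=\{L,R\}$. Both skeletons begin with the common prefix $\mathit{static}(\sigma)$, but at position $|\mathit{static}(\sigma)|+1$ the symbol of $\mathit{sk}(\varsigma)$ is $|_D$ whereas that of $\mathit{sk}(\upsilon)$ is $|_E$, so $\mathit{sk}(\varsigma) \neq \mathit{sk}(\upsilon)$.

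Combining these, suppose $\varsigma \leadsto \varsigma'$ and $\varsigma \leadsto \varsigma''$, and assume for contradiction $\varsigma' \smile \varsigma''$. By \df{concurrent synchrons} there exist $\tau, \tau'$ with $\tau \leadsto \varsigma'$, $\tau' \leadsto \varsigma''$ and $\tau \smile_d \tau'$. Repeated application of the first step yields $\mathit{sk}(\tau) = \mathit{sk}(\varsigma') = \mathit{sk}(\varsigma) = \mathit{sk}(\varsigma'') = \mathit{sk}(\tau')$, contradicting the second step. The symmetric case is identical: $\varsigma \leadsto \varsigma''$ and $\varsigma' \leadsto \varsigma''$ force $\mathit{sk}(\varsigma) = \mathit{sk}(\varsigma'') = \mathit{sk}(\varsigma')$, and any witness of $\varsigma \smile \varsigma'$ would again yield a forbidden equality. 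There is no real obstacle — the only thing to be careful about is naming the invariant; once $\mathit{sk}$ is in place the argument is a one-line combinatorial check.
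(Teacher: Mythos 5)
Your proof is correct and follows essentially the same route as the paper: your ``static skeleton'' $\mathit{sk}$ is exactly the paper's $\mathit{static}$ function on synchrons, and the paper likewise argues that $\leadsto$ preserves $\mathit{static}$ while directly concurrent synchrons cannot have equal static parts (phrased there via preservation of $\smile_d$ under $\mathit{static}$ plus irreflexivity of $\smile_d$).
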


\begin{proof}
Note that $\varsigma \leadsto \varsigma'$ implies $\mathit{static}(\varsigma)=\mathit{static}(\varsigma')$,
and $\varsigma \conc_d \upsilon$ implies  $\mathit{static}(\varsigma) \conc_d \mathit{static}(\upsilon)$.
Hence $\varsigma \conc \upsilon$ implies  $\mathit{static}(\varsigma) \conc_d \mathit{static}(\upsilon)$.
Moreover, $\smile_d$ (and hence $\smile$) is irreflexive: $\varsigma \not\smile_d \varsigma$.

Suppose $\varsigma \leadsto \varsigma'$ and $\varsigma \leadsto \varsigma''$. Then
$\mathit{static}(\varsigma') = \mathit{static}(\varsigma) = \mathit{static}(\varsigma'')$.
So $\mathit{static}(\varsigma') \not\smile_d\mathit{static}(\varsigma'')$, and hence $\varsigma' \not\smile \varsigma''$.
The other statement follows in the same way.
\end{proof}

\begin{definition}{possible successor}
A derivation $\chi'\in\Tr^{s\bullet}$ is a \emph{possible successor} of a derivation $\chi\in\Tr^{s\bullet}$, notation
$\chi \leadsto \chi'$, if $\chi$ and $\chi'$ have equally many necessary synchrons and
each necessary synchron of $\chi'$ is a possible successor of one of $\chi$; i.e., if
$|n\varsigma(\chi)| = |n\varsigma(\chi')|$ and
$\forall \varsigma'\mathbin\in n\varsigma(\chi').\,\exists \varsigma\mathbin\in n\varsigma(\chi).\,
 \varsigma \leadsto \varsigma'$.
\end{definition}
By Lemmas~\ref{lem:synchrons of same derivation} and~\ref{lem:future synchrons not concurrent}
this implies that the relation $\leadsto$ between the necessary synchrons of $\chi$ and $\chi'$
is a bijection.

\begin{definition}{noninterfering derivations}
Derivation $\chi\mathbin\in\Tr^{s\bullet}$ is \emph{unaffected by} $\zeta\mathbin\in\Tr$, notation $\chi \aconc \zeta$,
if $\forall \varsigma\mathbin\in n\varsigma(\chi).\, \forall\upsilon\mathbin\in a\varsigma(\zeta).\, \varsigma \smile \upsilon$.
\end{definition}
So $\chi$ is unaffected by $\zeta$ if no active synchron of $\zeta$ interferes with a necessary synchron of $\chi$.
Passive synchrons do not interfere at all.

In \ex{concurrent} one has $\chi_d \smile \chi_e$, $\chi_d \leadsto \chi'_d$ and $\chi'_d \smile \chi_e$.
Here $t \conc u$ denotes $t \aconc u \wedge u\aconc t$.

\begin{proposition}{preorder}
  The relation $\leadsto$ on $\Tr^{s\bullet}$ is reflexive and transitive.
  On $\Tr^\bullet \times\Tr^{\bullet}$ it is disjoint with $\aconc$.
\end{proposition}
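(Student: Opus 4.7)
My plan is to handle the three claims by first arguing at the level of individual synchrons and then lifting to derivations through the bijection on necessary synchrons furnished by Definition~\ref{df:possible successor}. Reflexivity of $\leadsto$ on synchrons is immediate from the clause $\varsigma' = \varsigma$ of Definition~\ref{df:possible successor synchron}; the identity bijection on $n\varsigma(\chi)$ then witnesses $\chi \leadsto \chi$.

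For transitivity, given $\varsigma \leadsto \varsigma' \leadsto \varsigma''$ I would case-split on whether each step is the trivial identity or a genuine transformation $\sigma_1|_D\varsigma_2 \mapsto \mathit{static}(\sigma_1)|_D\varsigma_2$. The only nontrivial case is that both steps are genuine; I then examine where the split $|_{D''}$ underlying $\varsigma' \leadsto \varsigma''$ sits inside $\varsigma'$. If it coincides with the $|_D$ already used, idempotence of $\mathit{static}$ gives $\varsigma'' = \varsigma'$. Otherwise the split lies strictly within $\varsigma_2$, so $\varsigma = \sigma_1|_D\sigma_a|_{D''}\varsigma_2''$, and using that $|_D$ is static and that $\mathit{static}$ distributes across concatenation one gets $\varsigma'' = \mathit{static}(\sigma_1|_D\sigma_a)|_{D''}\varsigma_2''$, matching the definition of $\varsigma \leadsto \varsigma''$. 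Transitivity on derivations then follows by composing the two bijections on necessary synchrons, invoking transitivity at the synchron level to check the resulting map respects $\leadsto$.

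The substantive claim is disjointness on $\Tr^\bullet \times \Tr^\bullet$. Assume $\chi, \chi' \in \Tr^\bullet$ with $\chi \leadsto \chi'$. The key step is to exhibit a necessary synchron of $\chi$ that is also active: if $\ell(\chi) \in \B!$ then $n\varsigma(\chi)$ is the singleton containing the broadcasting synchron $\sigma(\shar{b!}P)$, which is active by definition; otherwise $n\varsigma(\chi) = \varsigma(\chi)$ and at least one of these synchrons must be active, since a derivation built entirely from passive synchrons $\sigma(P\sig s)$ or $\sigma(b{:})$ would carry a label in $\bar\Sig \cup \B{:}$, which is disjoint from $Act\setminus\Rec$. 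Choose such $\varsigma \in n\varsigma(\chi)\cap a\varsigma(\chi)$; the bijection from Definition~\ref{df:possible successor} yields $\varsigma' \in n\varsigma(\chi')$ with $\varsigma \leadsto \varsigma'$, and since $\leadsto$ leaves the trailing atom $(\shar{\alpha}P)$ unchanged, $\varsigma'$ is active too. Applying Lemma~\ref{lem:future synchrons not concurrent} to $\varsigma \leadsto \varsigma'$ together with the reflexive $\varsigma' \leadsto \varsigma'$ gives $\varsigma \not\smile \varsigma'$, whence $\chi \naconc \chi'$ by Definition~\ref{df:noninterfering derivations}. The one mildly delicate point, and the main obstacle, is the existence of an active necessary synchron for $\chi \in \Tr^\bullet$, which requires the observation above about which labels passive-only derivations can carry.
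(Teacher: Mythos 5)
Your proof is correct and follows essentially the same route as the paper's: reflexivity, transitivity and disjointness with $\smile$ are established at the level of synchrons and then lifted to derivations using the fact that every $u\in\Tr^\bullet$ has a necessary synchron that is also active, which is exactly the fact the paper's proof invokes (without proof) and which you justify soundly. The only differences are matters of detail---you apply that fact to the source derivation and push forward along the bijection on necessary synchrons, whereas one can apply it to the target and use the definition of $\leadsto$ on derivations directly---and these do not change the substance of the argument.
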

\begin{proof}
  The relation $\leadsto$ on synchrons is reflexive and transitive by definition.
  That it is disjoint with $\smile$ follows as in the proof of \lem{future synchrons not concurrent}.
  The lifting of these properties to derivations follows directly from the definitions, using that
  $n\varsigma(u)\cap a\varsigma(u)\neq\emptyset$ for all $u\in\Tr^\bullet$.
\end{proof}

\begin{proposition}{inherited}
If $t \leadsto t'$ and $t\aconc v$, then $t'\aconc v$.
\end{proposition}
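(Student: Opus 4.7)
The plan is to unwind the definitions and reduce the statement to transitivity of $\leadsto$ on synchrons, which was established in \pr{preorder}.

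Suppose $t \leadsto t'$ and $t \aconc v$. To show $t' \aconc v$, pick an arbitrary $\varsigma' \in n\varsigma(t')$ and an arbitrary $\upsilon \in a\varsigma(v)$; the goal is to produce $\varsigma' \smile \upsilon$. By \df{possible successor} applied to $t \leadsto t'$, there is a necessary synchron $\varsigma \in n\varsigma(t)$ with $\varsigma \leadsto \varsigma'$. By $t \aconc v$ we have $\varsigma \smile \upsilon$, so by \df{concurrent synchrons} there exist synchrons $\varsigma^\dagger, \upsilon^\dagger$ with $\varsigma^\dagger \leadsto \varsigma$, $\varsigma^\dagger \smile_d \upsilon^\dagger$ and $\upsilon^\dagger \leadsto \upsilon$.

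Now $\varsigma^\dagger \leadsto \varsigma \leadsto \varsigma'$, so by transitivity of $\leadsto$ on synchrons (\pr{preorder}) we get $\varsigma^\dagger \leadsto \varsigma'$. Combining this with $\varsigma^\dagger \smile_d \upsilon^\dagger \leadsto \upsilon$ gives, by \df{concurrent synchrons}, exactly $\varsigma' \smile \upsilon$. Since $\varsigma' \in n\varsigma(t')$ and $\upsilon \in a\varsigma(v)$ were arbitrary, \df{noninterfering derivations} yields $t' \aconc v$.

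There is no real obstacle here: the statement is essentially the lifting to derivations of the fact that the concurrency relation on synchrons is monotone along $\leadsto$, which in turn follows because $\smile$ is defined as $\mathrel{\reflectbox{$\leadsto$}} \smile_d \leadsto$ and $\leadsto$ is transitive. The only point requiring a moment of care is making sure we use the correct direction of $\leadsto$ between $\varsigma$ and $\varsigma'$ in the definition of $\leadsto$ on derivations (the necessary synchron of the \emph{successor} is a successor of a necessary synchron of the predecessor), which is precisely what is needed to chain with the witness $\varsigma^\dagger$ on the left of $\smile_d$.
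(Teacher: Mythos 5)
Your proof is correct and is essentially identical to the paper's own argument: fix $\varsigma'\in n\varsigma(t')$ and $\upsilon\in a\varsigma(v)$, pull back to some $\varsigma\in n\varsigma(t)$ with $\varsigma\leadsto\varsigma'$, unfold $\varsigma\smile\upsilon$ into witnesses $\varsigma^\dagger\conc_d\upsilon^\dagger$, and use transitivity of $\leadsto$ on synchrons to conclude $\varsigma'\smile\upsilon$. The only cosmetic difference is that the paper invokes transitivity of $\leadsto$ on synchrons directly (as noted in the proof of \pr{preorder}) rather than via the statement of \pr{preorder}, which is phrased for derivations.
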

\begin{proof}
Let $\varsigma'\in n\varsigma(t')$ and $\upsilon\in a\varsigma(v)$.
\We have to show that $\varsigma' \conc \upsilon$.
By assumption $\exists \varsigma\mathbin\in n\varsigma(\chi).\, \varsigma \leadsto \varsigma'$.
Furthermore, $\varsigma \conc \upsilon$, since $t\aconc v$. So
$\exists\varsigma^\dagger,\upsilon^\dagger.\, \varsigma\mathrel{\reflectbox{$\leadsto$}} \varsigma^\dagger\conc_d \upsilon^\dagger\leadsto\upsilon$.
By the transitivity of $\leadsto$ this entails
$\varsigma'\mathrel{\reflectbox{$\leadsto$}} \varsigma^\dagger\conc_d \upsilon^\dagger\leadsto\upsilon$,
so $\varsigma' \conc \upsilon$.
\end{proof}

\begin{proposition}{inherited label}
  If $\chi\leadsto \chi'$ then $\ell(\chi')=\ell(\chi)$.
\end{proposition}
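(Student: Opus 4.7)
The plan is to observe that each synchron's contribution to the label of its derivation is preserved under $\leadsto$ at the synchron level, and then to note that $\ell(\chi)$ is itself determined by the necessary synchrons of $\chi$.

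First, I verify preservation at the synchron level. By \df{possible successor synchron}, if $\varsigma \leadsto \varsigma'$ then either $\varsigma' = \varsigma$, or $\varsigma = \sigma_1 |_D \varsigma_2$ and $\varsigma' = \mathit{static}(\sigma_1) |_D \varsigma_2$. In both cases the innermost leaf and the trailing $|_D \varsigma_2$ are unchanged, and $\mathit{static}$ only strips the dynamic operators $+_\Left$, $+_\R$, $A{:}$, and $\mbox{}\signals r$. None of these dynamic operators transform labels in the rules of Tables~\ref{tab:CCS}--\ref{tab:ABCd}, while every relabelling $[f]$ in $\sigma_1$ survives in $\mathit{static}(\sigma_1)$. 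Hence the \emph{effective label} of $\varsigma$---the base label $\alpha$, $\bar s$, or $b{:}$ of its leaf, relabelled by the $[f]$'s occurring in its prefix---is invariant under $\leadsto$.

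Second, a routine structural induction on $\chi \in \Tr^{s\bullet}$ shows that $\ell(\chi)$ is recoverable from the effective labels of the elements of $n\varsigma(\chi)$ together with their alignment in the parse tree. If $|n\varsigma(\chi)| = 1$, the unique necessary synchron has effective label equal to $\ell(\chi)$. If $|n\varsigma(\chi)| \geq 2$, then $\chi$ is either a handshake \myref{Comm}-derivation, whose two necessary synchrons carry complementary effective labels $c$, $\bar c$ and yield $\ell(\chi) = \tau$, or a broadcast receive $b?$ (respectively a discard $b{:}$ in ABCd) arising from nested \myref{Bro-c} applications, whose necessary synchrons all share the effective label $b?$ (resp.\ $b{:}$) according to the $\circ$-table of \tab{ABC} or \tab{ABCd}.

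To conclude, by \df{possible successor} there is a bijection between $n\varsigma(\chi)$ and $n\varsigma(\chi')$ via $\leadsto$ on synchrons. Paired synchrons share effective labels by the first step, and the $|_D$ operators that align two necessary synchrons at a common parallel composition are preserved in their $\mathit{static}$-trimmed prefixes. Hence $\chi'$ combines the same effective labels by the same synchronisation pattern as $\chi$, giving $\ell(\chi') = \ell(\chi)$. I expect the bookkeeping in the structural induction to be the tedious part---especially the broadcast case in ABC and ABCd---but conceptually the proof reduces to the fact that $\mathit{static}$ preserves all label-affecting operators in the prefix of a synchron.
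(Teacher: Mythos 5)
Your proof is correct, but it takes a genuinely different route from the paper's. The paper argues directly on derivation trees: from $\chi\leadsto\chi'$ it reads off that $\chi'$ is obtainable from $\chi$ by pruning dynamic-operator nodes (reducing subterms $\zeta{+}Q$, $P{+}\zeta$, $A{:}\zeta$, $\zeta\signals r$ to $\zeta$) and possibly exchanging receptive or discarding broadcast partners, and then observes that none of these surgeries affects the label derived. You instead stay entirely at the level of synchrons: the successor relation on synchrons preserves your ``effective label'' (the leaf together with the relabellings $[f]$ in its prefix, since $\mathit{static}$ deletes only dynamic arguments), the label of a derivation in $\Tr^{s\bullet}$ is a function of the effective labels of its necessary synchrons, and the bijection between $n\varsigma(\chi)$ and $n\varsigma(\chi')$ then transfers the label. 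What this buys you is independence from the structural claim that $\chi'$ really is a pruned copy of $\chi$, which the paper asserts without proof (it is implicit in the proof of \lem{consistency}); what it costs is the label-reconstruction lemma, whose structural induction you only sketch. Two small points on that sketch: derivations labelled $b?$ or $b{:}$ lie outside $\Tr^{s\bullet}$, so your multi-synchron broadcast-receive case is vacuous for this proposition --- for $\chi\in\Tr^{s\bullet}$ the necessary-synchron set has at most two elements, either a lone synchron whose effective label is $\ell(\chi)$, or a complementary pair (handshake $c/\bar c$ or signal read/emission $s/\bar s$) yielding $\tau$; and in ABCd a $b?$-derivation may mix $b?$- and $b{:}$-leaves, so ``all share the effective label $b?$'' is not quite accurate, though harmless here.
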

\begin{proof}
If $\chi\leadsto \chi'$ then the derivation $\chi'$ must be obtainable from $\chi$ by reducing some
subterms of the form $\zeta+Q$, $P+\zeta$, $A{:}\zeta$ or $\zeta\signals r$ to $\zeta$, and/or
changing some receptive or discarding partners in a broadcast communication.
Given rules \myref{Sum-l}, \myref{Sum-r}, \myref{Rec}, etc., this does not alter the label of this derivation.
\end{proof}
Propositions~\ref{pr:preorder}--\ref{pr:inherited label} establish the required properties
(\ref{closureA},\ref{closureB},\ref{closureD}).
It remains to establish (\ref{closureC})---see \pr{closure PA}.


\begin{definition}{consistency}
A set $\Sigma \mathbin\subseteq \varsigma(P)$ of synchrons of $P$ is \emph{$P$-consistent} if there
is a derivation $\chi\in\Tr^{s\bullet}$ with $n\varsigma(\chi)=\Sigma$ and $\source(\chi)=P$.
\end{definition}

\begin{lemma}{consistency}
  Let $P,Q\in \cT$ be processes, $\Sigma\subseteq\varsigma(P)$ and
  $\Sigma'\subseteq\varsigma(Q)$ such that $\Sigma$ is $P$-consistent, $|\Sigma'| = |\Sigma|$ and
  $\forall \varsigma\mathbin\in\Sigma.\,\exists \varsigma'\mathbin\in\Sigma'.\,\varsigma \mathbin\leadsto \varsigma'\!$.
  Then $\Sigma'$ is $Q$-consistent.
\end{lemma}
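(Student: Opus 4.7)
The plan is to proceed by structural induction on the derivation $\chi$ witnessing the $P$-consistency of $\Sigma$ (so $\source(\chi)=P$ and $n\varsigma(\chi)=\Sigma$), constructing in each case a corresponding derivation $\chi'$ with $\source(\chi')=Q$ and $n\varsigma(\chi')=\Sigma'$. The underlying intuition is that $\chi'$ should share the same ``parallel skeleton'' as $\chi$—i.e., the same pattern of how necessary synchrons are combined via \myref{Par-l}, \myref{Par-r}, \myref{Comm}, \myref{Bro-l}, \myref{Bro-r}, \myref{Bro-c}—and differ only in the dynamic context above each $|_D$ operator on the path to a leaf. The relation $\leadsto$ on synchrons captures precisely this: it preserves everything after the first $|_D$ and allows the dynamic prefix above it to be simplified via $\mathit{static}(\cdot)$.

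In the base case $\chi=\shar{\alpha}R$, $\Sigma=\{(\shar{\alpha}R)\}$ and the unique synchron has no $|_D$ in its prefix, so its only possible successor is itself. Hence $\Sigma'=\{(\shar{\alpha}R)\}$, which forces $Q=\alpha.R=P$, and we take $\chi':=\chi$. For the cases of dynamic unary context ($\chi=\chi_0+T$, $\chi=T+\chi_0$, $\chi=A{:}\chi_0$, $\chi=\chi_0\signals r$) all synchrons in $\Sigma$ share a common leading dynamic symbol. We split by cases on whether the corresponding successors in $\Sigma'$ still carry that symbol: if so, $Q$ has the matching outer operator and we apply the induction hypothesis after stripping the prefix from both sides; if not, the synchron must have the form $\sigma_1|_D\varsigma_2$ and we defer to the parallel case below. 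The static unary cases (restriction, relabelling) are routine since the prefix is preserved by $\leadsto$ and forces the same outer operator on $Q$.

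The essential work is in the parallel-composition cases. For $\chi=\chi_1|\chi_2$ (and the variants $\chi|T$, $T|\chi$) we partition $\Sigma$ into $\Sigma_L=|_L\,n\varsigma(\chi_1)$ and $\Sigma_R=|_R\,n\varsigma(\chi_2)$. Because $\leadsto$ preserves the first $|_D$, every $\varsigma'\in\Sigma'$ still begins with $|_L$ or $|_R$; combined with $\Sigma'\subseteq\varsigma(Q)$ this forces $Q=Q_1|Q_2$. Stripping the leading $|_D$ partitions $\Sigma'$ into $\Sigma'_L\subseteq\varsigma(Q_1)$ and $\Sigma'_R\subseteq\varsigma(Q_2)$, and the $\leadsto$-correspondence descends to the stripped sets. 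Using the bijectivity of $\leadsto$ between $n\varsigma(\chi)$ and any possible successor's necessary synchrons (noted after \df{possible successor}), the cardinality hypothesis passes down to each side. Two applications of the induction hypothesis yield derivations $\chi'_1,\chi'_2$ with $\source(\chi'_i)=Q_i$, which we recombine using the appropriate parallel rule.

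The main obstacle is verifying, in the broadcast setting, that the recombined rule actually applies at $Q$. Here \pr{inherited label} guarantees that $\chi'_1,\chi'_2$ inherit the labels of $\chi_1,\chi_2$, so the side-condition of \myref{Bro-c} (matching $\sharp_1\circ\sharp_2$) is preserved. In ABC, the negative premises of \myref{Bro-l}/\myref{Bro-r} must be checked on the silent side of $Q$: since the non-sending side contributed no necessary synchron to $\Sigma$ and therefore none to $\Sigma'$, the absence of a matching receive transition on that side of $Q$ follows from the shape of $\varsigma(Q_i)$. In the ABCd presentation this is smoother: the discard synchrons in $\Sigma$ transfer through $\leadsto$ to discard synchrons in $\Sigma'$, and \lem{discards} supplies the required $b{:}$-transitions automatically. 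The recursion case $\chi=A{:}\chi_0$ reduces to the previous ones after unfolding, completing the induction.
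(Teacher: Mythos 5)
Your overall strategy---making the paper's ``delete the dynamic nodes marked by the successor synchrons'' surgery precise as a structural induction on the witnessing derivation $\chi$---is sound, and for the CCS and CCSS constructs (prefixing, choice, recursion, signalling, restriction, relabelling, \myref{Par-l}/\myref{Par-r}/\myref{Comm}) your case analysis is essentially the paper's argument. The genuine problem is in the broadcast cases, where your justification is false as stated. You claim that, because the non-sending side contributes no necessary synchron to $\Sigma$ or $\Sigma'$, the absence of a $b?$-transition on that side of $Q$ ``follows from the shape of $\varsigma(Q_i)$'', so that \myref{Bro-l}/\myref{Bro-r} can be re-applied at $Q$. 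Nothing in the hypotheses constrains the silent component of $Q$: take $P=b!\,|\,c.{\bf 0}$ with $\chi$ the \myref{Bro-l}-derivation of $P\goto{b!}{\bf 0}\,|\,c.{\bf 0}$, so $\Sigma=\{|_\Left(\shar{b!}{\bf 0})\}$, and take $Q=b!\,|\,b?$ with $\Sigma'=\Sigma$; all hypotheses of the lemma hold, yet the right component of $Q$ does admit $b?$, so \myref{Bro-l} is not applicable at $Q$ and the witnessing derivation must instead use \myref{Bro-c} with a freshly chosen $b?$-derivation of the right component. The converse switch is also needed: if $\chi$ was derived with \myref{Bro-c}, the receiving side carries no necessary synchrons, so your ``two applications of the induction hypothesis'' cannot even be set up there (it would require $\emptyset$ to be consistent, which it never is, as every derivation in $\Tr^{s\bullet}$ has a nonempty set of necessary synchrons), and the corresponding component of $Q$ may be unable to receive $b$, forcing a switch to \myref{Bro-l} (or to a discard in ABCd).

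The missing idea, which the paper flags with ``some receptive or discarding partners in broadcast communications may have been altered between $t$ and $t'$'', is that the receive/discard partners of a $b!$-labelled derivation are not reconstructed from $\chi$ at all: apply the induction hypothesis only to the sub-derivations that carry necessary synchrons, and re-synthesize the partner on the other side of $|$ from scratch, according to whether that component of $Q$ admits $b?$ or not (using \myref{Bro-c} with any $b?$-derivation in the first case, and \myref{Bro-l}/\myref{Bro-r}, or \lem{discards} in ABCd, in the second). This is harmless for the conclusion because receive and discard synchrons of a $b!$-labelled derivation are never necessary, so the resulting $\chi'$ still satisfies $n\varsigma(\chi')=\Sigma'$. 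With the broadcast cases repaired in this way (and the corresponding correction of your ABCd remark, since discard synchrons never occur in $\Sigma$ to begin with), your induction goes through and coincides with the paper's proof.
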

\begin{proof}
Let $t\in\Tr^{s\bullet}$ be such that $n \varsigma(\chi)\mathbin=\Sigma$ and $\source(\chi)=P$. So $\Sigma\neq\emptyset$.
Each synchron $\varsigma\in \Sigma$ represents a path in the derivation $\chi$ from its root to a leaf.
If $\varsigma \leadsto \varsigma'$ then $\varsigma'$ represents a version of the same path, but in
which certain nodes of $\chi$, labelled $+Q'$, $P'+$, $A{:}$ or ${}\signals r$, are marked as being deleted by $\varsigma'$.
Since $\Sigma'\subseteq\varsigma(Q)$, this marking of deleted nodes is consistent, in the the sense
that no node of $t$ is deleted according to one element of $\Sigma'$, but kept according to another.
In fact, whether a node of $t$ is marked as deleted depends entirely on the syntactic shape of $Q$.
In view of the rules \myref{Sum-l}, \myref{Sum-r}, \myref{Rec}, etc., actually deleting the indicated
nodes from the derivation $t$ yields another derivation $t'$, with $n\varsigma(t')=\Sigma'$ and $\source(t')=Q$.
Depending on the syntactic shapes of $P$ and $Q$, some receptive or discarding partners in broadcast
communications may have been altered between $t$ and $t'$ as well.
\end{proof}

\noindent
Write $\varsigma\aconc_d \zeta$ for $\varsigma$ a synchron and $\zeta \in \Tr$ if
$\varsigma \conc_d \upsilon$ for all $\upsilon \in a\varsigma(\zeta)$.

\begin{definition}{after}
  Let $\varsigma,\upsilon$ be synchrons with $\varsigma \smile_d \upsilon$, i.e.,
  $\varsigma\mathbin=\sigma |_D \varsigma'$ and $\upsilon \mathbin= \sigma |_E \upsilon'$
  for some $\sigma \in {\it Arg}^*$, synchrons $\varsigma',\upsilon'$ and $\{D,E\}=\{L,R\}$.
  Define $\varsigma \MVAt \upsilon$, where $\MVAt$ is pronounced ``after'', to be
  $\mathit{static}(\sigma)|_D\varsigma'$.

  For $\zeta\in\Tr^\circ$ with $\varsigma \aconc_d \zeta$ let
  $\varsigma\MVAt\zeta := \varsigma\MVAt\upsilon$ for the $\upsilon \in a\varsigma(\zeta)$
  that is ``closest'' to $\varsigma$, in the sense that it has the largest prefix in common with it.
  For $\zeta \in\Tr\setminus\Tr^\circ$ let $\varsigma\MVAt\zeta := \varsigma$.
\end{definition}
In \ex{concurrent} $\backslash c\, {|_L}\, {+_R}\, {|_L} (\shar{d}R) \MVAt \chi_e = \backslash c\, {|_L}\, {|_L} (\shar{d}R)$.

\begin{observation}{after}
  Let $\zeta\in\Tr$, $P=\source(\zeta)$ and $Q= \target(\zeta)$.
  If $\varsigma \in\varsigma(P)$ with $\varsigma \aconc_d \zeta$
  then $\varsigma \MVAt \zeta \in \varsigma(Q)$.
\end{observation}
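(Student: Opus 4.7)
The plan is to split on whether $\zeta \in \Tr^\circ$ and proceed by structural induction on the derivation $\zeta$ in the nontrivial case. If $\zeta \in \Tr \setminus \Tr^\circ$, then $\ell(\zeta) \in \bar\Sig \dcup \B{:}$, so $\zeta$ is an {\signal} transition (in the style of \Sec{CCSS}) or a discard transition (\Sec{ABCd}). Direct inspection of the rules in \tab{CCSS} and \tab{ABCd} shows that each such rule produces a derivation whose source and target states are identical, hence $P = Q$; and by definition $\varsigma \MVAt \zeta = \varsigma \in \varsigma(P) = \varsigma(Q)$.

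For $\zeta \in \Tr^\circ$, let $\upsilon$ be the active synchron of $\zeta$ closest to $\varsigma$; the hypothesis $\varsigma \aconc_d \zeta$ gives $\varsigma \smile_d \upsilon$, yielding the common-prefix decomposition $\varsigma = \sigma\,|_D\,\varsigma'$ and $\upsilon = \sigma\,|_E\,\upsilon'$, so $\varsigma \MVAt \zeta = \mathit{static}(\sigma)\,|_D\,\varsigma'$. I would prove this lies in $\varsigma(Q)$ by structural induction on $\zeta$. The base case \myref{Act} is vacuous: the unique active synchron $(\shar\alpha P')$ contains no $|$-symbol, so no $\varsigma \in \varsigma(\alpha.P')$ can satisfy $\varsigma \aconc_d \zeta$. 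In each inductive case, $\varsigma \aconc_d \zeta$ forces the outermost argument of $\varsigma$ to align with that of the active synchrons of $\zeta$. For rules introducing a dynamic outer argument $\iota$---namely \myref{Sum-l}, \myref{Sum-r}, \myref{Rec}, and the $\signals r$-wrapping rule---we have $\varsigma = \iota\varsigma''$ with $\varsigma'' \in \varsigma(\source(\chi))$ for the sub-derivation $\chi$, the inductive hypothesis on $\chi$ yields $\varsigma'' \MVAt \chi \in \varsigma(\target(\chi)) = \varsigma(Q)$, and $\varsigma \MVAt \zeta = \varsigma'' \MVAt \chi$ because $\mathit{static}$ drops $\iota$. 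Rules \myref{Res} and \myref{Rel} work analogously with $\iota$ static, preserved by $\mathit{static}$ and matching the outer operator retained in $Q$. For \myref{Par-l}, \myref{Par-r}, \myref{Comm}, the broadcast rules \myref{Bro-l}, \myref{Bro-r}, \myref{Bro-c}, and the CCSS rules synchronising a signal emission with a read, $\varsigma$ may start with $|_L$ or $|_R$: on the side not touched by $\zeta$ we take $\sigma$ empty and $\varsigma$ appears verbatim in $\varsigma(Q)$ (the untouched subprocess is unchanged), while on the touched side the inductive hypothesis applies to the corresponding sub-derivation.

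The main obstacle is organizational rather than conceptual: the argument must cover roughly a dozen rule cases spanning CCS, ABC, ABCd, and the two presentations of CCSS, including the broadcast rules with the combining table $\sharp_1\circ\sharp_2$ and the CCSS rule pairing a signal emission with a signal read (whose derivation $\xi|\chi$ contributes only passive synchrons on the $\xi$-side). The guiding principle that reduces each case to a short calculation is the correspondence between the recursive definition of $\mathit{static}$ and the operational semantics: dynamic arguments along $\sigma$ correspond precisely to operators resolved by the applicable rule, while static arguments correspond to operators preserved in $Q$.
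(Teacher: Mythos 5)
Your proof is correct: the paper states this observation without proof, treating it as evident from the definitions of $\MVAt$, $\mathit{static}$ and the operational rules, and your two-step argument---first noting that transitions labelled in $\bar\Sig\dcup\B{:}$ have equal source and target, so that $\varsigma\MVAt\zeta=\varsigma$, and then a structural induction on the derivation for $\zeta\in\Tr^\circ$, using that $\varsigma\aconc_d\zeta$ forces alignment with the active synchrons at dynamic operators and that sides carrying only passive synchrons are left unchanged---is exactly the verification being left implicit (it also mirrors the node-deletion reasoning in the proofs of \lem{consistency} and \pr{closure PA}). One minor imprecision: ``source equals target'' for the indicator transitions is not quite a one-rule inspection of \tab{CCSS} and \tab{ABCd}, since the rules of \tab{CCS} (with $\eta,\ell$ ranging over $\Lab$) also derive $\bar s$-labelled transitions through static contexts; a short induction on derivations (or, for $\B{:}$, \lem{discards}) settles this and does not affect your argument.
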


\begin{proposition}{closure PA}
  If $\chi\mathbin\in\Tr^{s\bullet}$, $v\mathbin\in\Tr$ and $\chi \aconc v$ with
  $\source(\chi)=\source(v)$ then there is a derivation $\chi'\in\Tr^{s\bullet}$
  with $\source(\chi')=\target(v)$ and $\chi \leadsto \chi'$.
\end{proposition}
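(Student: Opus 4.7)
Set $P := \source(\chi) = \source(v)$ and $Q := \target(v)$. First dispose of the easy case $v \notin \Tr^\circ$: then $v$ is an indicator transition, so $P = Q$ (indicator transitions have equal source and target), and $\chi' := \chi$ works by reflexivity of $\leadsto$ (\pr{preorder}). From here on assume $v \in \Tr^\circ$, so $a\varsigma(v) \neq \emptyset$.

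The core idea is to construct $\chi'$ from $\chi$ by transporting each necessary synchron across $v$ via the operator $\MVAt$. First, for every $\varsigma \in n\varsigma(\chi)$ and every $\upsilon \in a\varsigma(v)$, the hypothesis $\chi \aconc v$ gives $\varsigma \smile \upsilon$; since $\varsigma,\upsilon \in \varsigma(P)$ (\lem{synchrons}), \lem{direct concurrency} upgrades this to $\varsigma \smile_d \upsilon$, so $\varsigma \aconc_d v$ and $\varsigma \MVAt v$ is well-defined. Define $\Sigma := n\varsigma(\chi)$ and $\Sigma' := \{\varsigma \MVAt v \mid \varsigma \in \Sigma\}$. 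By \Obs{after}, $\Sigma' \subseteq \varsigma(Q)$, and by comparing \df{after} with \df{possible successor synchron} we get $\varsigma \leadsto \varsigma \MVAt v$ for each $\varsigma \in \Sigma$.

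The step I expect to be the main obstacle is verifying $|\Sigma'| = |\Sigma|$, i.e.\ that $\varsigma \mapsto \varsigma \MVAt v$ is injective on $\Sigma$; this is needed before \lem{consistency} can be applied. Suppose $\varsigma_1,\varsigma_2 \in \Sigma$ with common image $\varsigma''$. Then $\varsigma_1 \leadsto \varsigma''$ and $\varsigma_2 \leadsto \varsigma''$, so \lem{future synchrons not concurrent} gives $\varsigma_1 \not\smile \varsigma_2$; but if $\varsigma_1 \neq \varsigma_2$, \lem{synchrons of same derivation} (applied to the derivation $\chi$, noting $\Sigma \subseteq n\varsigma(\chi) \cup a\varsigma(\chi)$) forces $\varsigma_1 \smile_d \varsigma_2$, hence $\varsigma_1 \smile \varsigma_2$, a contradiction. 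Thus the map is injective.

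Finally, $\Sigma$ is $P$-consistent by \df{consistency} (witnessed by $\chi$), so \lem{consistency} yields a derivation $\chi' \in \Tr^{s\bullet}$ with $n\varsigma(\chi') = \Sigma'$ and $\source(\chi') = Q$. Since $|n\varsigma(\chi')| = |\Sigma'| = |\Sigma| = |n\varsigma(\chi)|$ and each element of $\Sigma'$ is a $\leadsto$-successor of a corresponding element of $\Sigma$, \df{possible successor} gives $\chi \leadsto \chi'$, as required.
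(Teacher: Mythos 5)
Your proof is correct and follows essentially the same route as the paper's: transporting the necessary synchrons across $v$ via $\MVAt$, establishing injectivity of $\varsigma\mapsto\varsigma\MVAt v$ on $n\varsigma(\chi)$ through Lemmas~\ref{lem:synchrons of same derivation} and~\ref{lem:future synchrons not concurrent}, and concluding with \lem{consistency} and \df{possible successor}. Your only deviations are cosmetic: you treat the indicator-transition case $v\notin\Tr^\circ$ separately (the paper absorbs it into the clause $\varsigma\MVAt\zeta:=\varsigma$ of \df{after}), and you spell out the injectivity argument that the paper compresses into a single line.
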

\begin{proof}
  Let $P:=\source(v)$ and $Q:= \target(v)$.
  Then $\Sigma:=n\varsigma(\chi)$ is $P$-consistent.
  By \lem{direct concurrency} $\varsigma \aconc_d v$ for all $\varsigma \in \Sigma$.
  Let $\Sigma':=\{\varsigma\MVAt v \mid \varsigma \in \Sigma\}$.
  By \Obs{after} $\Sigma' \subseteq \varsigma(Q)$.
  By \df{after} $\varsigma \leadsto \varsigma\MVAt v$ for all $\varsigma \in \Sigma$.
  By \lem{direct concurrency} $\varsigma \aconc_d \upsilon$ for all $\varsigma,\upsilon \in \Sigma$ and thus
  by \lem{future synchrons not concurrent} $|\Sigma'| = |\Sigma|$.
  So by \lem{consistency} $\Sigma'$ is $Q$-consistent, that is, there exists a $\chi'\mathbin\in\Tr^{s\bullet}$
  with $n\varsigma(\chi')\mathbin=\Sigma'$ and $\source(\chi')\mathbin=Q$.
  By \df{possible successor} $\chi\leadsto \chi'\hspace{-2pt}$.%
\end{proof}

\section{Components}\label{sec:components}

This section proposes two concepts of system components associated to a transition, with for each a
classification of components as necessary and/or affected. \We then apply a definition of a
concurrency relation in terms of these components closely mirroring \df{noninterfering derivations}
in \Sec{LTSC} of the concurrency relation $\aconc$ in terms of synchrons.
The \emph{dynamic components} give rise to the exact same concurrency relation $\aconc$ from
\df{noninterfering derivations}, whereas the \emph{static components} yield a strictly smaller
concurrency relation $\aconc_s$. However, only the static components satisfy closure property
(\ref{closureComp}).
Finally, \we present three alternative versions of $\aconc_s$ that all give rise to the same concept
of justness.

\subsection{Dynamic components}\label{sec:dynamic}
\newcommand\dynamic{dynamic }

A \emph{(dynamic) component} is either the empty string $\varepsilon$ or any string $\sigma\iota$ with $\sigma\in{\it Arg}^*$
and $\iota\in{\it Arg}$ a static argument. Each synchron $\varsigma$ can be uniquely written as
$\gamma\varsigma'$ with $\gamma$ a component and $\varsigma'$ a synchron with only dynamic arguments.
The \emph{\dynamic component} $C(\varsigma)$ of such a synchron $\varsigma$ is defined to be $\gamma$.

The set of \dynamic components $\textit{COMP}(P)$ of a process $P$ is defined as $\{C(\varsigma)\mid\varsigma\in\varsigma(P)\}$.

The set of \dynamic components $\textit{COMP}(\chi)$ of a derivation $\chi$ is defined as $\{C(\varsigma)\mid\varsigma\in\varsigma(\chi)\}$.

The set of \emph{necessary} \dynamic components $\NC(\chi)$ of a derivation $\chi$ is defined as $\{C(\varsigma)\mid\varsigma\in n\varsigma(\chi)\}$.

The set of \emph{affected} \dynamic components $AC(\chi)$ of a derivation $\chi$ is defined as $\{C(\varsigma)\mid\varsigma\in a\varsigma(\chi)\}$.

A component $\gamma'$ is a \emph{possible successor} of a component $\gamma$, notation
$\gamma \leadsto \gamma'$, if either $\gamma'=\gamma$ or $\gamma$ has the form $\sigma_1|_D\gamma_2$, with
$\sigma_1\in {\it Arg}^*$, $D\in\{L,R\}$ and $\gamma_2$ a component, and $\gamma'=\mathit{static}(\sigma_1)|_D\gamma_2$.

Two components $\gamma$ and $\delta$ are \emph{directly concurrent}, notation $\gamma \smile_d \delta$,
if $\gamma = \sigma_1 |_D \gamma_2$ and $\delta= \sigma_1 |_E \delta_2$ with $\{D,E\}=\{L,R\}$.
Two components $\gamma'$ and $\delta'$ are \emph{concurrent}, notation $\gamma' \smile \delta'$, if
$\exists\gamma,\delta.\, \gamma'\mathrel{\reflectbox{$\leadsto$}} \gamma\conc_d \delta\leadsto\delta'\!$.

These definitions imply that $\varsigma\leadsto\upsilon \Rightarrow C(\varsigma)\leadsto C(\upsilon)$
and $\varsigma\smile\upsilon \Leftrightarrow C(\varsigma)\smile C(\upsilon)$.

The next lemma, whose proof is trivial, say that the concurrency relation $\aconc$ on derivations
can be equally well defined in terms of \dynamic components (rather than synchrons).

\begin{lemma}{concurrent derivations}
Derivation $\chi\in\Tr^{s\bullet}$ is unaffected by $\zeta\in\Tr$, i.e., $\chi \aconc \zeta$,
iff $\forall \gamma\mathbin\in \NC(\chi).\, \forall\delta\mathbin\in AC(\zeta).\, \gamma \smile \delta$.
\hfill$\Box$
\end{lemma}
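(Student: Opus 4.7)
The proof plan is to simply unfold the definitions and apply the already-stated bi-implication $\varsigma \smile \upsilon \Leftrightarrow C(\varsigma) \smile C(\upsilon)$. The lemma is essentially a translation between synchron-level and component-level formulations of the same condition, so no real combinatorics is involved.

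First I would expand the left-hand side using \df{noninterfering derivations}: $\chi \aconc \zeta$ unfolds to ``$\varsigma \smile \upsilon$ for all $\varsigma \in n\varsigma(\chi)$ and all $\upsilon \in a\varsigma(\zeta)$.'' Using the bi-implication $\varsigma\smile\upsilon \Leftrightarrow C(\varsigma)\smile C(\upsilon)$ noted just before the lemma, this condition becomes ``$C(\varsigma)\smile C(\upsilon)$ for all such $\varsigma,\upsilon$.''

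For the forward direction, given any $\gamma\in \NC(\chi)$ and $\delta\in AC(\zeta)$, by definition of $\NC$ and $AC$ there exist witnesses $\varsigma\in n\varsigma(\chi)$ with $C(\varsigma)=\gamma$ and $\upsilon\in a\varsigma(\zeta)$ with $C(\upsilon)=\delta$; the synchron-level hypothesis then yields $\gamma\smile\delta$. The converse direction is symmetric: any $\varsigma\in n\varsigma(\chi)$ and $\upsilon\in a\varsigma(\zeta)$ give rise to $C(\varsigma)\in \NC(\chi)$ and $C(\upsilon)\in AC(\zeta)$ by construction, so the component-level hypothesis delivers $C(\varsigma)\smile C(\upsilon)$, and the bi-implication delivers $\varsigma\smile\upsilon$.

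There is no genuine obstacle here; the only point worth double-checking is that the bi-implication $\varsigma\smile\upsilon \Leftrightarrow C(\varsigma)\smile C(\upsilon)$ is legitimately available. This is asserted without proof in the paragraph preceding the lemma, and can be read off directly from \df{possible successor synchron} and \df{concurrent synchrons} (which apply uniformly to the prefix $\gamma$ of a synchron $\varsigma = \gamma\varsigma'$, since the relations $\leadsto$ and $\smile_d$ inspect only the leading arguments of the synchron, never the final $(\shar{\alpha}P)$, $(P\sig s)$ or $(b{:})$ suffix or the trailing dynamic arguments). Hence the lemma reduces to two lines of definition-chasing, justifying the author's remark that the proof is trivial.
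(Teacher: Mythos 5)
Your proposal is correct and matches the paper's intent: the paper omits the proof as trivial, relying exactly on the fact stated just before the lemma that $\varsigma\smile\upsilon \Leftrightarrow C(\varsigma)\smile C(\upsilon)$ together with the definitions of $\NC$ and $AC$ as images of $n\varsigma$ and $a\varsigma$ under $C$. Your two-direction definition-chasing is precisely this argument, so there is nothing to add.
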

The following shows that the functions $\NC$ and $AC$ do not satisfy closure property (\ref{closureComp}) of \Sec{justness}.
\begin{example}{noClosureComp}
In \ex{concurrent}, $\chi_d,\chi_e\in\Tr^\bullet$ with $\source(\chi_d)=\source(\chi_e)$ and $\NC(\chi_d)\cap AC(\chi_e)=\emptyset$.
   Yet, there is no $u\in \Tr^\bullet$ with $\source(u)=\target(\chi_e)$, $\ell(u)=\ell(\chi_d)=d$ and $\NC(u)=\NC(\chi_d)$.
   In fact, the unique $u\in \Tr^\bullet$ with $\source(u)=\target(\chi_e)$ and $\ell(u)=d$ is $\chi'_d$.
   However, $\NC(\chi_d)= \{\backslash c\, {|_L}\, {+_R}\, {|_L}\}$, whereas $\NC(\chi'_d)= \{\backslash c\, {|_L}\, {|_L}\}$.
\end{example}

\subsection{Static components}\label{sec:static}

A \emph{static component} is a string $\sigma\in{\it Arg}^*$ of static arguments.
Let $\C$ be the set of static components.
The \emph{static component} $c(\varsigma)$ of a synchron $\varsigma$ is defined to be the largest
prefix $\gamma$ of $\varsigma$ that is a static component.

The set of static components $\comp(P)$ of a process $P$ is defined as $\{c(\varsigma)\mid\varsigma\in\varsigma(P)\}$.

The set of static components $\comp(\chi)$ of a derivation $\chi$ is defined as $\{c(\varsigma)\mid\varsigma\in\varsigma(\chi)\}$.

The set of \emph{necessary} static components $\npc(\chi)$ of a derivation $\chi$ is defined as $\{c(\varsigma)\mid\varsigma\in n\varsigma(\chi)\}$.

The set of \emph{affected} static components $\afc(\chi)$ of a derivation $\chi$ is defined as $\{c(\varsigma)\mid\varsigma\in a\varsigma(\chi)\}$.

Since $n\varsigma(\chi) \subseteq \varsigma(\chi)$ and $a\varsigma(\chi) \subseteq \varsigma(\chi)$,
\we have $\npc(\chi) \subseteq \comp(\chi)$ and $\afc(\chi) \subseteq \comp(\chi)$. Moreover, by
\lem{synchrons}, $\comp(\chi)\subseteq \comp(\source(\chi))$.

The following lemma shows how the relations $\leadsto$ and $\smile$ from \Sec{dynamic} simplify when
applied to static components.%
\begin{lemma}{static leadsto}
If $\gamma\in\C$ and $\gamma\leadsto\gamma'$ then $\gamma'=\gamma$.
Moreover, for $\gamma,\delta\in\C$, $\gamma \smile \delta$ iff $\gamma \smile_d \delta$.
\end{lemma}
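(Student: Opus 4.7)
The plan is to treat both halves of the lemma by a careful but short case analysis of the definitions of $\leadsto$ and $\smile$ from \Sec{dynamic}. Both halves reduce to the observation that applying $\mathit{static}(\cdot)$ to a component whose $\leadsto$-successor is already static makes no change: the dynamic arguments that such a step would strip from a prefix would already have to be absent. I expect no serious obstacle; the only care required is to remember which argument substrings are forced to be static by the hypothesis $\gamma,\delta\in\C$.

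For the first claim, I would simply unfold the definition of $\gamma\leadsto\gamma'$. The reflexive case is immediate. In the other case $\gamma=\sigma_1|_D\gamma_2$ and $\gamma'=\mathit{static}(\sigma_1)|_D\gamma_2$; since $\gamma\in\C$ consists only of static arguments, so does its prefix $\sigma_1$, whence $\mathit{static}(\sigma_1)=\sigma_1$ and $\gamma'=\gamma$.

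The implication $\gamma\smile_d\delta\Rightarrow\gamma\smile\delta$ is immediate from the definition of $\smile$ and reflexivity of $\leadsto$. For the converse, the first step is an auxiliary observation: whenever $\gamma^\dagger\leadsto\gamma$ with $\gamma\in\C$, one has $\gamma=\mathit{static}(\gamma^\dagger)$. In the nontrivial case $\gamma^\dagger=\sigma_1|_D\gamma_2$ with $\gamma=\mathit{static}(\sigma_1)|_D\gamma_2$, staticness of $\gamma$ forces $\gamma_2$ to be static too, so $\mathit{static}(\gamma^\dagger)=\mathit{static}(\sigma_1)|_D\mathit{static}(\gamma_2)=\mathit{static}(\sigma_1)|_D\gamma_2=\gamma$.

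With that in hand I would unfold $\gamma\smile\delta$ to obtain $\gamma^\dagger,\delta^\dagger$ with $\gamma^\dagger\leadsto\gamma$, $\delta^\dagger\leadsto\delta$ and $\gamma^\dagger\smile_d\delta^\dagger$; so $\gamma^\dagger=\tau|_D\gamma_2^\dagger$ and $\delta^\dagger=\tau|_E\delta_2^\dagger$ for a common $\tau$ with $\{D,E\}=\{L,R\}$. Applying $\mathit{static}(\cdot)$ componentwise and invoking the auxiliary fact yields $\gamma=\mathit{static}(\tau)|_D\mathit{static}(\gamma_2^\dagger)$ and $\delta=\mathit{static}(\tau)|_E\mathit{static}(\delta_2^\dagger)$, which exhibit the common prefix $\mathit{static}(\tau)$ required for $\gamma\smile_d\delta$.
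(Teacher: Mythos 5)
Your proof is correct and takes essentially the same route as the paper's: the paper invokes, from the proof of \lem{future synchrons not concurrent}, the facts that $\leadsto$ preserves $\mathit{static}$ and that $\smile_d$ survives applying $\mathit{static}$, concluding $\gamma=\mathit{static}(\gamma^\dagger)\smile_d\mathit{static}(\delta^\dagger)=\delta$, while you simply re-derive these two facts inline (your auxiliary observation and the componentwise application of $\mathit{static}$ to the common prefix). The first claim and the ``if'' direction are handled just as the paper does, by noting $\mathit{static}(\sigma)=\sigma$ for static strings and by reflexivity of $\leadsto$.
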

\begin{proof}
The first statement and direction ``if'' of the second are trivial.
So let $\gamma',\delta'\in\C$ with $\gamma' \smile \delta'$. Then
$\gamma'\mathrel{\reflectbox{$\leadsto$}} \gamma\conc_d \delta\leadsto\delta'$ for some components $\gamma$ and $\delta$.
Thus, using insights from the proof of \lem{future synchrons not concurrent},
$\gamma'=\mathit{static}(\gamma')=\mathit{static}(\gamma)\conc_d \mathit{static}(\delta)=\mathit{static}(\delta')=\delta'$.
\end{proof}
The next result says that any two different static components of the same process are concurrent.
\begin{lemma}{components concurrent}
Let $\gamma,\delta\in\comp(P)$ for some $P\in\cT$. Then $\gamma \conc \delta$ iff $\gamma\neq \delta$.
\end{lemma}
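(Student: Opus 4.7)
The plan is to prove both directions separately. The easy direction, $\gamma \conc \delta \Rightarrow \gamma \neq \delta$, follows from the irreflexivity of $\conc_d$: by \lem{static leadsto} the relations $\conc$ and $\conc_d$ coincide on $\C\times\C$, and $\conc_d$ is manifestly irreflexive because \df{concurrent synchrons} requires $\{D,E\}=\{L,R\}$.

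For the nontrivial direction I would prove the stronger claim that any two distinct $\gamma,\delta\in\comp(P)$ are directly concurrent, i.e., $\gamma\conc_d\delta$; the result then follows by \lem{static leadsto}. Concretely, I would exhibit a common prefix $\sigma\in{\it Arg}^*$ and $D,E\in\{L,R\}$ with $D\neq E$ such that $\gamma=\sigma|_D\gamma_2$ and $\delta=\sigma|_E\delta_2$.

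The argument proceeds by structural induction on $P$. Unfolding the definitions of $\varsigma(\cdot)$ and $c(\cdot)$ shows that $\comp(P)\subseteq\{\varepsilon\}$ whenever the outermost constructor of $P$ is ${\bf 0}$, action prefixing, $+$, an agent identifier, or a signalling operator $Q\signals s$: every synchron of such a $P$ is either an atomic form $(\shar{\alpha}Q)$, $(Q\sig s)$ or $(b{:})$, or else begins with a dynamic argument ($+_L$, $+_R$, $A{:}$, or $\signals s$), so its longest static prefix is empty. These cases are therefore vacuous. For the unary static operators $P=Q\backslash \RL$ and $P=Q[f]$, every static component of $P$ is obtained by prepending the single static argument $\backslash \RL$ or $[f]$ to a static component of $Q$; distinct elements of $\comp(P)$ strip that common prefix to distinct elements of $\comp(Q)$, so the induction hypothesis yields the required divergence, which we re-prepend. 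The decisive case is $P=Q|R$, where $\comp(P)=|_L\comp(Q)\dcup|_R\comp(R)$. If $\gamma,\delta$ lie on opposite sides of the disjoint union they diverge at the root with $\sigma=\varepsilon$; if they lie on the same side, the induction hypothesis applied to $Q$ or $R$ yields a common prefix $\sigma'$, which we extend by $|_L$ or $|_R$ respectively.

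The main obstacle, more a bookkeeping hurdle than a conceptual one, is justifying the vacuous cases: one must observe that the definition of $c(\varsigma)$ as the largest static prefix of $\varsigma$ forces $c(\varsigma)=\varepsilon$ whenever the first symbol of $\varsigma$ is dynamic or $\varsigma$ itself is one of the atoms $(\shar{\alpha}Q)$, $(Q\sig s)$, $(b{:})$. Once this observation is stated cleanly, all singleton-component cases collapse uniformly, and the induction proceeds without further fuss.
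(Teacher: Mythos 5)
Your proof is correct and follows essentially the same route as the paper, which disposes of the ``only if'' direction as trivial (irreflexivity of $\smile$ on $\C$) and proves ``if'' by a straightforward structural induction on $P$; you have merely spelled out the case analysis (vacuous cases with $\comp(P)\subseteq\{\varepsilon\}$, the unary static operators, and the decisive $P|Q$ case) that the paper leaves implicit.
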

\begin{proof}
``Only if'' is trivial. ``If'' follows by a straightforward structural induction on $P$.
\end{proof}

\noindent
\We now define a static concurrency relation $\aconc_s$ between derivations in terms of their static components in the same way that
the (dynamic) concurrency relation $\aconc$ is characterised (by \lem{concurrent derivations}) in terms of their dynamic components:

\begin{definition}{static concurrency}
Derivation $\chi\in\Tr^{s\bullet}$ is \emph{statically unaffected by} $\zeta$, $\chi \aconc_s \zeta$,
iff $\forall \gamma\mathbin\in \npc(\chi).\, \forall\delta\mathbin\in \afc(\zeta).\, \gamma \smile \delta$.%
\end{definition}
The following shows that $\aconc_s$ is strictly contained in $\aconc$.

\begin{proposition}{static concurrency}
If $\chi \aconc_s \zeta$ then $\chi \aconc \zeta$.
\end{proposition}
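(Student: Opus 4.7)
The plan is to unfold both definitions of concurrency to the level of synchrons and reduce the claim to the direct-concurrency case at the static prefixes. Concretely, suppose $\chi \aconc_s \zeta$. To show $\chi \aconc \zeta$, pick an arbitrary $\varsigma \in n\varsigma(\chi)$ and $\upsilon \in a\varsigma(\zeta)$; by \df{noninterfering derivations} it suffices to verify $\varsigma \smile \upsilon$. By the definitions of $\npc$ and $\afc$ we have $c(\varsigma) \in \npc(\chi)$ and $c(\upsilon) \in \afc(\zeta)$, so the hypothesis $\chi \aconc_s \zeta$ (via \df{static concurrency}) delivers $c(\varsigma) \smile c(\upsilon)$.

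The key simplification is \lem{static leadsto}: since $c(\varsigma)$ and $c(\upsilon)$ are static components, the only $\leadsto$-predecessor or -successor of each is itself, hence $c(\varsigma) \smile c(\upsilon)$ collapses to direct concurrency $c(\varsigma) \smile_d c(\upsilon)$. Unpacking \df{concurrent synchrons} for static components, this produces a common static prefix $\sigma \in \mathit{Arg}^*$ and decompositions $c(\varsigma) = \sigma |_D \gamma_2$ and $c(\upsilon) = \sigma |_E \delta_2$ with $\{D,E\} = \{\Left,\R\}$ and $\gamma_2, \delta_2 \in \C$.

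The step that requires a short argument is lifting this decomposition back to $\varsigma$ and $\upsilon$. By construction, $c(\varsigma)$ is a prefix of $\varsigma$, so $\varsigma = \sigma |_D \varsigma'$ where $\varsigma' = \gamma_2 \rho_\varsigma$ for the tail $\rho_\varsigma$ of $\varsigma$ following $c(\varsigma)$; since $\varsigma$ has the form $\sigma'(\shar\alpha P)$, $\sigma'(P\sig s)$ or $\sigma'(b{:})$, the string $\varsigma' = \gamma_2 \rho_\varsigma$ is again a synchron (it ends in a terminal). The same decomposition holds for $\upsilon = \sigma |_E \upsilon'$ with $\upsilon'$ a synchron. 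Applying \df{concurrent synchrons} directly yields $\varsigma \smile_d \upsilon$, hence $\varsigma \smile \upsilon$, as required.

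The main subtlety—though not a genuine obstacle—is making sure that $c(\varsigma)$ being the \emph{largest} static prefix does not obstruct the decomposition $\sigma |_D \gamma_2$ provided by $\smile_d$. But the operator $|_D$ is itself static, and $\gamma_2$ is also static by definition of $\smile_d$ on static components, so $\sigma |_D \gamma_2 = c(\varsigma)$ consists entirely of static arguments, which is consistent. The whole proof is one short paragraph once this alignment is noted.
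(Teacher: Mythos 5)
Your proof is correct and follows essentially the same route as the paper's: unfold $\aconc_s$ and $\aconc$ to synchrons, use \lem{static leadsto} to collapse $c(\varsigma)\smile c(\upsilon)$ to $c(\varsigma)\smile_d c(\upsilon)$, and lift the resulting decomposition from the static prefixes back to the full synchrons to get $\varsigma\smile_d\upsilon$. The paper states this last lifting step without elaboration ("and thus $\varsigma\smile_d\upsilon$"), whereas you spell it out; the content is identical.
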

\begin{proof}
Suppose $\chi \aconc_s \zeta$. Let $\varsigma\mathbin\in n\varsigma(\chi)$ and $\upsilon\mathbin\in a\varsigma(\zeta)$.
Then $c(\varsigma)\mathbin\in\npc(\chi)$, $c(\upsilon)\mathbin\in\afc(\zeta)$, so $c(\varsigma) \smile c(\upsilon)$.
Hence $c(\varsigma) \smile_d c(\upsilon)$ by \lem{static leadsto}, and thus $\varsigma \smile_d \upsilon$.
\end{proof}
In \ex{concurrent} \we have $t_d \smile t_e$ but $t_d \nconc_s t_e$, for $\npc(t_e)=\comp(t_e)=\comp(t_d)=\afc(t_d)=\{\backslash c\, {|_L}\}$.
Here $t \conc_s u$ denotes $t \aconc_s u \wedge u\aconc_s t$.
Hence the implication of \pr{static concurrency} is strict.

\begin{lemma}{static concurrency}
Let $\source(\chi)=\source(v)$. Then $\chi\aconc_s v$ iff $\npc(\chi) \cap \afc(v) = \emptyset$.
\end{lemma}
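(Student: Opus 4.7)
The plan is to unfold the definition of $\chi \aconc_s v$ and apply \lem{components concurrent} to convert the concurrency requirement on pairs of static components into a disjointness requirement on the sets $\npc(\chi)$ and $\afc(v)$. The whole argument hinges on the fact that, because $\chi$ and $v$ share their source $P$, all relevant static components lie inside the common set $\comp(P)$.

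First I would verify this inclusion. By definition $\npc(\chi) \subseteq \comp(\chi)$ and $\afc(v) \subseteq \comp(v)$, since $n\varsigma(\chi)\subseteq\varsigma(\chi)$ and $a\varsigma(v)\subseteq\varsigma(v)$. By \lem{synchrons}, $\varsigma(\chi)\subseteq\varsigma(\source(\chi))$ and $\varsigma(v)\subseteq\varsigma(\source(v))$, so applying $c$ pointwise yields $\comp(\chi) \subseteq \comp(\source(\chi))$ and $\comp(v) \subseteq \comp(\source(v))$. Since $\source(\chi)=\source(v)=:P$, both $\npc(\chi)$ and $\afc(v)$ are subsets of $\comp(P)$.

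Next, by \df{static concurrency}, $\chi \aconc_s v$ means that $\gamma \smile \delta$ holds for every $\gamma \in \npc(\chi)$ and every $\delta \in \afc(v)$. For any such pair we now have $\gamma,\delta\in\comp(P)$, so \lem{components concurrent} gives $\gamma \smile \delta$ iff $\gamma \neq \delta$. Thus $\chi \aconc_s v$ is equivalent to the assertion that $\gamma \neq \delta$ for all $\gamma \in \npc(\chi)$ and $\delta \in \afc(v)$, which is exactly $\npc(\chi) \cap \afc(v) = \emptyset$.

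There is no real obstacle here; the only subtlety is making sure to appeal to \lem{synchrons} to place $\npc(\chi)$ and $\afc(v)$ in the common set $\comp(P)$ before invoking \lem{components concurrent}. Everything else is immediate from the definitions.
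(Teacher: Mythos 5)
Your proof is correct and is precisely the argument the paper intends: the paper's proof simply says ``Immediately from Lemma~\ref{lem:components concurrent}'', and the containments $\npc(\chi)\subseteq\comp(\chi)\subseteq\comp(\source(\chi))$ and $\afc(v)\subseteq\comp(v)\subseteq\comp(\source(v))$ that you verify via Lemma~\ref{lem:synchrons} are exactly the facts the paper records just after the definitions in Section~\ref{sec:static}. You have merely spelled out the details the paper leaves implicit.
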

\begin{proof}
Immediately from \lem{components concurrent}.
\end{proof}

\noindent
Write $\varsigma\aconc_s v$ for $\varsigma$ a synchron and $v \in \Tr$ if
$c(\varsigma) \conc \gamma$ for all $\gamma \in \afc(v)$.

\begin{observation}{after static}
  If $\varsigma \aconc_s v$ then $\varsigma\MVAt v = \varsigma$.
\end{observation}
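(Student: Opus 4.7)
The plan is to case-split on whether $v\in\Tr^\circ$ or not. The case $v\in\Tr\setminus\Tr^\circ$ is handled immediately by the last clause of \df{after}, which sets $\varsigma\MVAt v:=\varsigma$. So the real work is when $v\in\Tr^\circ$, so that $a\varsigma(v)\neq\emptyset$; in that case I need two things: (i) that $\varsigma\MVAt v$ is defined in the first place, i.e.\ that $\varsigma\aconc_d v$, and (ii) that the value returned by the middle clause of \df{after} is $\varsigma$ itself.

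For (i), fix any $\upsilon\in a\varsigma(v)$. From $\varsigma\aconc_s v$ I have $c(\varsigma)\smile c(\upsilon)$, which by \lem{static leadsto} upgrades to $c(\varsigma)\smile_d c(\upsilon)$. So there exist $\sigma''\in\mathit{Arg}^*$ and $\{D,E\}=\{L,R\}$ with $c(\varsigma)=\sigma'' |_D \gamma$ and $c(\upsilon)=\sigma'' |_E \delta$; note $\sigma''$ is static, being a prefix of the static component $c(\varsigma)$. Because $c(\varsigma)$ is a prefix of $\varsigma$ and $c(\upsilon)$ is a prefix of $\upsilon$, this immediately gives $\varsigma=\sigma'' |_D \varsigma'$ and $\upsilon=\sigma'' |_E \upsilon'$ for suitable $\varsigma',\upsilon'$, witnessing $\varsigma\smile_d\upsilon$. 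Since this holds for every $\upsilon\in a\varsigma(v)$, we conclude $\varsigma\aconc_d v$.

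For (ii), let $\upsilon^*\in a\varsigma(v)$ be the active synchron closest to $\varsigma$, sharing the longest common prefix $\sigma$ with $\varsigma$; write $\varsigma=\sigma |_D \varsigma'$ and $\upsilon^*=\sigma |_E \upsilon'$ as in \df{after}. By the argument in (i) applied to $\upsilon^*$, the first position at which $\varsigma$ and $\upsilon^*$ diverge lies directly after the common static prefix of $c(\varsigma)$ and $c(\upsilon^*)$; consequently $\sigma$ coincides with that static common prefix, so $\sigma$ is entirely static and $\mathit{static}(\sigma)=\sigma$. Therefore $\varsigma\MVAt v=\varsigma\MVAt\upsilon^*=\mathit{static}(\sigma) |_D \varsigma'=\sigma |_D \varsigma'=\varsigma$, as required.

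The only subtle point — and the step I would flag as the main obstacle — is the identification of the common prefix of the full synchrons $\varsigma,\upsilon^*$ with the common prefix of their static truncations $c(\varsigma),c(\upsilon^*)$. This hinges on the observation that if the truncations already diverge at position $|\sigma''|+1$, then the full synchrons must diverge there too; this is what forces $\sigma$ to be static and thereby unaffected by $\mathit{static}(\cdot)$. Everything else is a direct unfolding of the definitions, aided by \lem{static leadsto} to pass between $\smile$ and $\smile_d$ on static components.
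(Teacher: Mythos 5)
Your argument is correct: the paper states this Observation without proof, treating it as an immediate consequence of the definitions, and your unfolding — that $\varsigma\aconc_s v$ forces the divergence point between $\varsigma$ and any active synchron of $v$ to lie inside the static prefix $c(\varsigma)$, so that the $\sigma$ in \df{after} is static and $\mathit{static}(\sigma)|_D\varsigma'=\varsigma$ — is exactly the intended reasoning, with \lem{static leadsto} used in the same role. Your explicit check that $\varsigma\aconc_s v$ implies $\varsigma\aconc_d v$ (so that $\varsigma\MVAt v$ is defined at all) is a worthwhile detail the paper leaves tacit.
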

Henceforth \we write $t \equiv u$, for $t,u\in\Tr^{s\bullet}$,
when $n\varsigma(t)=n\varsigma(u)$. In that case $\npc(t)=\npc(u)$ and $t\leadsto u$, and thus, by (\ref{closureD}), $\ell(t)=\ell(u)$.

\begin{proposition}{closure static}
  If $\chi\mathbin\in\Tr^{s\bullet}$, $v \in \Tr$ and $\chi \aconc_s v$ with $\source(\chi)=\source(v)$
  then there is a derivation $u\in\Tr^{s\bullet}$ with $\source(u)=\target(v)$ and $\chi \equiv u$.
\end{proposition}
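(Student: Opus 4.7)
The plan is to adapt the proof of \pr{closure PA}, exploiting that under the strengthened hypothesis $\chi \aconc_s v$ the ``after'' operator acts as the identity on the necessary synchrons of $\chi$. This collapses the $\leadsto$ bookkeeping in the earlier proof to equality of synchron sets, so the successor derivation will have \emph{literally the same} necessary synchrons, not merely $\leadsto$-related ones.

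First I would set $P := \source(\chi) = \source(v)$, $Q := \target(v)$ and $\Sigma := n\varsigma(\chi)$, so that $\Sigma$ is $P$-consistent by \df{consistency}. Next I would check that each $\varsigma \in \Sigma$ satisfies $\varsigma \aconc_s v$: since $c(\varsigma) \in \npc(\chi)$, the hypothesis together with \df{static concurrency} gives $c(\varsigma) \smile \gamma$ for every $\gamma \in \afc(v)$. By \lem{static leadsto} this means $c(\varsigma) \smile_d c(\upsilon)$ for each active synchron $\upsilon$ of $v$, and because $c(\varsigma)$ and $c(\upsilon)$ are prefixes of $\varsigma$ and $\upsilon$, the common prefix/branching witnessing direct concurrency lifts verbatim to $\varsigma \smile_d \upsilon$; in particular $\varsigma \aconc_d v$, so $\varsigma \MVAt v$ is well defined.

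Now \Obs{after static} yields $\varsigma \MVAt v = \varsigma$ for every $\varsigma \in \Sigma$. Setting $\Sigma' := \{\varsigma \MVAt v \mid \varsigma \in \Sigma\}$ we therefore have $\Sigma' = \Sigma$, and \Obs{after} gives $\Sigma' \subseteq \varsigma(Q)$. Applying \lem{consistency} with the identity correspondence $\varsigma \leadsto \varsigma$ (valid by reflexivity of $\leadsto$, \pr{preorder}) then produces $u \in \Tr^{s\bullet}$ with $\source(u) = Q = \target(v)$ and $n\varsigma(u) = \Sigma' = n\varsigma(\chi)$, which by the definition stated just above the proposition is precisely $\chi \equiv u$.

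I do not expect any real obstacle; the argument is strictly shorter than that of \pr{closure PA}, the whole work having been absorbed into \Obs{after static}. The only minor technicality is the lifting of $c(\varsigma) \smile_d c(\upsilon)$ to $\varsigma \smile_d \upsilon$, which is immediate once one observes that the static components are literal prefixes of the synchrons and that the common prefix appearing in the definition of $\smile_d$ is static whenever both sides are.
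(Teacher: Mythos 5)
Your proposal is correct and follows essentially the same route as the paper: the paper simply cites \pr{static concurrency} and re-uses the construction in the proof of \pr{closure PA}, then applies \Obs{after static} to conclude $n\varsigma(u)=n\varsigma(\chi)$, whereas you inline those two steps (your prefix-lifting of $c(\varsigma)\smile_d c(\upsilon)$ to $\varsigma\smile_d\upsilon$ is exactly the argument in the proof of \pr{static concurrency}, and your use of \Obs{after}, \lem{consistency} and the identity correspondence reproduces the proof of \pr{closure PA} with $\Sigma'=\Sigma$). The only cosmetic omission is an explicit appeal to \lem{synchrons} to justify $\varsigma\in\varsigma(P)$ when invoking \Obs{after}, which is immediate.
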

\begin{proof}
  By \pr{static concurrency} $\chi \aconc v$. Hence the proof of \pr{closure PA}
  finds a $u\in\Tr^{s\bullet}$ with $\source(u)=\target(v)$ and $\chi \leadsto u$,
  such that $n\varsigma(u)=\{\varsigma\MVAt v \mid \varsigma \in n\varsigma(\chi)\}$.
  Since $\chi \aconc_s v$, $\varsigma\aconc_s v$ for all $\varsigma \in n\varsigma(\chi)$,
  so by \Obs{after static} $n\varsigma(u)=n\varsigma(\chi)$, i.e., $\chi \equiv u$.
\end{proof}
In view of \lem{static concurrency}, \pr{closure static} says that the functions
$\npc$ and $\afc:\Tr\rightarrow\Pow(\C)$ satisfy closure property (\ref{closureComp}) of \Sec{justness}.

\subsection{Another compatible definition of the static concurrency relation}\label{sec:compatible}

The concurrency relation $\aconc_c$ between transitions defined in terms of static components
according to the template in \cite{GH19}, recalled in \Sec{justness}, is not identical to the
concurrency relation $\aconc_s$ of \df{static concurrency} when restricted to $\Tr^\bullet \times \Tr$.

\begin{definition}{alternative static concurrency}
Let $\chi\in\Tr^{s\bullet}$ and $\zeta\in\Tr$ be derivations. Write $\chi \aconc_{c} \zeta$ iff $\npc(t) \cap \afc(u) = \emptyset$.
\end{definition}
The following shows that $\aconc_{s}$ is strictly included in $\aconc_{c}$.

\begin{proposition}{alternative static concurrency}
If $\chi \aconc_s \zeta$ then $\chi \aconc_c \zeta$.
\end{proposition}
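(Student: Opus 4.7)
The plan is to argue by contradiction using the irreflexivity of the concurrency relation $\smile$ restricted to static components. Assume $\chi \aconc_s \zeta$ and suppose, for a contradiction, that there exists some $\gamma \in \npc(\chi) \cap \afc(\zeta)$. Then, by \df{static concurrency} instantiated with $\gamma$ playing the roles of both the necessary component of $\chi$ and the affected component of $\zeta$, we must have $\gamma \smile \gamma$.

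The next step is to observe that this self-concurrency is impossible. By \lem{static leadsto}, for any two static components we have $\gamma \smile \delta$ iff $\gamma \smile_d \delta$. Hence $\gamma \smile \gamma$ reduces to $\gamma \smile_d \gamma$. But by the definition of $\smile_d$, writing $\gamma = \sigma_1 |_D \gamma_2 = \sigma_1 |_E \gamma_2$ would force $\{D,E\} = \{L,R\}$ with $D=E$, which is absurd. So $\smile_d$ (and hence $\smile$ on $\C$) is irreflexive, contradicting $\gamma \smile \gamma$. Therefore $\npc(\chi) \cap \afc(\zeta) = \emptyset$, i.e., $\chi \aconc_c \zeta$.

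I do not anticipate any real obstacle here: the key ingredients, namely the characterisation $\smile = \smile_d$ on static components and the manifest irreflexivity of $\smile_d$, have already been established earlier in the section. The proof is essentially a one-liner once the correct lemma is invoked.
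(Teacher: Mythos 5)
Your proof is correct and takes essentially the same route as the paper: the paper's proof is exactly the observation that $\aconc_s$ forces $\gamma\smile\gamma$ on any common component, contradicting the irreflexivity of ${\smile}\subseteq\C\times\C$, which it attributes to \lem{components concurrent}. You merely unfold that irreflexivity yourself via \lem{static leadsto} and the definition of $\smile_d$, which is an immaterial difference.
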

\begin{proof}
This follows immediately from the irreflexivity of ${\smile} \subseteq \C \times \C$ (\lem{components concurrent}).
\end{proof}

\begin{example}{alternative static concurrency}
  Let $t_1$ and $t_2$ be the unique derivations of the transitions
  $c[f]\goto{c} {\bf 0}[f]$ and $c\backslash L \goto{c} {\bf 0}\backslash L$,
  where $f(c)=c$ and $c\notin L$. Then $(n)\varsigma(t_1)=\{[f](\shar{c}{\bf 0})\}$ and
  $(a)\varsigma(t_2)=\{\backslash L (\shar{c}{\bf 0})\}$, so $\npc(t_1)=\{[f]\}$ and
  $\afc(t_2)=\{\backslash L\}$. Since $[f] \nconc \backslash L$ and $[f] \neq \backslash L$,
  one has $t_1 \naconc_s t_2$ (and $t_1 \naconc t_2$) but $t_1 \aconc_c t_2$.
\end{example}
Since in \ex{concurrent} \we have $t_d \smile t_e$ but ($t_d \nconc_s t_e$ and) $t_d \nconc_c t_e$,
since $\npc(t_d)=\afc(t_e)=\{\backslash c\, {|_L}\}$, it follows also that $\aconc_c$ is incomparable with $\aconc$.

Nevertheless, \we show that for the study of justness it makes no difference whether justness is
defined using the concurrency relation $\aconc_s$ or $\aconc_c$.

\begin{lemma}{closure static}
  If $\chi\mathbin\in\Tr^{s\bullet}$ and $\pi$ is a path from $\source(\chi)$ to a state $P'$
  such that $\chi \aconc_s v$ for all transitions $v$ on $\pi$, then there is a derivation
  $t'\in\Tr^{s\bullet}$ with $\source(t')=P'$ and $\chi \equiv \chi'$.
\end{lemma}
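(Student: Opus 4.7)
The plan is to proceed by induction on the length of the path $\pi$, using \pr{closure static} as the one-step extension tool. (I read ``$\chi \equiv \chi'$'' in the statement as a typo for ``$\chi \equiv t'$''.)

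For the base case, when $\pi$ has length $0$, one has $P' = \source(\chi)$, so taking $t' := \chi$ works: $\source(t') = P'$, and $\chi \equiv \chi$ since $n\varsigma(\chi) = n\varsigma(\chi)$.

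For the inductive step, decompose $\pi$ as $\pi_0 \cdot v$, where $\pi_0$ is a path from $\source(\chi)$ to some intermediate state $P''$ and $v$ is a transition with $\source(v) = P''$ and $\target(v) = P'$. The hypothesis that $\chi \aconc_s w$ for every transition $w$ on $\pi$ applies in particular to all transitions on $\pi_0$, so the induction hypothesis yields a $t'' \in \Tr^{s\bullet}$ with $\source(t'') = P''$ and $\chi \equiv t''$. By the definition of $\equiv$ this gives $n\varsigma(t'') = n\varsigma(\chi)$, hence $\npc(t'') = \npc(\chi)$. Since $v$ itself occurs on $\pi$, by hypothesis $\chi \aconc_s v$, i.e., every $\gamma \in \npc(\chi)$ is concurrent with every $\delta \in \afc(v)$. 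Replacing $\npc(\chi)$ by the equal set $\npc(t'')$ yields $t'' \aconc_s v$. Since $\source(t'') = P'' = \source(v)$, \pr{closure static} now supplies a derivation $u \in \Tr^{s\bullet}$ with $\source(u) = \target(v) = P'$ and $t'' \equiv u$. As $\equiv$ is plainly transitive (equality of the sets $n\varsigma(\cdot)$), one concludes $\chi \equiv u$, and taking $t' := u$ completes the step.

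The only non-routine point is the inductive transfer of the hypothesis $\chi \aconc_s v$ to $t'' \aconc_s v$; but because $\aconc_s$ is defined purely in terms of $\npc$ of the left-hand derivation (\df{static concurrency}), and $\equiv$ preserves $\npc$, this transfer is immediate. No separate treatment of \textsignal\ transitions or of label preservation is needed beyond what \pr{closure static} already delivers, since $\equiv$ already subsumes equality of labels by (\ref{closureD}) as noted just before \pr{closure static}.
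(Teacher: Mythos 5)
Your proposal is correct and follows exactly the route the paper takes: an induction on the length of $\pi$ with \pr{closure static} as the one-step tool, using reflexivity and transitivity of $\equiv$ together with the transfer fact that $\chi \aconc_s v$ and $\chi \equiv t''$ imply $t'' \aconc_s v$ (which holds since $\aconc_s$ depends on its left argument only via $\npc$, preserved by $\equiv$). You have merely spelled out the details the paper leaves implicit, and your reading of $\chi \equiv \chi'$ as $\chi \equiv t'$ matches the intended statement.
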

\begin{proof}
  This is a corollary of \pr{closure static}, obtained by a simple induction on the length of $\pi$,
  using the reflexivity and transitivity of $\equiv$, and that $\chi \aconc_s v$ and $\chi \equiv \chi'$ implies
  $\chi' \aconc_s v$.
\end{proof}

\begin{definition}{parametrised justness}
Let $\IT=(S,\Tr,\source,\target,\ell)$ be an LTS, and ${\aconc_x}\subseteq\Tr\times\Tr$ a
concurrency relation between the transitions, satisfying (\ref{irreflexivity}) and (\ref{closure}).
Call a path $\pi$ in $\IT$ ${\aconc_x}$-$B$-just, for $\Rec \subseteq B\subseteq Act$, if according to
\df{justness} it is $B$-just in the LTSC $(S,\Tr,\source,\target,\ell,\aconc_x)$.
\end{definition}

\begin{proposition}{static justness}
A path is ${\aconc_c}$-$B$-just iff it is ${\aconc_s}$-$B$-just.
\end{proposition}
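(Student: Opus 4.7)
The plan is to prove the two directions separately. One direction is immediate; the other requires using the closure property to "transport" $t$ along the path until it shares a source with the interfering transition.

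For the direction $\aconc_c$-$B$-just $\Rightarrow$ $\aconc_s$-$B$-just: Since $\aconc_s\subseteq\aconc_c$ (Proposition~\ref{pr:alternative static concurrency}), we have $\naconc_c\subseteq\naconc_s$. Thus, for any suffix $\pi'$ and enabled $t\in\Tr^\bullet_{\neg B}$, any witness $u$ occurring in $\pi'$ with $t\naconc_c u$ automatically satisfies $t\naconc_s u$.

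For the direction $\aconc_s$-$B$-just $\Rightarrow$ $\aconc_c$-$B$-just: Fix a suffix $\pi'$ starting in $P_0$ and a transition $t\in\Tr^\bullet_{\neg B}$ with $\source(t)=P_0$. By $\aconc_s$-$B$-justness, some transition in $\pi'$ interferes with $t$ under $\aconc_s$; let $u$ be the \emph{first} such, occurring as the $k$-th transition of $\pi'$, and let $P_{k-1}$ be its source. By choice of $u$, all preceding transitions $v$ on $\pi'$ satisfy $t\aconc_s v$. Lemma~\ref{lem:closure static} then yields a $t'\in\Tr^{s\bullet}$ with $\source(t')=P_{k-1}$ and $t\equiv t'$; in particular $\npc(t)=\npc(t')$. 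Since both $\aconc_s$ and $\aconc_c$ are determined by $\npc$ of the first argument and $\afc$ of the second, we have $t\aconc_s u \Leftrightarrow t'\aconc_s u$ and $t\aconc_c u\Leftrightarrow t'\aconc_c u$. Now $\source(t')=\source(u)$, so Lemma~\ref{lem:static concurrency} applies, giving $t'\aconc_s u \Leftrightarrow t'\aconc_c u$. Chaining these equivalences, $t\naconc_s u$ entails $t\naconc_c u$, so $u$ also witnesses $\aconc_c$-$B$-justness for the pair $(\pi',t)$.

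The only real obstacle is the second direction, where one must avoid the direct comparison of $\aconc_s$ and $\aconc_c$ (which fails in general, cf.\ Example~\ref{ex:alternative static concurrency}). The trick is to choose the \emph{first} $\aconc_s$-interfering transition $u$ along $\pi'$, which allows Lemma~\ref{lem:closure static} to relocate $t$ to $\source(u)$, after which Lemma~\ref{lem:static concurrency} collapses the two concurrency relations. Both directions together give the biconditional.
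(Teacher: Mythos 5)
Your proof is correct and follows essentially the same route as the paper: the easy direction via $\aconc_s\subseteq\aconc_c$, and for the other direction picking the \emph{first} $\aconc_s$-interfering transition $u$, transporting $t$ to $\source(u)$ via Lemma~\ref{lem:closure static} (using $t\equiv t'$, hence $\npc(t)=\npc(t')$), and then collapsing $\aconc_s$ and $\aconc_c$ at a common source via Lemma~\ref{lem:static concurrency}. No gaps.
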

\begin{proof}
  ``Only if'' is immediate from \df{justness} and \pr{alternative static concurrency}.

  ``If'': Let $\pi$ be ${\aconc_s}$-$B$-just, let $\pi'$ be a suffix of $\pi$ starting in a state $P$,
  and let $t \mathbin\in \Tr^\bullet_{\neg B}$ with $\source(t)\mathbin=P$.
  By \df{justness} a transition $u$ with $t \naconc_s u$ occurs in $\pi'$.
  W.l.o.g.\ \we take $u$ to be the first such transition in $\pi'$.
  If suffices to show that $t \naconc_c u$.
  For all transitions $v$ in $\pi$ between $P$ and $\source(u)$ \we have $t \aconc_s v$.
  Hence, by \lem{closure static}, there is a $t'\in\Tr^{s\bullet}$ with $\source(t')=\source(u)$ and $t\equiv t'$.
  Moreover, $t \naconc_s u$ implies $t' \naconc_s u$, which implies $t' \naconc_c u$ by \lem{static concurrency},
  which implies $t \naconc_c u$.
\end{proof}

\noindent
The above proof shows that for the study of justness, \we need to know whether two transitions
$t\in\Tr^\bullet$ and $u\in\Tr$ are related by the static concurrency relation or not, only when
$\exists t'\in\Tr^{s\bullet}$ with $t\equiv t'$ and $\source(t')=\source(u)$. 
And restricted to such pairs $(t,u)$ the relations $\aconc_s$ and $\aconc_c$ coincide.

\subsection{A more abstract definition of static components}\label{sec:abstract}

The arguments $\iota\in{\it Arg}$ of unary operators occurring in the definition of a static component are essentially
redundant. Consider the following alternative definitions:

An \emph{abstract static component} is a string $\sigma\in\{|_L , |_R\}^*$.
The \emph{abstract static component} $c'(\varsigma)$ of a synchron $\varsigma$ is defined to be the
result of leaving out all argument $\backslash \RL$ and $[f]$ from $c(\varsigma)$.
For a derivation $t$ let
$\comp'(\chi) := \{c'(\varsigma)\mid\varsigma\in\varsigma(\chi)\}$,
$\npc'(\chi) := \{c'(\varsigma)\mid\varsigma\in n\varsigma(\chi)\}$ and
$\afc'(\chi) := \{c'(\varsigma)\mid\varsigma\in a\varsigma(\chi)\}$.

Analogously to Definitions~\ref{df:static concurrency} and~\ref{df:alternative static concurrency},
write $\chi \aconc'_s \zeta$ iff
$\forall \gamma\mathbin\in \npc'(\chi).\, \forall\delta\mathbin\in \afc'(\zeta).\, \gamma \smile \delta$,
and write $\chi \aconc'_c \zeta$ iff $\npc'(t) \cap \afc'(u) = \emptyset$.
Note that $c(\varsigma) \smile c(\upsilon)$ implies $c'(\varsigma) \smile c'(\upsilon)$
for all synchrons $\varsigma$ and $\upsilon$.
Likewise, $c'(\varsigma) \neq c'(\upsilon)$ implies $c(\varsigma) \neq c(\upsilon)$.
Hence ${\aconc_s} \subseteq {\aconc'_s} \subseteq {\aconc'_c} \subseteq {\aconc_c}$.
Thus, \pr{static justness} implies that
a path is ${\aconc_s}$-$B$-just iff it is ${\aconc'_s}$-$B$-just
iff it is ${\aconc'_c}$-$B$-just iff it is ${\aconc_c}$-$B$-just.

\section{Computational interpretations}\label{sec:computational}

The classical computational interpretation of CCS and related languages aligns with the (dynamic)
concurrency relation $\aconc$ of Sections~\ref{sec:LTSC} and~\ref{sec:dynamic}, rather than the
static concurrency relation $\aconc_s$ of \Sec{static}. This is illustrated by the transitions
$\chi_d$ and $\chi_e$ of \ex{concurrent}, which are generally regarded as concurrent.
This computational interpretation also aligns with the semantics of CCS in terms of event structures
and Petri nets, where concurrency is made more explicit \cite{Wi87a,GV87}.

Below, \we first define a sublanguage of CCS with broadcast communication and/or signals on which the
static and dynamic concurrency relations coincide---so it does not include the process $P$ of \ex{concurrent}.
Using this, \we propose an alternative computational interpretation of CCS and its extensions that
aligns with the static concurrency relation.

The underlying intuition is that each transition occurs at some location, concurrent transitions
occur at different locations, and a choice made through the $+$-operator necessarily needs to be
made locally, so that one can not
have two truly parallel actions $d$ and $e$ for which the execution of either one constitutes the
same choice. In \ex{concurrent}, the transitions $\chi_\tau$ and $\chi_d$ stem from opposite sides
of a $+$-operator, and therefore should be co-located. The same holds for 
$\chi_\tau$ and $\chi_e$, and consequently $\chi_d$ should be co-located with $\chi_e$ and not concurrent.
This intuition is loosely inspired by \cite{GGS08d,GGS13}. Similarly, a signal can be emitted only locally, and
for convenience \we treat recursion in the same vein, so that all dynamic operators can be applied to
sequential processes only.

\begin{definition}{dynamically sequential}
The \emph{dynamically sequential} fragment of CCS with broadcast communication and/or signals is
given by the context-free grammar
\[\begin{array}{l@{~::=~}l}
               S & {\bf 0} \mid \alpha.P \mid S+S \mid S\signals s \mid A  \mid S\backslash \RL \mid S[f]\\
               P & S       \mid P|Q \mid P\backslash\RL \mid P[f]
\end{array}\]
where $S$ is the sort of (initially) \emph{sequential processes}, and $P$ the sort of \emph{parallel processes}.\vspace{1pt}
Defining equations for agent identifiers should have the form \plat{$A \defis S$}.
\end{definition}
This language is crafted in such a way that in all synchrons a dynamic argument will never precede a
parallel composition argument $|_\Left$ or $|_\R$. As a consequence, \we obtain, for synchrons
$\varsigma$ and $\upsilon$, that $\varsigma \leadsto \upsilon \Leftrightarrow \varsigma=\upsilon$ and that\vspace{-1ex}
\[\varsigma \smile \upsilon   \Leftrightarrow
  \varsigma \smile_d \upsilon   \Leftrightarrow
  C(\varsigma) \smile_d C(\upsilon)   \Leftrightarrow
  c(\varsigma) \smile c(\upsilon)\;.
\]
Hence, on this fragment, the concurrency relations $\aconc$ and $\aconc_s$ coincide.

Next, \we introduce a new unary operator $sq$, that turns a parallel process into a sequential one.
Thus ``$\mid sq(P)$'' can be added to the line for $S ::=$ in the context-free grammar above.
Its operational rules are
\[\frac{P\goto{\alpha} P'}{sq(P) \goto{\alpha}P'}  \qquad \frac{P\goto{\kappa} P'}{sq(P) \goto{\kappa}sq(P')} \]
where $\kappa$ ranges over $b{:}\in\B{:}$ and $\bar s \in\bar\Sig\!$, so that it changes its argument as
little as possible. However, the argument $sq$ is now added to synchrons, counting as dynamic, and
\df{concurrent synchrons} of $\smile_d$ is upgraded by the requirement that the argument $sq$ does not occur
in $\sigma_1$, with $\smile$ redefined to equal $\smile_d$. Consequently, the only effect of $sq$
is that any concurrency between outgoing transitions of its arguments is removed. The process
$sq(d.R|e.S)$, for instance, behaves exactly like $d(R|e.S)+e(d.R|S)$.

On this extension of the dynamically sequential fragment of CCS with broadcast communication and/or
signals \we still have that $\varsigma \smile \upsilon   \Leftrightarrow
   \varsigma \smile_d \upsilon   \Leftrightarrow
  C(\varsigma) \smile_d C(\upsilon)   \Leftrightarrow
  c(\varsigma) \smile c(\upsilon)$, and consequently $\aconc$ and $\aconc_s$ coincide.

Finally, \we propose a language that has the same syntax as CCS, possibly extended with broadcast
communication and/or signals, but is technically a sublanguage of language proposed above,
because whenever the operators $+$ or $\mbox{}\signals s$ are applied to parallel arguments,
$P+Q$ is taken to be an abbreviation of $sq(P)+sq(Q)$, and $P\signals s$ of $sq(P)\signals s$.
Likewise, \plat{$A\defis P$} can be seen as an abbreviation of \plat{$A\defis sq(P)$}.
This language can be seen as an alternative computational interpretation of CCS (plus extensions)
that aligns with the static concurrency relation $\aconc_s$.

Interestingly, the operational Petri net semantics of \cite{DDM87} follows the static computational
interpretation above, whereas its modification in \cite{Old87,Old91} follows the classical (dynamic)
interpretation of concurrency.

\section{The dynamic and static accounts of justness agree}\label{sec:agree}

\We now show that the concurrency relations $\aconc$ and $\aconc_s$ (and thus also the variants
$\aconc'_s$, $\aconc_c$ and $\aconc'_c$ of $\aconc_s$ studied in Sections~\ref{sec:compatible}
and~\ref{sec:abstract}) give rise to the same concept of justness.

Each derivation $t\in\Tr$ has only finitely many synchrons, and each synchron contains finitely many
dynamic arguments. Let $d(t)$ be the sum, over $\varsigma\in n\varsigma(t)$, of the number of
dynamic arguments in $\varsigma$.

\begin{theorem}{static justness}
A path is ${\aconc}$-$B$-just iff it is ${\aconc_s}$-$B$-just.
\end{theorem}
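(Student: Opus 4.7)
The plan is to prove the two directions separately. The forward implication ($\aconc$-$B$-just $\Rightarrow$ $\aconc_s$-$B$-just) is immediate from \pr{static concurrency}: since ${\aconc_s}\subseteq{\aconc}$, we have ${\naconc}\subseteq{\naconc_s}$, so any transition witnessing dynamic interference with $t$ also witnesses static interference.

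For the converse I would argue by strong induction on $d(t)$. Assume for contradiction that $\pi$ is $\aconc_s$-$B$-just but not $\aconc$-$B$-just, and pick a suffix $\pi'$ and $t\in\Tr^\bullet_{\neg B}$ enabled at the start of $\pi'$ such that no transition $v$ of $\pi'$ satisfies $t\naconc v$ and $d(t)$ is minimal among such counterexamples. By $\aconc_s$-$B$-justness, let $u$ be the first transition of $\pi'$ with $t\naconc_s u$; every prior $v$ satisfies $t\aconc_s v$ and, by \pr{static concurrency}, $t\aconc v$. By \cor{closureSquig} pick $t^*\in\Tr^\bullet$ at $\source(u)$ with $t\leadsto t^*$ and $\ell(t^*)=\ell(t)$. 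The key observation is that such $t^*$ can be chosen with $n\varsigma(t^*)=n\varsigma(t)$ as sets of strings: iterating along the prefix, it suffices to show that whenever the current descendant $t'$ of $t$ satisfies $t'\aconc_s v$, the $\MVAt v$ operation leaves each $\varsigma\in n\varsigma(t')$ unchanged. For such $\varsigma$ and its closest $\upsilon^c\in a\varsigma(v)$, \lem{direct concurrency} and \lem{components concurrent} together force the common factoring $\varsigma=\sigma_0|_D\varsigma_2,\;\upsilon^c=\sigma_0|_E\upsilon'$ to have $\sigma_0$ entirely static---otherwise $c(\varsigma)=c(\upsilon^c)$, contradicting $t'\aconc_s v$---so $\varsigma\MVAt v=\varsigma$. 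Stability of $\aconc_s$ under $\leadsto$, which holds because extending a static component in $\mathit{Arg}^*$ preserves $\smile$ with any fixed target, supplies the induction step.

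Because $n\varsigma(t^*)=n\varsigma(t)$, any witness of $t\naconc_s u$ is also a witness of $t^*\naconc_s u$, and \pr{inherited} yields $t^*\aconc u$. Now $\source(t^*)=\source(u)$ lets \lem{direct concurrency} apply: for a witness $(\varsigma^*,\upsilon^*)$ with $c(\varsigma^*)\nsmile c(\upsilon^*)$, \lem{components concurrent} forces $c(\varsigma^*)=c(\upsilon^*)$, and combined with $\varsigma^*\smile_d\upsilon^*$ this forces the corresponding factoring $\sigma_0^*$ to contain at least one dynamic argument (else $c(\varsigma^*)$ and $c(\upsilon^*)$ would split oppositely at the $|$ in $\sigma_0^*|_{D/E}$). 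The $\MVAt u$ operation on $\varsigma^*$ uses the factoring with the closest $\upsilon^c\in a\varsigma(u)$, whose common prefix with $\varsigma^*$ is at least as long as that with $\upsilon^*$; since both are prefixes of $\varsigma^*$, $\sigma_0^c$ extends $\sigma_0^*$ and therefore also contains a dynamic argument. Hence $\varsigma^*\MVAt u$ has strictly fewer dynamic arguments than $\varsigma^*$, while every other $\varsigma'\in n\varsigma(t^*)$ satisfies $d(\varsigma'\MVAt u)\leq d(\varsigma')$, giving $d(t_1)<d(t)$ for $t_1:=t^*\MVAt u$ at $\target(u)$. With $\ell(t_1)=\ell(t)$ and $t\aconc v$ for every $v$ in the sub-suffix of $\pi'$ from $\target(u)$ (lifted to $t_1\aconc v$ by \pr{inherited}), $t_1$ is a strictly smaller counterexample, contradicting minimality.

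The step I expect to be trickiest is the last factoring analysis, specifically that $\MVAt u$ inherits a dynamic argument from the witness factoring even when the witness $\upsilon^*$ is not the closest active synchron of $u$ to $\varsigma^*$. The resolution is that ``closest'' means largest common prefix with $\varsigma^*$, which forces the closest factoring to be a prefix-extension of the witness one.
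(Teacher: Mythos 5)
Your proof is correct and follows essentially the same route as the paper's: the easy direction via \pr{static concurrency}, and for the converse an induction on $d(t)$ (phrased as a minimal counterexample) that takes the first statically interfering transition $u$, replaces $t$ by an $\equiv$-equivalent derivation at $\source(u)$ (the paper's \lem{closure static}, which you re-derive via the $\MVAt$ analysis), uses the coincidence of static components together with \lem{direct concurrency} to locate a dynamic argument in the shared prefix, and closes with \pr{inherited} and \pr{inherited label}; you even treat the ``closest active synchron'' subtlety that the paper passes over quickly. The only informality is writing $t_1:=t^*\MVAt u$ as if $\MVAt$ acted on derivations: the existence of a derivation at $\target(u)$ with $n\varsigma(t_1)=\{\varsigma\MVAt u\mid\varsigma\in n\varsigma(t^*)\}$ is exactly what the proof of \pr{closure PA} supplies, which is also how the paper obtains it.
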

\begin{proof}
  ``Only if'' is immediate from \df{justness} and \pr{static concurrency}.

  ``If'': Let $\pi$ be ${\aconc_s}$-$B$-just, let $\pi'$ be a suffix of $\pi$ starting in a state $P$,
  and let $t \mathbin\in \Tr^\bullet_{\neg B}$ with $\source(t)\mathbin=P$.
  By induction on $d(t)$ \we find a transition $u$ in $\pi'$ such that $t \naconc u$.

  By \df{justness} a transition $u'$ with $t \naconc_s u'$ occurs in $\pi'$.
  W.l.o.g.\ \we take $u'\in\Tr$ to be the first such transition in $\pi'$.
  By \lem{closure static} there is a derivation $t'\in\Tr^{s\bullet}$ with $\source(t')=source(u')$ and $t \equiv t'$.
  So $t' \naconc_s u'$. Hence there are $\varsigma\in n\varsigma(t')$ and $\upsilon\in a\varsigma(u')$
  with $c(\varsigma) = c(\upsilon)$ by \lem{static concurrency}.
  In case $t' \naconc u'$ then $t \naconc u'$ and \we are done. So suppose $t'\aconc u'$.
  Then $\varsigma \conc \upsilon$, so $\varsigma \conc_d \upsilon$ by Lemmas~\ref{lem:synchrons} and~\ref{lem:direct concurrency}.
  Thus $\varsigma$ has the form $\sigma_1 |_D \varsigma_2$ and $\upsilon= \sigma_1 |_E \upsilon_2$ with $\{D,E\}=\{L,R\}$.
  Since $c(\varsigma) = c(\upsilon)$, a dynamic operator must occur in $\sigma_1$.
  So by \df{after} $\varsigma\MVAt \upsilon$ contains fewer dynamic arguments than $\varsigma$,
  and hence $\varsigma\MVAt u'$ contains fewer dynamic arguments than $\varsigma$.
  (Here we use that $u'\in \Tr^\circ$, since $a\varsigma(u')\neq\emptyset$.)
  Moreover, for $\varsigma'\neq\varsigma$,  $\varsigma'\MVAt u'$ contains at most as many dynamic arguments as $\varsigma'$.

  The proof of \pr{closure PA}
  finds a $t''\in\Tr^{s\bullet}$ with $\source(t'')=\target(u')$ and $t' \leadsto t''$,
  such that $n\varsigma(t'')=\{\varsigma\MVAt u' \mid \varsigma \in n\varsigma(t')\}$.
  It follows that $d(t'')<d(t')$. By \pr{inherited label}, $\ell(t'')=\ell(t')=\ell(t)$, and so
  $t''\in\Tr^\bullet_{\neg B}$.
  By the induction hypothesis \we find a transition $u$ occurring in $\pi'$ past the (first) occurrence of
  $\source(t'')$, such that $t'' \naconc u$. Now $t \naconc u$, using \pr{inherited}.
\end{proof}

\section{Justness is feasible even with infinitary choice}\label{sec:feasibility2}

A straightforward induction of the length of derivations shows that for each process $P\in\cT$ in any of
the languages of \Sec{PA} there are only countably many derivations $t\in \Tr^\bullet$ with $\source(t)=P$.
Consequently, \thm{feasibility} says that, for any set $B\subseteq Act$ with $\Rec\subseteq B$,
$B$-justness is feasible.
However, the standard version of CCS \cite{Mi90b} features the infinitary choice operator $\sum_{i\in I}P_i$
for any index set $I$, which was omitted in \Sec{PA} (and many of the references). Its operational
rule is
\[\displaystyle\frac{P_j \goto{\alpha} P_j'}{\sum_{i\in I}P_i \goto{\alpha} P_j'}\makebox[0pt][l]{~~($j\in I$).}\]
The work reported here can be straightforwardly extended with this infinitary choice operator.
Instead of $+_\Left$ and $+_\R$ it gives rise to dynamic arguments $\sum^j$ appearing in synchrons.
But then \we have processes $P$ with uncountably many outgoing transitions, so that \thm{feasibility}
no longer applies. Nevertheless, $B$-justness is feasible, as follows from \cor{feasibility}, in
conjunction with \thm{static justness}.
For if $t\equiv t'$ (defined in \Sec{static}), then $t\aconc_s u \Leftrightarrow t'\aconc_s u$ for all
$u \in \Tr^\bullet$. As an $\equiv$-equivalence class is completely determined by a finite set of static
components, and the set $\C$ of static components is countable, so is the collection of 
$\equiv$-equivalence classes of transitions, and thus the set of equivalence classes used in \cor{feasibility}.

\section{An inductive characterisation of the concurrency relations \texorpdfstring{$\aconc_d$ and $\aconc_s$}{}}\label{sec:inductive}

As a variant of \df{noninterfering derivations} in \Sec{LTSC}, write $\chi \aconc_d \zeta$
if $\forall \varsigma\mathbin\in n\varsigma(\chi).\, \forall\upsilon\mathbin\in a\varsigma(\zeta).\, \varsigma \smile_d \upsilon$.
By Lemmas~\ref{lem:synchrons} and~\ref{lem:direct concurrency}, when
$\source(t)=\source(u)$ then $\chi \aconc \zeta \Leftrightarrow \chi \aconc_d \zeta$.

The idea of an asymmetric concurrency relation $\aconc$ is not new here.
A similar relation, here called $\acGH$, appeared in \cite{GH15a}.
That relation was defined only between derivations $t$ and $u$ with $\source(t)\mathbin=\source(u)$.
Here \we show that $\acGH$ agrees with our $\aconc$, in the sense that
\[t \acGH u \qquad\mbox{iff}\qquad t \aconc u ~\wedge~ \source(t)=\source(u)\]
for all $t\in\Tr^{\bullet}$ and $u\in\Tr$.
\cite{GH15a} dealt with ABC only, so there $\Tr^{s\bullet} = \Tr^\bullet$.
In order to prove this, \we give an inductive
characterisation of $\aconc_d$. This effort also yields  an inductive characterisation of
$\aconc_s$, which will be used in \Sec{coinductive} to provide a coinductive characterisation of
justness, in the spirit of the definitions of justness from \cite{TR13,GH15a,EPTCS255.2,Bou18}.

\begin{proposition}{concurrency}
The relation $\aconc_d$ is the smallest relation ${\aconc_x}\subseteq \Tr^{s\bullet}\times\Tr$ such that
\begin{itemise}
\item $\chi|P\aconc_x Q|\zeta$ ~~and~~ $P|\chi\aconc_x\zeta|Q$,
\item $\chi|v \aconc_x Q|\zeta$ ~~and~~ $v|\chi \aconc_x \zeta|Q$ ~~if $\ell(\chi)\in\B!$,
\item $\chi\aconc_x\zeta$ ~~implies~~
$\chi{+}P \mathbin{\aconc_x} \zeta{+}R$,~~
$P{+}\chi \mathbin{\aconc_x} R{+}\zeta$,~~
$\chi{|}P \mathbin{\aconc_x} \zeta{|}Q$ ~~and~~
$P{|}\chi \mathbin{\aconc_x} Q{|}\zeta$,
\item $\chi\aconc_x\zeta$ ~~implies~~
$\chi{|}P \mathbin{\aconc_x} \zeta{|}w$,~~
$\chi{|}v \mathbin{\aconc_x} \zeta{|}Q$,~~
$P{|}\chi \mathbin{\aconc_x} w{|}\zeta$ ~~and~~
$v{|}\chi \mathbin{\aconc_x} Q{|}\zeta$,
\item $\chi\aconc_x\zeta$ ~~implies~~
$\chi{|}v \aconc_x \zeta{|}w$ ~~and~~
$v{|}\chi \aconc_x w{|}\zeta$ ~~if $\ell(\chi)\in\B!$,
\item $\chi\aconc_x\zeta \wedge v \aconc_x w$ ~~implies~~
$\chi{|}v \aconc_x \zeta{|}w$, ~~and~~
\item $\chi\aconc_x\zeta$ ~~implies~~
$\chi\backslash \RL \mathbin{\aconc_x} \zeta\backslash \RL$,~~
$\chi[f] \mathbin{\aconc_x} \zeta[f]$,~~
$A{:}\chi \mathbin{\aconc_x} A{:}\zeta$ ~~and~~
$\chi\signals r \mathbin{\aconc_x} \zeta\signals r$\\
  for any $L\subseteq\Ch \dcup \Sig$, relabelling $f$, $A\mathbin\in\A$ and $r \in\Sig\!$, and
\item $\chi\aconc_x\xi$~~ for any derivation $\xi$ of an {\signal} transition.
\end{itemise}
for arbitrary $\chi$, $\zeta$, $v$, $w$, $P$, $Q$ and $R$,
where $\chi$ and $v$ are derivations of {\signal}s or transitions,
$\zeta$ and $w$ are derivations of non-{\signal} transitions,
$P,R\mathbin\in\cT$ are expressions, and $Q$ is either an expression or the derivation of an {\signal}
or {\signal} transition---provided that the composed derivations exist.
\end{proposition}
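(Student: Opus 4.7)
My plan is to prove the two inclusions separately: (i) $\aconc_d$ satisfies every listed closure rule, so $\aconc_d$ contains the smallest such relation $\aconc_{\min}$; and (ii) every pair in $\aconc_d$ can be derived from those rules, so $\aconc_d \subseteq \aconc_{\min}$.

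For (i), I would verify each clause by unfolding the definitions of $n\varsigma$ and $a\varsigma$ on the composed derivation and checking the direct-concurrency requirement pairwise. Two elementary observations will do most of the work. First, for any unary argument $\iota$ (static or dynamic), $\iota\varsigma \smile_d \iota\upsilon$ iff $\varsigma \smile_d \upsilon$; this immediately handles the clauses for $+$, $\backslash L$, $[f]$, $A{:}$ and $\mbox{}\signals r$ as well as the ``congruence'' clauses for $|$ where both sides use the same $|_D$. Second, when one synchron starts with $|_L$ and another with $|_R$, direct concurrency holds vacuously with empty common prefix; this handles rules~1 and~2 (where the necessary synchrons of $\chi|P$ or $\chi|v$ all start with $|_L$ while the active synchrons of $Q|\zeta$ all start with $|_R$), as well as the ``mixed'' parts of rules 4--6. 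Rule 8 is vacuous because an indicator transition has $a\varsigma(\xi)=\emptyset$ (its synchrons are of the form $\sigma(P\sig s)$ or $\sigma(b{:})$, which are passive). The subtle book-keeping concerns rules 2, 5 and 6, where I must recall that in a broadcast synchronisation $\chi|v$ with $\ell(\chi)\in\B!$, only the emitter contributes a necessary synchron, so $n\varsigma(\chi|v)=|_L n\varsigma(\chi)$; in a handshake synchronisation (rule 6) both sides are necessary.

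For (ii), I induct on $|\chi|+|\zeta|$ (the sum of the sizes of the derivation trees). Fix $\aconc_x$ closed under the rules and assume $\chi \aconc_d \zeta$. If $\zeta \notin \Tr^\circ$, i.e.\ $a\varsigma(\zeta)=\emptyset$, rule 8 gives $\chi \aconc_x \zeta$ directly. Otherwise, I case-split on the outermost rule used to form $\chi$. In each case, the necessary synchrons of $\chi$ share a common initial argument (e.g., all start with $+_L$, or all with $|_L$, or with a unary $\iota$), and by the direct-concurrency constraint every active synchron of $\zeta$ must admit a compatible common prefix; this in turn forces the outermost rule of $\zeta$ (when $\zeta$ is not an indicator). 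So, for example, if $\chi=\chi_1+Q$, then every $\upsilon \in a\varsigma(\zeta)$ must start with $+_L$, forcing $\zeta=\zeta_1+R$; the induction hypothesis gives $\chi_1 \aconc_x \zeta_1$, and the $+$ clause of rule 3 yields $\chi \aconc_x \zeta$. The unary-operator and $+_R$ cases are analogous. For $\chi=\chi_1|P$ (rule \myref{Par-l}), the active synchrons of $\zeta$ starting with $|_R$ pose no constraint, while those starting with $|_L$ force the left component of $\zeta$ to be concurrent with $\chi_1$; the outermost form of $\zeta$ must then be $\zeta_1|Q$, $Q|\zeta_1$, or $\zeta_1|w$, and the corresponding clause of rule 1, 3, or 4 applies (using the IH on the left where needed).

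The main obstacle will be the case where $\chi=\chi_1|v$ is itself a synchronisation. Here I must split on whether it is a handshake sync (so $n\varsigma(\chi)=|_L n\varsigma(\chi_1)\cup |_R n\varsigma(v)$) or a broadcast sync with $\ell(\chi_1)\in\B!$ (so $n\varsigma(\chi)=|_L n\varsigma(\chi_1)$), and symmetrically on $\zeta$. In the handshake-handshake subcase, the direct-concurrency constraint on synchrons starting with $|_L$ yields $\chi_1 \aconc_d \zeta_1$ (restricted appropriately), and symmetrically $v \aconc_d w$ on the $|_R$ side; the IH and rule 6 assemble $\chi|v \aconc_x \zeta|w$. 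In the broadcast-broadcast subcase only one premise is needed, supplied by rule 5; mixed handshake/broadcast subcases are handled similarly by rules 2, 4 or 5 depending on which side carries the $\B!$ label. The bookkeeping is tedious but entirely determined by \lem{direct concurrency} and the structural constraint that the outermost argument of each necessary synchron of $\chi$ must be matched by the outermost argument of each active synchron of $\zeta$ under the appropriate $\{L,R\}$-swap.
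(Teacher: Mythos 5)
Your proposal is correct and follows essentially the same route as the paper's proof: a straightforward check that $\aconc_d$ is closed under all the listed rules, followed by an induction (the paper uses structural induction on $\chi$, you use the size of the derivations, an inessential difference) in which the indicator-transition case is dispatched by the last rule and otherwise the common initial argument of the necessary synchrons of $\chi$ forces the outermost shape of $\zeta$, so that the appropriate clause applies after stripping prefixes, with the same broadcast/handshake case distinctions for synchronisations.
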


\begin{proofNoBox}
It is straightforward to check that $\aconc_d$ satisfies all the properties listed in
\pr{concurrency}, so the smallest relation $\aconc_x$ is contained in $\aconc_d$.
For the other direction \we prove by structural induction on $\chi'$ that if $\chi'\aconc_d\zeta'$ then
$\chi'\aconc_x\zeta'$ can be derived by the rules of \pr{concurrency}.
\begin{itemise}
\item If $a\varsigma(\zeta')\mathbin=\emptyset$, i.e., $\zeta'$ is the derivation of an {\signal} transition, then
  $\chi' \aconc_d \zeta'\!$. Correspondingly, $\chi' \aconc_x \zeta'\!$, by the last requirement of \pr{concurrency}.
  So below assume that $a\varsigma(\zeta')\neq\emptyset$.
\item If $\chi'$ has the form $\shar{\alpha}P$ or $P\sig{s}$,
  then $\chi'\aconc_d\zeta'$ for no $\zeta'$, so there is nothing to show.
\item Let $\chi'=\chi[f]$. Then all synchrons of $\chi'$ start with $[f]$, so for $\chi'\aconc_d\zeta'$
  to hold, all active synchrons of $\zeta'$ must start with $[f]$ as well. In fact $\zeta'$ must have the form
  $\zeta[f]$ such that $\chi\aconc_d \zeta$. By induction $\chi\aconc_x \zeta$ and hence
  $\chi'\aconc_x \zeta'$ by the seventh requirement of \pr{concurrency}.
\item The cases $\chi'\mathbin=\chi\backslash\RL$, $\chi'\mathbin=\A{:}\chi$, $\chi'\mathbin=\chi\signals r$,
  $\chi'\mathbin=\chi{+}P$ and $\chi'\mathbin=P{+}\chi$
  proceed in the same way.
\item Let $\chi'=\chi|P$. \We make a further case distinction on $\zeta'$.
  \begin{itemise}
  \item Let $\zeta'=Q|\zeta$. Then always $\chi'\mathbin{\aconc_d}\zeta'\!$,
    and indeed $\chi'\mathbin{\aconc_x}\zeta'$ by the first requirement on $\aconc_x$.
  \item Let $\zeta'=\zeta|Q$. Then all synchrons of $\chi'$ and all active synchrons
    of $\zeta'$ start with $|_\Left$,
    and stripping those off shows that $\chi\aconc_d \zeta$. By induction $\chi\aconc_x \zeta$ and hence
    $\chi'\aconc_x \zeta'$ by the third requirement.
  \item Let $\zeta'=\zeta|w$. Then all synchrons of $\chi'$ and some of $\zeta'$ start with $|_\Left$,
    and stripping those off shows that $\chi\aconc_d \zeta$.  By induction $\chi\aconc_x \zeta$ and hence
    $\chi'\aconc_x \zeta'$ by the fourth requirement on $\aconc_x$.
  \item If $u$ has any other shape, then $\chi'\naconc_d\zeta'$, so there is nothing to show.
  \end{itemise}
\item The case $\chi'=P|\chi$ proceeds symmetrically.
\item Let $\chi'=\chi|v$. \We make a further case distinction on $\zeta'$.
  \begin{itemise}
  \item Let $\zeta'=Q|\zeta$. First consider the case that $\ell(\chi)\in\B!$. Then
    $\ell(v)\in\B?\dcup\B{:}$ and all necessary synchrons of $\chi'$ start with $|_\Left$.
    Since all active synchrons of $\zeta'$ start with $|_\R$, \we have $\chi'\aconc_d\zeta'$.
    Accordingly, $\chi'\aconc_x\zeta'$ by the second requirement on $\aconc_x$.

    In case $\ell(\chi)\notin\B!$, some necessary synchrons of $\chi'$ and all active synchrons of $\zeta$
    start with $|_\R$, and stripping those off shows that $v\aconc_d \zeta$.
    By induction $v\aconc_x \zeta$ and hence $\chi'\aconc_x\zeta'$ by the fourth requirement
    of \pr{concurrency} (reversing the roles of $\chi$ and $v$).
  \item The case $\zeta'=\zeta|Q$ proceeds symmetrically.
  \item Let $\zeta'=\zeta|w$. First consider the case that $\ell(\chi)\in\B!$. Then
    $\ell(v)\in\B?\dcup\B{:}$, and all necessary synchrons of $\chi'$ start with $|_\Left$.
    Stripping those off shows that $\chi\aconc_d \zeta$. By induction $\chi\aconc_x \zeta$ and hence
    $\chi'\aconc_x \zeta'$ by the fifth requirement on $\aconc_x$.

    The case that $\ell(v)\in\B!$ proceeds symmetrically.

    Otherwise, \we obtain $\chi\aconc_d \zeta$ and  $v\aconc_d w$.  By induction $\chi \aconc_x \zeta$ and
    $v\aconc_x w$ and hence $\chi'\aconc_x\zeta'$ by the sixth requirement of \pr{concurrency}.
  \item If $u$ has any other shape, then $\chi'\naconc_d\zeta'$, so there is nothing to show.
  \hfill$\Box$
  \end{itemise}
\end{itemise}
\end{proofNoBox}
The main reason for defining $\aconc$ as a relation of type $\Tr^{s\bullet}\times\Tr$ instead of
merely $\Tr^{\bullet}\times\Tr$, which is all we need in \df{justness}, is that in order to derive
$t' \aconc u'$ with $t'\in\Tr^\bullet$ from the rules of \pr{concurrency}, \we sometimes need a
judgement $t \aconc u$ with $t\in\Tr^{s\bullet}{\setminus}\Tr^\bullet$.\footnote{In the original
  version of this paper \we took ${\aconc}\subseteq\Tr^{\bullet}\times\Tr$; as a consequence,
  Propositions~\ref{pr:concurrency} and~\ref{pr:static concurrency inductive} were not correct.}

The relation $\acGH$ was defined in \cite[Definition~C.4]{GH15a} for ABC\@.
Its definition is almost the same as the one of $\aconc_x$ in \pr{concurrency}, but simplified because there
are no {\signal}s or {\signal} transitions, and adding the requirement that $\source(t)=\source(u)$.
\begin{corollary}{concurrency}
Let $t\in\Tr^\bullet$ and $u\in\Tr$. Then $t \acGH u$ iff $t \aconc u \wedge \source(t)=\source(u)$.
\end{corollary}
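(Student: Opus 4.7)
The plan is to reduce the statement to an equivalence between two inductively defined relations. First I would invoke the observation made just before Proposition~\ref{pr:concurrency}: when $\source(t)\mathbin=\source(u)$ we have $t\aconc u \Leftrightarrow t\aconc_d u$ (by Lemmas~\ref{lem:synchrons} and~\ref{lem:direct concurrency}). Hence it suffices to establish
\[ t \acGH u \iff t\aconc_d u \text{ and } \source(t)=\source(u), \]
for $t\in\Tr^\bullet$ and $u\in\Tr$ in the ABC fragment, where $\Tr^{s\bullet}=\Tr^\bullet$ and the clauses of Proposition~\ref{pr:concurrency} involving $\chi\signals r$, $P\sig s$ and derivations of {\signal} transitions are vacuous.

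Next I would compare the inductive definition of $\acGH$ from \cite[Definition~C.4]{GH15a} with the characterisation of $\aconc_d$ in Proposition~\ref{pr:concurrency} restricted to ABC. The two lists of clauses are structurally identical; the only formal difference is that $\acGH$ imposes the side condition $\source(\chi)=\source(\zeta)$ on every judgement. The forward direction $t\acGH u \Rightarrow t\aconc_d u \wedge \source(t)=\source(u)$ is then a routine induction on the $\acGH$-derivation: each $\acGH$-clause is an instance of the corresponding $\aconc_d$-clause supplied by Proposition~\ref{pr:concurrency}, and source-equality is preserved throughout because in ABC each operator determines the source of the composed derivation from its subderivations (for example, $\source(\chi|P)=\source(\chi)|P$).

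For the converse I would induct on the $\aconc_d$-derivation given by Proposition~\ref{pr:concurrency}, exploiting the source-equality hypothesis at each step to specialise the clause to its $\acGH$-counterpart. The only clauses needing attention are the asymmetric ones such as $\chi|P \aconc_x Q|\zeta$, which in Proposition~\ref{pr:concurrency} admit arbitrary contexts $P$ and $Q$: under the hypothesis $\source(\chi|P)=\source(Q|\zeta)$ these contexts are forced to be $P=\source(\zeta)$ and $Q=\source(\chi)$, yielding exactly the instance prescribed by $\acGH$. The two-sided clauses such as $\chi|v \aconc_x \zeta|w$ carry premises $\chi\aconc_x\zeta$ and $v\aconc_x w$; the form of parallel composition propagates the outer source-equality to each premise, so the induction hypothesis delivers $\chi\acGH\zeta$ and $v\acGH w$, and we reassemble using the corresponding $\acGH$-clause.

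I expect the main obstacle to be purely administrative: going through the twelve-odd clauses of Proposition~\ref{pr:concurrency} and matching each with its counterpart in \cite[Definition~C.4]{GH15a}, while checking that source-equality is preserved downwards in every rule. No new idea beyond the observation that source-equality is the unique extra constraint distinguishing $\acGH$ from $\aconc_d$ should be required.
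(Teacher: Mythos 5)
Your proposal is correct and follows essentially the same route as the paper: the paper's proof is just ``a trivial structural induction on $t$'', relying, exactly as you do, on the remark that $\aconc$ and $\aconc_d$ coincide for derivations with equal sources and on the clause-by-clause match between the inductive characterisation in Proposition~\ref{pr:concurrency} and the definition of $\acGH$ in \cite{GH15a}, which differs only by the source-equality side condition (and the absence of the signal clauses, vacuous for ABC). Your extra bookkeeping—checking that source-equality propagates to the premises of each clause—is precisely the content the paper leaves implicit.
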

\begin{proof}
A trivial structural induction on $t$.
\end{proof}
In spite of this agreement between $\acGH$ and $\aconc$, the former is not suitable as an
alternative for the latter for the purposes of this paper, because our formalisation of justness
depends on judgements $t \naconc u$ for transitions $t$ and $u$ with $\source(t)\neq\source(u)$.

\begin{proposition}{static concurrency inductive}
The relation $\aconc_s$ from \Sec{static} is the smallest relation ${\aconc_x}\subseteq \Tr^{s\bullet}\times\Tr$ such that
\begin{itemise}
\item $\chi|P\aconc_x Q|\zeta$ ~~and~~ $P|\chi\aconc_x\zeta|Q$,
\item $\chi|v \aconc_x Q|\zeta$ ~~and~~ $v|\chi \aconc_x \zeta|Q$ ~~if $\ell(\chi)\in\B!$,
\item $\chi\aconc_x\zeta$ ~~implies~~
$\chi{|}P \mathbin{\aconc_x} \zeta{|}Q$ ~~and~~
$P{|}\chi \mathbin{\aconc_x} Q{|}\zeta$,
\item  $\chi\aconc_x\zeta$ ~~implies~~
$\chi{|}P \mathbin{\aconc_x} \zeta{|}w$,~~
$\chi{|}v \mathbin{\aconc_x} \zeta{|}Q$,~~
$P{|}\chi \mathbin{\aconc_x} w{|}\zeta$ ~~and~~
$v{|}\chi \mathbin{\aconc_x} Q{|}\zeta$,
\item$\chi\aconc_x\zeta$ ~~implies~~
$\chi{|}v \aconc_x \zeta{|}w$ ~~and~~
$v{|}\chi \aconc_x w{|}\zeta$ ~~if $\ell(\chi)\in\B!$,
\item $\chi\aconc_x\zeta \wedge v \aconc_x w$ ~~implies~~
$\chi{|}v \aconc_x \zeta{|}w$, ~~and~~
\item $\chi\aconc_x\zeta$ ~~implies~~
$\chi\backslash \RL \mathbin{\aconc_x} \zeta\backslash \RL$ ~~and~~
$\chi[f] \mathbin{\aconc_x} \zeta[f]$ ~~for any $L\subseteq\Ch\dcup \Sig$ and relabelling $f$, and
\item $\chi\aconc_x\xi$~~ for any derivation $\xi$ of an {\signal} transition.
\end{itemise}
for arbitrary $\chi$, $\zeta$, $v$, $w$, $P$, $Q$ and $R$,
where $\chi$ and $v$ are derivations of {\signal}s or transitions,
$\zeta$ and $w$ are derivations of non-{\signal} transitions,
$P,R\mathbin\in\cT$ are expressions, and $Q$ is either an expression or the derivation of an {\signal}
or {\signal} transition---provided that the composed derivations exist.
\end{proposition}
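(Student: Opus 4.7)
The proof follows the same two-direction pattern as \pr{concurrency}. First I verify that $\aconc_s$ itself satisfies all eight listed rules, so that the smallest relation $\aconc_x$ satisfying them is contained in $\aconc_s$; then, by structural induction on $\chi$, I show that whenever $\chi \aconc_s \zeta$ holds the judgement $\chi \aconc_x \zeta$ is derivable from the rules. The key observation is that the inductive characterisation differs from the one for $\aconc_d$ in \pr{concurrency} only in the absence of clauses for the summand operators $+_L$, $+_R$ and for the dynamic unary operators $A{:}$ and $\signals r$. This is exactly because these four arguments are dynamic and hence stripped away in the formation of static components, so they contribute nothing to $\aconc_s$.

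For the soundness direction, each rule is verified routinely by unfolding \df{static concurrency}, using \lem{static leadsto} under which $\smile$ on static components coincides with $\smile_d$. The parallel-composition rules (bullets 1--6) either produce pairs of static components that differ at an initial position carrying $|_L$ on one side and $|_R$ on the other, or reduce by stripping a common $|_L$- or $|_R$-prefix to the inductively assumed concurrency of the constituents. The static-unary rule (bullet 7) is handled similarly by stripping the common $\backslash \RL$ or $[f]$ prefix, and the indicator rule (bullet 8) is immediate since $a\varsigma(\xi) = \emptyset$.

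For completeness, I would proceed by structural induction on $\chi$ with a nested case analysis on the outer shape of $\zeta$, following the template of the proof of \pr{concurrency}. The essentially new cases are those where $\chi$ begins with a dynamic argument: $\chi = \chi_1 + P$, $\chi = P + \chi_1$, $\chi = A{:}\chi_1$, or $\chi = \chi_1 \signals r$. In each such case every necessary synchron of $\chi$ begins with a dynamic argument, so its static component is $\varepsilon$ and $\npc(\chi) = \{\varepsilon\}$. Since $\varepsilon \not\smile \gamma$ for every $\gamma \in \C$ by \lem{components concurrent}, the assumption $\chi \aconc_s \zeta$ forces $\afc(\zeta) = \emptyset$, so $\zeta$ must be an indicator transition and bullet 8 applies. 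The static-unary cases $\chi = \chi_1 \backslash \RL$ and $\chi = \chi_1[f]$ proceed as in \pr{concurrency}: the leading static argument of the synchrons of $\chi$ forces $\zeta$ to have matching outer shape, and the induction hypothesis supplies $\chi_1 \aconc_s \zeta_1$.

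The main obstacle, as in \pr{concurrency}, will lie in the parallel-composition subcases $\chi = \chi_1 | P$, $\chi = \chi_1 | v$, and their left-right symmetric partners, crossed with the various possible outer shapes of $\zeta$ (namely $Q|\zeta_1$, $\zeta_1|Q$, $\zeta_1|w$, or something else). The subcase $\chi = \chi_1 | v$ requires the most care, because which of $\chi_1$ and $v$ contributes necessary synchrons depends on whether $\ell(\chi_1) \in \B!$, and this side condition is precisely what selects among bullets~2, 5, and 6 of the characterisation. For each shape of $\zeta$ one reads off the implied premises $\chi_1 \aconc_s \zeta_1$ and, where applicable, $v \aconc_s w$, applies the induction hypothesis, and assembles the conclusion via the appropriate rule. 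This portion of the argument is a strict simplification of the corresponding portion of the proof of \pr{concurrency}, since the absence of summand and dynamic-unary bullets means there are strictly fewer subcases to juggle.
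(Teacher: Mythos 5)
Your proposal is correct and takes essentially the same route as the paper, whose entire proof of this proposition is ``a trivial adaptation of the proof of the previous proposition'': the same soundness check followed by structural induction on $\chi$, with the genuinely new cases—$\chi$ headed by a dynamic argument ($+_\Left$, $+_\R$, $A{:}$, ${}\signals r$), whose necessary static component is $\varepsilon$—collapsing to the indicator-transition clause exactly as you describe. One small repair: the fact that $\varepsilon\not\smile\gamma$ for all $\gamma\in\C$ is not really what \lem{components concurrent} provides (that lemma concerns distinct components of a common process, plus irreflexivity); it follows instead directly from the definitions of $\leadsto$ and $\smile_d$ on components, since $\varepsilon$ has no $\leadsto$-predecessor other than itself and is not of the form $\sigma_1|_D\gamma_2$.
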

\begin{proof}
  A trivial adaptation of the proof of the previous proposition.
\end{proof}

\section{A coinductive characterisation of justness}\label{sec:coinductive}

In this section \we show that the $\aconc$-based concept of justness defined in this paper coincides
with a coinductively defined concept of justness, for CCS and ABC originating from \cite{GH15a}\weg{,
and for CCSS agreeing with the one defined in \cite{EPTCS255.2}}.

To obtain agreement between our $\aconc$-based and coinductive definitions for CCSS, \we first
extend the $\aconc$-based concept of $B$-justness to the case where also CCSS signal emissions from
$\bar\Sig$ may appear in $B$.  Since this extension is unsuitable as a completeness criterion, and
hence should not be confused with the proper concept of justness, \we have not treated this extension
from the start of the paper, and give it another name: $B$-sigjustness. This extension is needed
because in the coinductive definition, some cases of proper $B$-justness depend on cases of
$C$-sigjustness, where $C$ involves signal emissions.

\subsection{An extension of the notion of justness dealing with emissions}

\begin{definition}{sigjustness}
  A path $\pi$ in an LTSC is \emph{$B$-sigjust}, for $\B?\subseteq B\subseteq Act\dcup\bar\Sig$,
  if for each suffix $\pi'$ of $\pi$, and for each derivation $t \in \Tr^{s\bullet}_{\neg B}$
  enabled in the starting state of $\pi'$, a transition $u$ with $t \naconc u$ occurs in $\pi'$.
\end{definition}
$B$-sigjust corresponds with what is called \emph{$\bar B\cap\Sig$-signalling and $B\setminus\bar\Sig$-just} in \cite{EPTCS255.2}.
Here \we save double work by collapsing the similar concepts \emph{signalling} and \emph{just} from  \cite{EPTCS255.2}.
Note that a path is $B$-just in the sense of \df{justness}, where $\B?\subseteq B\subseteq Act$, iff
it is $B\dcup\bar\Sig$-sigjust according to \df{sigjustness}.

\pr{static justness} and \thm{static justness} extend from justness to sigjustness, with the same proofs.
However, sigjustness is unsuitable as a completeness criterion, because it fails the requirement of feasibility.
\begin{example}{unfeasible}
The process ${\bf 0}\signals s$ has only one path $\pi$, and $\pi$ has no transitions.
This path is not $\emptyset$-sigjust, since a transition ${\bf 0}\sig{s}$ is enabled in its only state.
So $\pi$ can not be extended into an $\emptyset$-sigjust path.

Changing the definition of a path to allow {\signal} transitions does not help;
this allows an infinite path $\pi'$ containing the transition ${\bf 0}\sig{s}$ infinitely often. But
as ${\bf 0}\sig{s} \aconc {\bf 0}\sig{s}$, also $\pi'$ fails to be $\emptyset$-sigjust.
\end{example}

\subsection{A coinductive definition of justness}

To state our coinductive definition of justness, \we need to define the notion of the decomposition of a
path starting from a process with a leading static operator.
\newcommand{\startingfrom}{of }

Any derivation $t\in\Tr$ of a transition with $\source(t)=P|Q$ has the shape
\begin{itemise}
\item $u|Q$, with $\target(t)=\target(u)|Q$,
\item $u|v$, with $\target(t)=\target(u)|\target(v)$,
\item or $P|v$, with $\target(t)=P|\target(v)$.
\end{itemise}
Let a path \emph{of} a process $P$ be a path as in \df{path} starting with $P$.
Now the \emph{decomposition} of a path $\pi$ \startingfrom $P|Q$ into paths $\pi_1$ and $\pi_2$ \startingfrom $P$ and $Q$, respectively, is obtained by
concatenating all left-projections of the states and transitions of $\pi$ into a path \startingfrom $P$ and all right-projections into a
path \startingfrom $Q$---notation $\pi \Rrightarrow \pi_1 | \pi_2$. Here it could be that $\pi$ is infinite, yet either $\pi_1$ or $\pi_2$ (but not both) are finite.

Likewise, $t\in\Tr$ with $\source(t)=P[f]$ has the shape $u[f]$ with $\target(t)=\target(u)[f]$.
The \emph{decomposition} $\pi'$ of a path $\pi$ \startingfrom $P[f]$ is the path obtained by leaving
out the outermost $[f]$ of all states and transitions in $\pi$, notation $\pi\Rrightarrow\pi'[f]$.
In the same way one defines the decomposition of a path \startingfrom $P\backslash c$.

The following co-inductive definition of the family $B$-justness of predicates on paths, with one
family member for each choice of a set $B$ of blocking actions, stems from \cite[Appendix E]{GH15a}.%
\footnote{\label{precise}To be precise, the notion of $Y$-justness from \cite{GH15a} translates to
  $Y\dcup\B?$-justness as occurs here. Furthermore, \cite{GH15a} restricts to the case that
  $Y\subseteq\Ch\dcup\bar\Ch$. This makes sense, as in the default computational interpretation
  broadcast actions $b!$ and internal actions $\tau$ can not be blocked by the environment.
  The increased generality occurring in this paper is merely because it comes with no extra costs,
  and in fact saves us here and there from listing restrictions.\label{YvsB}}
To interpret the word ``largest'', one can see justness equivalently as single
predicate on $\Pow(Act)\times\Pi$, where $\Pi$ denotes the set of all paths. To see that there
actually exists a largest such predicate, check that the class of all such predicates is closed
under arbitrary unions.

\begin{definition}{just path}\rm
\emph{$B$-justness}, for $\B?\subseteq B\subseteq Act$, is the largest family of predicates on the paths in the
LTS of ABC such that
\begin{itemise}
\item a finite $B$-just path ends in a state that admits actions from $B$ only (cf.~Footnote~\ref{admit} on Page~\pageref{admit});
\item a $B$-just path of a process $P|Q$ can be decomposed into a $C$-just path of $P$ and a $D$-just
  path of $Q$, for some $C,D\subseteq B$ such that $\tau\in B \vee C\mathord\cap \bar{D}=\emptyset$---here
  $\bar D:=\{\bar{c} \mid c\mathbin\in D\}$;
\item a $B$-just path of $P\backslash L$ can be decomposed into a
  $B\cup L \cup \bar L$-just path of $P$;
\item a $B$-just path of $P[f]$ can be decomposed into an $f^{-1}(B)$-just path of $P$;
\item and each suffix of a $B$-just path is $B$-just.
\end{itemise}
To make this definition apply to CCSS, as well as CCS, ABC and ABCd,
read ``sigjust'' for ``just'' throughout, and allow $\B?\subseteq B\subseteq Act\dcup\bar\Sig$;
a state $P$ admits an action $\bar{s}\in \bar\Sig$ iff \plat{$P\goto{\bar s}P$} or $P\sigar{s}$ holds.
\end{definition}
Intuitively, justness is a completeness criterion, telling which paths can actually occur as runs of
the represented system. A path is $B$-just if it can occur in an environment that may block the actions in $B$.\linebreak[3]
In this light, the first, third, fourth and fifth requirements above are intuitively plausible.
The second requirement first of all says that if $\pi \Rrightarrow \pi_1 | \pi_2$ and $\pi$ can
occur in the environment that may block the actions in $B$, then $\pi_1$ and $\pi_2$ must be able to
occur in such an environment as well, or in environments blocking less.
The last clause in this requirement prevents a $C$-just path of $P$ and a $D$-just path of
$Q$ to compose into a $B$-just path of $P|Q$ when $C$ contains an action $c$ and $D$
the complementary action~$\bar c$ (except when $\tau\in B$). The reason is that no
environment (except one that can block $\tau$-actions) can block both actions for
their respective components, as nothing can prevent them from synchronising with each other.

The fifth requirement helps characterising processes of the form $b+(A|b)$ and $a.(A|b)$, with  \plat{$A \stackrel{{\it def}}{=} a.A$}.
Here, the first transition `gets rid' of the choice and of the leading action $a$, respectively, 
and this requirement reduces the justness of paths of such processes to their suffixes.

\begin{example}{Cataline and Alice again}
To illustrate \df{just path} consider the unique infinite path of the process Alice$|$Cataline of
\ex{Cataline and Alice} in which the transition $t$ does not occur. Taking the empty set of blocking
actions, \we ask whether this path is $\emptyset$-just. If it were, then by the second requirement of
\df{just path} the projection of this path on the process Cataline would need to be $\emptyset$-just
as well. This is the path $1$ (without any transitions) in \ex{Cataline}. It is not $\emptyset$-just
by the first requirement of \df{just path}, because its last state 1 admits a transition.
\end{example}

\weg{
\begin{proposition}{signalling}
A path is $B$-sigjust according to \df{just path} iff it is $\bar B\cap\Sig$-signalling and
$B\setminus\bar\Sig$-just as defined in \cite{EPTCS255.2}.\footnote{To be precise, the notion of
  path in \cite{EPTCS255.2} differs from the one used here. \pr{signalling} holds when
  interpreting the definitions of signalling and justness in  \cite{EPTCS255.2} as applying to our
  notion of path. This discrepancy will be addressed in \Sec{abstract paths}.}
\end{proposition}
\begin{proof}
``If'': Define $B$-sigjust$_{\cite{EPTCS255.2}}$ as $\bar B\cap\Sig$-signalling and
  $B\setminus\bar\Sig$-just according to \cite{EPTCS255.2}. Then, using that decompositions of (our)
  paths are unique, trivially $B$-sigjustness$_{\cite{EPTCS255.2}}$ satisfies the five requirements
  of \df{just path}, and hence is included in $B$-sigjustness according to \df{just path}.

``Only if'': Define a path to be $X$-signalling$'$ iff it is $B$-sigjust for some $B$
  with $X=\bar B\cap\Sig$. Then trivially $X$-signalling$'$ satisfies the five requirements of the
  coinductive definition of signalling paths in \cite{EPTCS255.2}, and hence is included in
  $X$-signalling according to \cite{EPTCS255.2}.

  Likewise, define a path to be $X$-just$'$ iff it is $B$-sigjust for some $B$
  with $X=B\setminus\bar\Sig$. Then trivially $X$-just$'$ satisfies the five requirements of the
  coinductive definition of just paths in \cite{EPTCS255.2}, and hence is included in
  $X$-justness according to \cite{EPTCS255.2}.
\end{proof}
}

\subsection{Agreement between the concurrency-based and coinductive definitions of justness}

\We now establish that the concept of justness from \df{just path} agrees with the concept of
justness defined earlier in this paper.
The below applies to CCSS by reading $\Tr^{s\bullet}$ for $\Tr^\bullet$ and ``sigjust'' for ``just''.

\begin{theorem}{coinductive}
A path is $\aconc_s$-$B$-just iff it is $B$-just in the sense of \df{just path}.
\end{theorem}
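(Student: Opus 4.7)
The plan is to prove the two implications separately, using the inductive characterisation of $\aconc_s$ from \pr{static concurrency inductive} together with the following useful observation: when the source $P$ of a derivation $t$ carries no leading static operator, every necessary synchron of $t$ starts with a dynamic argument or is atomic, so $\npc(t) = \{\varepsilon\}$; consequently $t \naconc_s u$ for every transition $u$ enabled at $P$, since $\varepsilon \not\smile \varepsilon$.

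For the forward direction, since coinductive $B$-justness is by definition the largest family of predicates satisfying the five clauses of \df{just path}, it suffices to show that the family of $\aconc_s$-$B$-just paths itself satisfies those clauses. Clauses~1 and~5 follow immediately from \df{justness}. Clauses~3 and~4 reduce to the restriction and relabelling rules of \pr{static concurrency inductive}: stripping $\backslash L$ or $[f]$ preserves the $\naconc_s$ judgement in both directions, so an interference on $\pi$ produces an interference on the decomposed path. The non-trivial case is clause~2. Given $\pi \Rrightarrow \pi_1 | \pi_2$ of $P_1 | P_2$, I would take $C$ (respectively $D$) to consist of $B$ together with all labels of transitions enabled somewhere along $\pi_1$ (respectively $\pi_2$) that are never interfered with on that side. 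I would then verify: (a) $C, D \subseteq B$, because if $t_1$ on $\pi_1$ is never interfered with on $\pi_1$ and $\ell(t_1) \notin B$, the lift $t_1 | Q_i$ would, by $\aconc_s$-$B$-justness of $\pi$, admit an interference in $\pi$, but pure-right moves leave $|_L$-prefixed necessary components of $t_1 | Q_i$ unaffected, so such an interference would project to one on $\pi_1$, a contradiction; and (b) $\tau \in B \vee C \cap \bar D = \emptyset$, because if $c \in C \setminus B$ and $\bar c \in D \setminus B$ were witnessed by $t_1$ at $P_i$ and $t_2$ at $Q_j$ with $i \leq j$, then \pr{closure static} yields a successor $t_1'$ of $t_1$ enabled at $P_j$, the synchronisation $t_1' | t_2$ at $P_j | Q_j$ has label $\tau \notin B$, and any interference in $\pi$ projects to one on $\pi_1$ past $P_j$ or $\pi_2$ past $Q_j$, contradicting the choice of $t_1$ or $t_2$.

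For the backward direction, given a coinductively $B$-just $\pi$, a suffix $\pi'$ starting at $P$, and $t \in \Tr^\bullet_{\neg B}$ enabled at $P$, I would proceed by induction on $d_s(t) := \sum_{\varsigma \in n\varsigma(t)} |c(\varsigma)|$. If $P$ carries no leading static operator, then $d_s(t) = 0$; coinductive clause~1 forbids $\pi'$ from being trivial (since $P$ admits $t \in \Tr^\bullet_{\neg B}$), so $\pi'$ has a first transition $t''$, and by the opening observation $t \naconc_s t''$ automatically. If $P$ has a leading static operator---$P_0 \backslash L$, $P_0[f]$, or $P_1 | P_2$---the corresponding clause of \df{just path} yields a decomposition with strictly smaller $d_s$ on the component transition(s); the induction hypothesis supplies the interference, which lifts back via \pr{static concurrency inductive}. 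In the parallel case, even when $t = t_1 | t_2$ is a synchronisation of labels $c$ and $\bar c$, the side-condition $\tau \in B \vee C \cap \bar D = \emptyset$ ensures that $\ell(t_1) \notin C$ or $\ell(t_2) \notin D$, so the induction hypothesis applies to at least one projection.

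The chief obstacle is the parallel case of the forward direction, specifically the construction of $C, D \subseteq B$ and the verification of the side-condition $\tau \in B \vee C \cap \bar D = \emptyset$. It relies crucially on \pr{closure static} to transport a would-be witnessing transition on one side forward along that side's path until it can be synchronised at a shared state with a would-be witnessing transition on the other side, turning the pair into an enabled non-blocking $\tau$-synchronisation whose mandatory interference in $\pi$ must manifest on one of the two sides, contradicting the premise that neither side has an interference.
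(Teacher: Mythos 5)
Your overall route is the same as the paper's: for the forward direction you verify that $\aconc_s$-$B$-justness satisfies the five clauses of \df{just path}, projecting interferences through \pr{static concurrency inductive}; for the backward direction you induct on (a measure equivalent to) the structure of the derivation $t$, with the same base case $\npc(t)=\{\varepsilon\}$ and the same use of the side condition in the parallel case. The one step that fails as written is your construction of $C$ and $D$ in the parallel clause. You take $C$ (and $D$) to be $B$ together with the never-interfered labels; your step (a) then shows those extra labels lie in $B$, so $C=D=B$. But with $C=D=B$ the side condition $\tau\in B \vee C\cap\bar D=\emptyset$ is in general simply false for sets you must nevertheless exhibit: take $B=\B?\cup\{c,\bar c\}$ with $\tau\notin B$ and the transition-less path of $c.{\bf 0}\,|\,{\bf 0}$, which is $\aconc_s$-$B$-just; then $c\in C\cap\bar D$ and $\tau\notin B$. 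Moreover, your step (b) only argues about labels in $C\setminus B$ and $D\setminus B$, which after (a) are empty, so as written it establishes nothing about $C\cap\bar D$; the offending pairs $c,\bar c$ that both lie in $B$ itself have no witnessing transitions and indeed cannot be excluded.

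The repair is exactly the paper's choice: take $C:=\B?\cup C'$, where $C'$ consists of the labels of transitions $t\in\Tr^\bullet$ enabled along $\pi_1$ that are never interfered with on $\pi_1$ (i.e.\ the least $X$ such that $\pi_1$ is $\aconc_s$-$X$-just), and symmetrically $D$. Then (a) gives $C,D\subseteq B$ by the lifting argument you describe, and every pair $c\in C$, $\bar c\in D$ with $\tau\notin B$ is witnessed by concrete $t_1,t_2$, so your transport-and-synchronise argument---move $t_1$ forward along $\pi_1$ with \lem{closure static} to the later of the two shared states, form the non-blocking $\tau$-synchronisation, and use \pr{static concurrency inductive} to project its mandatory interference back onto one side---now covers all of $C\cap\bar D$ and closes the case; this is precisely the paper's argument. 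The remaining clauses and your backward induction (including handling the broadcast synchronisations $b!$, which you only gesture at) match the paper's proof.
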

\begin{proofNoBox}
``Only if'': It suffices to show that $\aconc_s$-$B$-justness satisfies the five requirements of
  \df{just path}.
\begin{itemise}
\item Let $\pi$ be a $\aconc_s$-$B$-just path. It follows immediately from \df{justness} that its
  last state admits no transitions $t \in \Tr^\bullet_{\neg B}$.
\item Let $\pi$ be a $\aconc_s$-$B$-just path of a process $P|Q$.
  There is a unique decomposition $\pi \Rrightarrow \pi_1 | \pi_2$ of $\pi$ into a path $\pi_1$ of $P$ and a path $\pi_2$ of $Q$.
  Let $C'$ be the set of actions $\alpha$ such that there is a $t \in \Tr^\bullet$ with
  $s\mathbin{:=}\source(t)\mathbin\in\pi_1$ and $\ell(t)\mathbin=\alpha$, but no transition $u$ with $t \naconc_s u$ occurs in $\pi_1$ past the occurrence of $s$.\linebreak[3]
  Take $C:=C'\dcup\B?$.
  Then $\pi_1$ is $\aconc_s$-$C$-just. In fact, $C$ is the smallest set $\B?\subseteq X\subseteq Act$ such that $\pi_1$ is $X$-just.
  Likewise, let $D$ be the smallest set such that $\B?\subseteq D\subseteq Act$ and $\pi_2$ is $D$-just.
  It remains to be shown that $C,D\subseteq B$ and $\tau\in B \vee C\mathord\cap \bar{D}=\emptyset$.

  Let $\alpha\in C'$. Then there is a state $P'|Q'$ in $\pi$ and a $t \in \Tr^\bullet$ with
  $P'=\source(t)\mathbin\in\pi_1$ and $\ell(t)\mathbin=\alpha$, but no transition $u$ with
  $t \naconc_s u$ occurs in $\pi_1$ past the occurrence of $P'$. \We claim that $\alpha\in B$.
  \begin{itemise}
  \item Let $\alpha\in \Sig \dcup \bar\Sig \dcup \Ch \dcup \bar \Ch \dcup \{\tau\}\!$.
  Then $t|Q' \mathbin\in \Tr^\bullet$ with $P'|Q'\mathbin=\source(t|Q')\mathbin\in\pi$ and $\ell(t|Q')\mathbin=\alpha$.\linebreak[4]
  Suppose, towards a contradiction, that $\alpha\notin B$. Then, using the $\aconc_s$-$B$-justness of $\pi$, a transition
  $t^\dagger$ must occur in $\pi$ past the occurrence of $P'|Q'$, such that $t|Q' \naconc_s t^\dagger$. 
  Since $t^\dagger$ occurs in $\pi$, $\source(t^\dagger)$ has the form $P''|Q''$.
  By \pr{static concurrency inductive}, $t^\dagger$ must have the form $u|v$ or $u|Q''$, with $t \naconc_s u$.
  Hence a transition $u$ with $t \naconc_s u$ occurs in $\pi$ past the occurrence of $P'$---a contradiction.
  So $\alpha\in B$.
  \item Let $\alpha=b!$ with $b\in \B!$. Then either $t|Q' \in \Tr^\bullet$ with $\ell(t|Q')\mathbin=\alpha$, or
    $t|v \in \Tr^\bullet$ with $\ell(v)=b?$ or $\ell(v)=b{:}$ and $\ell(t|v)\mathbin=b!=\alpha$.
    In both cases the argument proceeds as above.
  \end{itemise}
  It follows that $C\subseteq B$. By symmetry also $D\subseteq B$.

  Let $c\in C$ and $\bar c \in D$. Then there are states $P_1|Q_1$ and $P_2|Q_2$ in $\pi$ and $t_1,t_2 \in \Tr^\bullet$ with
  \begin{itemise}
  \item $P_1\mathbin=\source(t_1)\mathbin\in\pi_1$ and $\ell(t_1)\mathbin=c$, but no $u$ with
  $t_1 \naconc_s u$ occurs in $\pi_1$ past $P_1$, and
  \item $Q_2\mathbin=\source(t_2)\mathbin\in\pi_2$ and $\ell(t_2)\mathbin=\bar c$, but no $w$ with
  $t_2 \naconc_s w$ occurs in $\pi_2$ past $Q_2$.
  \end{itemise}
  Assume that either $P_1|Q_1 = P_2|Q_2$ or the state $P_2|Q_2$ occurs in $\pi$ past the state
  $P_1|Q_1$---the other case will follow by symmetry. By \lem{closure static} there is a
  $t_1'\in\Tr^\bullet$ with $\source(t_1')=P_2$ and $t_1 \equiv t'_1$. So $\ell(t'_1)=c$.
  Now $t_1'|t_2\in\Tr^\bullet$ and $\source(t_1'|t_2)=P_2|Q_2$. Moreover, $\ell(t_1'|t_2)=\tau$.
  Assume that $\tau\notin B$. Then, using the $\aconc_s$-$B$-justness of $\pi$, a transition
  $t^\dagger$ must occur in $\pi$ past the occurrence of $P_2|Q_2$, such that $t'_1|t_2 \naconc_s t^\dagger$. 
  By \pr{static concurrency inductive} $t^\dagger$ must have the form $P'|v$ or $u|v$ or $u|Q'$ with $t'_1 \naconc_s u$ or $t_2\naconc_s v$.
  Again \we obtain a contradiction.
\item Let $\pi$ be a $\aconc_s$-$B$-just path of a process $P\backslash L$.
  Let $\pi'$ be the decomposition of $\pi$. \We have to show that $\pi'$ is
  $\aconc_s$-$(B\cup\{c,\bar c \in Act \mid c \in L\})$-just.
  So assume $t \in \Tr^\bullet$ with $\ell(t)\notin B\cup\{c,\bar c \in Act \mid c \in L\}$,
  and $P':=\source(t)\in\pi'$. Then $P'\backslash L = \source(t\backslash L)\in\pi$ and
  $\ell(t\backslash L)\notin B$. By the $\aconc_s$-$B$-justness of $\pi$, a transition
  $t^\dagger$ must occur in $\pi$ past the occurrence of $P'\backslash L$, such that $t\backslash L \naconc_s t^\dagger$. 
  Now $t^\dagger$ must have the form $u\backslash L$, and by \pr{static concurrency inductive} $t \naconc_s u$.
  So a transition $u$ occurs in $\pi'$ past the occurrence of $P'$, such that $t \naconc_s u$.
\item The case that $\pi$ is a $\aconc_s$-$B$-just path of a process $P[f]$ goes likewise.
\item Finally, it follows directly from \df{justness} that each suffix of a $\aconc_s$-$B$-just path is $\aconc_s$-$B$-just.
\end{itemise}
``If'': Let $t \in \Tr^\bullet_{\neg B}$ with $s:=\source(t)\in\pi$ for a path $\pi$ that is
  $B$-just in the sense of \df{just path}. \We have to show that a
  transition $t^\dagger$ with $t \naconc_s t^\dagger$ occurs in $\pi$ past the occurrence of $s$.
  Using the last requirement of \df{just path} \we may assume, without loss of generality, that $s$
  is the first state of $\pi$.
  \We proceed by structural induction on $t$.
  \begin{itemise}
  \item Let $t$ have the form $\shar{\alpha}P$ or $P\sig r$ or $P+u$ or $u+Q$ or $A{:}u$ or
    $t\signals r$.  Then $\npc(t)=\{\epsilon\}$.
    Using the first requirement of \df{just path}, $s$ cannot be the last state of $\pi$, for it
    admits a transition $t$ with $\ell(t)\notin B$.
    Since $s$ has the form $\alpha.P$ or $P+Q$ or $A$ or $P\signals r$, the first transition $v$ of $\pi$ satisfies
    $\afc(v)=\{\epsilon\}$, and thus $t \naconc_s v$.
  \item Let $t$ have the form $u|v$. Then $s$ has the form $P|Q$, with $P:=\source(u)$ and $Q:=\source(v)$.
    By the second requirement of \df{just path}, $\pi \Rrightarrow \pi_1 | \pi_2$, with $\pi_1$ a
    $C$-just path of $P$ and $\pi_2$ a $D$-just path of $Q$, for some $C,D\subseteq B$ such that $\tau\in B \vee C\mathord\cap \bar{D}=\emptyset$.
    \begin{itemise}
    \item  Let $\ell(u)=c\in\Sig \dcup \bar\Sig \dcup \Ch\dcup\bar\Ch$. Then $\ell(v)=\bar c$.
      Since $t\in \Tr^\bullet_{\neg B}$, $\tau=\ell(t)\notin B$.
      So either $c\notin C$ or $\bar c \notin D$---by symmetry assume the former.%
    \item Let $\ell(u)=\ell(t)=b!$ with $b\in\B$. (The case $\ell(v)=b!$ follows by symmetry.) Then $b!\notin B \supseteq C$.
     \end{itemise}
    So in all relevant cases $u\in\Tr^\bullet_{\neg C}$.
      By induction, a transition $u^\dagger$ with $u \naconc_s u^\dagger$ occurs in $\pi_1$.
      Consequently, a transition $t^\dagger$ of the form $u^\dagger|Q'$ or $u^\dagger|v^\dagger$ occurs in $\pi$.
      By \pr{static concurrency inductive} $t\naconc t^\dagger$.
  \item Let $t$ have the form $u|Q$. Then $s$ has the form $P|Q$, with $P:=\source(u)$.
    By the second requirement of \df{just path}, $\pi \Rrightarrow \pi_1 | \pi_2$, with $\pi_1$ a
    $C$-just path of $P$ and $\pi_2$ a $D$-just path of $Q$, for some $C,D\subseteq B$. Since $\ell(u)=\ell(t)\notin B \supseteq C$, $u\in\Tr^\bullet_{\neg C}$.
    The argument proceeds as above.
  \item The case that $t$ has the form $P|v$ follows by symmetry.
  \item Let $t$ have the form $u\backslash L$. Then $s$ has the form $P\backslash L$, with $P:=\source(u)$.
    Moreover $\ell(u)=\ell(t)\notin B\cup\{c,\bar c \in Act \mid c \in L\}$.
    By the third requirement of \df{just path}, the decomposition $\pi'$ of $\pi$ is $(B\cup\{c,\bar c \in Act \mid c \in L\})$-just.
    So by induction, a transition $u^\dagger$ with $u \naconc_s u^\dagger$ occurs in $\pi'$.
    Consequently, a transition $t^\dagger = u^\dagger\backslash L$ occurs in $\pi$.
    By \pr{static concurrency inductive} $t\naconc t^\dagger$.
  \item Let $t$ have the form $u[f]$. Then $s$ has the form $P[f]$, with $P:=\source(u)$.
    Moreover $\ell(u)\notin f^{-1}(B)$.
    By the fourth requirement of \df{just path}, the decomposition $\pi'$ of $\pi$ is $f^{-1}(B)$-just.
    So by induction, a transition $u^\dagger$ with $u \naconc_s u^\dagger$ occurs in $\pi'$.
    Consequently, a transition $t^\dagger = u^\dagger[f]$ occurs in $\pi$.
    By \pr{static concurrency inductive} $t\naconc t^\dagger$.
    \hfill$\Box$
  \end{itemise}
\end{proofNoBox}

\section{Justness on abstract paths}\label{sec:abstract paths}
\advance\textheight 12pt

By \df{path}, a path is an alternating sequence of states and non-{\signal} transitions.
These non-{\signal} transitions are, in the LTS for CCS and its extensions constructed in \Sec{LTSC},
actually \emph{derivations} of transitions $P\ar{\alpha}Q$ according to the structural operational
semantics of these languages. Now define an \emph{abstract path} to be an alternating sequence of
states and actual transitions $P\ar{\alpha}Q$.
\begin{definition}{abstract path}
Let~~$\widehat{\cdot}$~~be the function that takes a derivation $t\in\Tr^\circ$ into the transition
$P\ar{\alpha}Q$ derived by $t$. Given a path $\pi=s_0\,t_1\,s_1\,t_2\,s_2\cdots$,
let $\widehat\pi := s_0\,\widehat t_1\,s_1\,\widehat t_2\,s_2\cdots$. An \emph{abstract path} is
such an object $\widehat\pi$.
\end{definition}
The concept of justness naturally lifts from paths to abstract paths:
\begin{definition}{abstract justness}
An abstract path $\rho$ is $B$-just iff there exists a $B$-just (concrete) path $\pi$ such that $\rho=\widehat\pi$.
\end{definition}
This definition fits with the intuition that a path is just iff it models a run that can actually occur.

The following variant of \df{justness} defines $\aconc_s$-$B$-justness directly on abstract paths.

\begin{definition}{abstract justness via concurrency}
  An abstract path $\rho$ is \emph{$\aconc_s$-$B$-just}, for $\B?\subseteq B\subseteq Act$,
  if for each derivation $t \in \Tr^\bullet_{\neg B}$ with $P:=\source(t)\in\rho$, 
  there is a $u\in\Tr$ with $t \naconc_s u$ such that $\widehat u$ occurs
  in $\rho$ past the occurrence of $P$.
\end{definition}
\We proceed to show that Definitions~\ref{df:abstract justness}
and~\ref{df:abstract justness via concurrency} agree.

\begin{proposition}{abstract justness via concurrency}
An abstract path is $\aconc_s$-$B$-just in the sense of \df{abstract justness via concurrency} iff it is
$B$-just in the sense of \df{abstract justness}, i.e., iff it has the form $\widehat\pi$ for a
concrete path $\pi$ that is $\aconc_s$-$B$-just in the sense of \df{parametrised justness}.
\end{proposition}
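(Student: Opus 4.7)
My plan is to treat the two directions separately. The ``if'' direction is essentially immediate from the definitions; the ``only if'' direction is the substantive one, and requires constructing a concrete lifting of $\rho$ in which the derivation chosen for each abstract transition ``witnesses'' the right enabled transitions.

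For the \textbf{if} direction, suppose $\widehat\pi = \rho$ with $\pi$ concretely $\aconc_s$-$B$-just. Given $t \in \Tr^\bullet_{\neg B}$ with $P := \source(t)$ appearing in $\rho$, fix an occurrence of $P$; it corresponds to an occurrence of $P$ in $\pi$. Applying \df{justness} to the suffix of $\pi$ starting at this occurrence, some derivation $u$ with $t \naconc_s u$ occurs there, so $\widehat u$ occurs in the corresponding suffix of $\rho$ past~$P$.

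For the \textbf{only if} direction, it is \emph{not} enough to pick an arbitrary concrete lift of $\rho$: the CCS process $A|(\bar c + \tau)$ with $A \defis c.A$ admits a $\tau$-transition with two derivations having different $\afc$'s ($\{|_\Left,|_\R\}$ via synchronisation versus $\{|_\R\}$ via the pure right-hand $\tau$), so a wrong choice of lift can break concrete justness. My plan is to build the concrete lift by choosing, at each abstract transition in $\rho$, a derivation whose $\afc$ covers the components required to witness every enabled non-blocking $t$ whose only abstract witness (as provided by \df{abstract justness via concurrency}) lies at this position. Abstract justness already exhibits a valid derivation $u_{(i,t)}$ for each such pair $(i,t)$, so for any single pair such a lift exists; for multiple pairs meeting at the same position one takes a ``maximal'' derivation, which is immediate in CCS (the most synchronised derivation has the largest $\afc$ under inclusion) and essentially forced in ABC/ABCd by the source and target states of the abstract transition.

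The \textbf{main obstacle} is CCSS, where distinct derivations of the same abstract $\tau$-transition arising from signal synchronisation may have incomparable $\afc$'s, so no single derivation dominates the others. My fallback is to route through \thm{coinductive}: interpret \df{just path} existentially on abstract paths (``decomposes'' becoming ``there is a choice of derivation at each composite transition under which the path decomposes''), establish equivalence of this abstract coinductive definition with \df{abstract justness via concurrency} by mimicking the structural induction in the proof of \thm{coinductive} using \pr{static concurrency inductive}, and then recursively extract a concrete lift that satisfies \df{just path} directly. A final application of \thm{coinductive} yields the desired concretely $\aconc_s$-$B$-just $\pi$ with $\widehat\pi = \rho$.
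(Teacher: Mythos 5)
Your ``if'' direction is the same one-line argument as the paper's and is fine. The ``only if'' direction, however, has a genuine gap, and the difficulty is mislocated: the failure of a pointwise ``maximal derivation'' already occurs in plain CCS, not only in CCSS. Take $A \defis a.A$ and $P = A|A$. The abstract transition $P \goto{a} P$ has exactly two derivations $t_\Left$ and $t_\R$ (left resp.\ right component acts), with $\afc(t_\Left)=\{|_\Left\}$ and $\afc(t_\R)=\{|_\R\}$; no derivation of this abstract transition interferes with both. The abstract path repeating $P \goto{a} P$ is $\aconc_s$-$\emptyset$-just (every later abstract transition equals $\widehat{t_\Left}=\widehat{t_\R}$), yet the lift using $t_\R$ everywhere is unjust, and any scheme that fixes witnesses in advance and then asks, per position of $\rho$, for a single derivation covering all obligations assigned to that position can be forced to cover $t_\Left$ and $t_\R$ simultaneously, which is impossible. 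So the choice of derivations cannot be resolved position by position; it must be scheduled along the whole path (here: alternate $t_\Left$ and $t_\R$). This is exactly what the paper's proof does: it builds the lift incrementally with the matrix/priority-queue technique of \thm{feasibility}, at each step serving the oldest pending enabled $t\in\Tr^\bullet_{\neg B}$, transporting it to the current end state via \lem{closure static} (an $\equiv$-equivalent $t'$), invoking \df{abstract justness via concurrency} \emph{at that state} to obtain a fresh witness $u$ occurring later in $\rho$, and then following $\rho$ up to that occurrence while installing precisely the derivation $u$; the invariant plus the enumeration guarantee every obligation is eventually discharged.

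Your fallback does not close this gap. Its middle step --- ``recursively extract a concrete lift that satisfies \df{just path} directly'' from an abstract path satisfying an existential reading of \df{just path} --- is precisely where the whole difficulty sits, and no argument is given for it. Note that in the paper the corresponding statement (the ``only if'' of \thm{abstract justness}) is derived \emph{from} the very proposition you are proving, so you cannot appeal to it without circularity; and since \df{just path} is coinductive, there is no well-founded recursion to drive an ``extraction'': its parallel clause only asserts the existence of \emph{some} decomposition, and choosing derivations and decompositions consistently across all suffixes is again the scheduling problem above (the $A|A$ example applies verbatim, since both lifts of each step induce valid decompositions). Either carry out the paper's explicit priority-queue construction, or supply the extraction argument in full; as it stands the proof is incomplete.
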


\begin{proof}
  ``If'': Let $\pi$ be a concrete path that is $\aconc_s$-$B$-just.
  Then by \df{abstract justness via concurrency} $\widehat\pi$ is $\aconc_s$-$B$-just.

  ``Only if'': Let $\rho$ be an abstract path that is $\aconc_s$-$B$-just in the sense of
  \df{abstract justness via concurrency}.
  \We present an algorithm for constructing a concrete path $\pi$ that is $\aconc_s$-$B$-just in the sense of
  \df{parametrised justness}, such that $\rho=\widehat\pi$.
  Loosely following the idea behind the proof of \thm{feasibility}, \we build an $\IN\times\IN$-matrix with a
  column for each of the states $P_0, P_1, P_2, \dots$ of $\rho$.
  The column $P_i$ lists the transitions from $\Tr^\bullet_{\neg B}$ enabled in state $P_i$, leaving
  empty most slots if there are only finitely many.\footnote{In case \we have an infinite choice
    operator in the language, the set of transitions $t$ with $\source(t)=P_i$ is not necessarily
    countable. Then \we work with $\equiv$-classes of transitions, just as in \Sec{feasibility2}.}
  Incrementally, \we construct prefixes $\pi_{i}$ of $\pi$.
  As an invariant, \we maintain that $\widehat\pi_i$ is the prefix of $\rho$ with $i$ transitions.
  So $\pi_i$ ends in state $P_i$.
  An entry in the matrix is either empty, filled in with a transition, or crossed out.
  Let $f:\IN\rightarrow \IN\times\IN$ be an enumeration of the entries in this matrix.
  
  At the beginning, take $\pi_0$ to be the path consisting of the first state $P_0$ of $\rho$ only.
  At each step $i\geq 0$ \we extend the path $\pi_i$ into $\pi_{j}$ for some $j>i$, if possible,
  thereby skipping over all $\pi_h$ with $i<h<j$,
  and cross out some transitions occurring in the matrix.
  As an invariant, \we maintain that a transition $t$ occurring in the $k$-th column is already crossed
  out when reaching step $i>k$ iff a transition $u$ occurs in the extension of $\pi_k$ into $\pi_i$
  such that $t \naconc_s u$. Furthermore, when reaching step $i$, no entry in a column $\ell\geq i$
  is already crossed out. At each step $i\geq 0$ \we proceed as follows:

  \We take $n\in\IN$ to be the smallest value such that entry $f(n)=(k,m)\in\IN\times\IN$---with $k$
  a column number---satisfies $k\leq i$ and is filled in, say with $t\in\Tr^\bullet_{\neg B}$,
  but not yet crossed out. If such an $n$ does not exist, just extend $\pi_i$ with an arbitrary
  transition $u$ such that $\widehat u$ is the next transition of $\rho$; if $\rho$ ends in $P_i$,
  the algorithm terminates, with output $\pi_i$.
  By our invariant, all transitions $v$ occurring in the extension of $\pi_k$ into $\pi_i$
  satisfy $t \aconc_s v$. By \lem{closure static} there is a $t'\in\Tr^\bullet_{\neg B}$ with 
  $\source(t')=P_i$ and $\chi \equiv \chi'$.
  Since $\rho$ is $\aconc_s$-$B$-just, there is a $u\in\Tr$ with
  $t' \naconc_s u$ such that $\widehat u$ occurs in $\rho$ past the occurrence of $P$.
  So also $t \naconc_s u$. Now extend $\pi_i$ into $\pi_j$, for $j>i$, such that $\pi_j$ ends with
  the transition $u$. Cross out all entries in the matrix up to row $j$ necessary to maintain the
  invariant above. This includes entry $f(n)$.
  
  The desired path $\pi$ is the limit of all the $\pi_i$.
  It is $\aconc_s$-$B$-just, using the invariant, because each transition $t\in\Tr^\bullet_{\neg B}$ that is
  enabled in a state of $\pi$ appears in the matrix, which acts like a priority queue, and is
  eventually crossed out.
\end{proof}

\begin{corollary}{intersection}
  Let $\rho$ be an abstract path. If $\rho$ is $B$-just then it is $C$-just for any $C\supseteq B$.\\
  If $\rho$ is $C$-just as well as $D$-just, then it is $C\cap D$-just.
\end{corollary}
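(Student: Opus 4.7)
The plan is to work via Proposition~\ref{pr:abstract justness via concurrency}, which tells us that an abstract path is $B$-just (in the sense of \df{abstract justness}, i.e., of \df{just path}) iff it is $\aconc_s$-$B$-just in the sense of \df{abstract justness via concurrency}. The latter characterisation expresses $B$-justness as a universally quantified implication over $t\in\Tr^\bullet_{\neg B}$, and the dependence of the premise on $B$ is purely through the condition $\ell(t)\notin B$. Both monotonicity in $B$ and closure under intersection of the family of $B$-just abstract paths (for a fixed $\rho$) will fall out straight from this observation.

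For the first claim, suppose $\rho$ is $B$-just and let $C\supseteq B$. By \pr{abstract justness via concurrency} it suffices to verify \df{abstract justness via concurrency} for $C$. So let $t\in\Tr^\bullet_{\neg C}$ with $P:=\source(t)\in\rho$. Since $\Rec\subseteq B\subseteq C$, the label $\ell(t)\in Act\setminus\Rec$ satisfies $\ell(t)\notin C \supseteq B$, so $t\in\Tr^\bullet_{\neg B}$. By $\aconc_s$-$B$-justness of $\rho$, there is $u\in\Tr$ with $t\naconc_s u$ such that $\widehat u$ occurs in $\rho$ past $P$, as required.

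For the second claim, suppose $\rho$ is both $C$-just and $D$-just, and take any $t\in\Tr^\bullet_{\neg(C\cap D)}$ with $P:=\source(t)\in\rho$. Then $\ell(t)\notin C\cap D$, so either $\ell(t)\notin C$ or $\ell(t)\notin D$. In the first case $t\in\Tr^\bullet_{\neg C}$, so $\aconc_s$-$C$-justness of $\rho$ yields the desired $u\in\Tr$ with $t\naconc_s u$ and $\widehat u$ occurring past $P$; in the second case $\aconc_s$-$D$-justness does the same job.

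There is essentially no obstacle here; the only thing to be careful about is that the concurrency-based formulation of justness in \df{abstract justness via concurrency} is the right vehicle — trying to prove the corollary directly from the coinductive \df{just path} would force one to push a decomposition argument through the clauses for parallel composition, restriction and relabelling, which is unnecessary once \pr{abstract justness via concurrency} is available.
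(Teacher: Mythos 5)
Your proof is correct and follows exactly the route the paper intends: the corollary is stated as an immediate consequence of \pr{abstract justness via concurrency}, since in \df{abstract justness via concurrency} the set $B$ enters only through the antitone condition $\ell(t)\notin B$ on the quantified transitions, which gives monotonicity and closure under intersection at once. Nothing further is needed.
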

In fact the collection of sets $B$ such that a given abstract path $\rho$ is $B$-just is closed under arbitrary intersection,
and thus there is a least set $B_\rho$ such that $\rho$ is $B$-just. Actions $\alpha\in B_\rho{\setminus}\B?$ are
called \emph{$\rho$-enabled}~\cite{GH15b}. An action $\alpha$ is $\rho$-enabled iff there is a
suffix $\rho'$ of $\rho$ and a derivation $t \in \Tr^\bullet$ with $\ell(t)=\alpha$,
enabled in the starting state of $\rho'$, such that $t \aconc_s v$ for all $v\in\Tr$ such that
$\widehat v$ occurs in $\rho'$.
As a consequence of \df{justness}, the same closure properties apply to justness on concrete paths,
but for abstract paths these results are much less trivial.

\We now show that the concepts of justness on abstract paths from Definitions~\ref{df:abstract justness}
and~\ref{df:abstract justness via concurrency} both agree with the original definition of justness from \cite{GH15a}.
This requires lifting the definition of decomposition from concrete to abstract paths.

\begin{definition}{decomposition}
An abstract path $\rho$ of a process $P|Q$ \emph{can be decomposed} into abstract paths $\rho_1$ of $P$ and
$\rho_2$ of $Q$, notation $\rho \in \rho_1|\rho_2$, if there exist paths $\pi$, $\pi_1$ and $\pi_2$
such that $\pi\Rrightarrow \pi_1|\pi_2$, $\rho=\widehat\pi$ and $\rho_i=\widehat\pi_i$.

Likewise, an abstract path $\rho$ of $P[f]$ \emph{can be decomposed} into an abstract path $\rho'$
if there are $\pi$ and $\pi'$ with $\pi\Rrightarrow \pi'[f]$, $\rho=\widehat\pi$ and $\rho'=\widehat\pi'$.
The decomposition of an abstract path $\rho$ of $P\backslash L$ is defined likewise.
\end{definition}
In \cite[Section 4.3]{GH15a} the decomposition of an abstract path was defined in a different style,
but using \cite[Observation E.3]{GH15a} the resulting notion of decomposition is the same.

\begin{figure}
\definecolor{darkorange}{cmyk}{0,0.82,1,0.01}
\definecolor{Green}{cmyk}{1,0,0.7,0.5}
\renewcommand{\df}[1]{({\color{Green}df.\,\ref{df:#1}})}
\renewcommand{\pr}[1]{({\color{blue}pr.\,\ref{pr:#1}})}
\renewcommand{\thm}[1]{({\color{blue}thm.\,\ref{thm:#1}})}
\renewcommand{\Sec}[1]{({\color{blue}\S\,\ref{sec:#1}})}
\newcommand{\equals}{{\color{magenta}$=$}}
\newcommand{\dequals}{{\color{magenta}$:=$}}
\newcommand{\vequals}{{\color{magenta}$\|$}}
\newcommand{\notion}[1]{{\color{darkorange}#1}}

 \input{overview}
  \centerline{\box\graph}
\caption{\it\small Overview of the various notions of justness defined in this paper}
\label{overview}
\end{figure}

In \cite[Definition 4.1]{GH15a} $B$-justness was defined directly on abstract paths.
The definition is the same as the one for concrete paths---see \df{just path}---but reading
``abstract path'' for ``path";\footnote{Figure~\ref{overview} positions this notion of justness in relation
to the others that appear in this paper.} see also Footnote~\ref{precise}.
The following theorem says that that definition agrees with \df{abstract justness} above.

\begin{theorem}{abstract justness}
An abstract path is $B$-just in the sense of \df{just path} iff it is $B$-just in the sense of
\df{abstract justness}, i.e., iff it has the form $\widehat\pi$ for a
concrete path $\pi$ that is $B$-just in the sense of \df{just path}.
\end{theorem}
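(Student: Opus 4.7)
The plan is to reduce the theorem to an abstract-path analog of Theorem \ref{thm:coinductive} and then chain through Proposition \ref{pr:abstract justness via concurrency}. Concretely, combining those two results yields the equivalences:
\[
\rho = \widehat\pi \text{ for some } B\text{-just concrete } \pi \text{ per \df{just path}}
\iff \rho \text{ is } B\text{-just per \df{abstract justness}}
\iff \rho \text{ is } \aconc_s\text{-}B\text{-just per \df{abstract justness via concurrency}}.
\]
Thus the theorem reduces to showing that \emph{$\rho$ is $B$-just per \df{just path} (read on abstract paths) iff $\rho$ is $\aconc_s$-$B$-just per \df{abstract justness via concurrency}}. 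This is the abstract-path mirror of \thm{coinductive}, and I would prove it by adapting that proof almost line by line, using \df{decomposition} (abstract decomposition) in place of the concrete decomposition $\pi \Rrightarrow \pi_1 | \pi_2$.

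For the direction ``$\aconc_s$-$B$-just implies coinductive $B$-just'' (mirroring the ``Only if'' half of \thm{coinductive}), I would verify that $\aconc_s$-$B$-justness on abstract paths satisfies each of the five clauses of \df{just path}. The clauses for a finite path's last state, for restriction, for relabeling, and for suffixes transfer verbatim. For the parallel composition clause, given $\rho$ abstract $\aconc_s$-$B$-just and $\rho \in \rho_1 | \rho_2$ via Definition \ref{df:decomposition}, I define $C$ (respectively $D$) to be the least set with $\B?\subseteq C$ such that $\rho_1$ (respectively $\rho_2$) is $\aconc_s$-$C$-just; the arguments that $C,D\subseteq B$ and $\tau\in B\vee C\cap\bar D=\emptyset$ go through exactly as in the proof of \thm{coinductive}, lifting a hypothetical witness $t$ on $\rho_i$ to a witness $t|Q$, $t|v$, or $t_1|t_2$ on $\rho$ via \pr{static concurrency inductive}, and using \lem{closure static} to slide $t_1$ along $\rho_1$ to a state where $t_2$ is available on the right.

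For the converse (``coinductive $B$-just implies $\aconc_s$-$B$-just''), given a derivation $t\in\Tr^\bullet_{\neg B}$ enabled at some state $P=\source(t)$ along $\rho$, I would proceed by structural induction on $t$, exactly as in the ``If'' part of \thm{coinductive}. At each operator the matching clause of \df{just path} yields an abstract decomposition of $\rho$, the induction hypothesis supplies a witness $\widehat u$ on the appropriate component, and \pr{static concurrency inductive} packages the result back into a transition on $\rho$ whose concrete representative $u$ satisfies $t \naconc_s u$.

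The main obstacle is bookkeeping around \df{decomposition}: abstract decomposition is declared via the existence of some underlying concrete witness $\pi\Rrightarrow\pi_1|\pi_2$, so every time I decompose $\rho$ I must choose a consistent concrete witness and verify that the derivations I manipulate lift/project cleanly along it. This is where one may worry that different decompositions give different derivations; however, once a witness $\pi$ is fixed, each transition of $\rho_i$ is the hat of a uniquely determined projection of some derivation in $\pi$, so the inductive clauses of \pr{static concurrency inductive} recombine witnesses in precisely the form required. After this careful setup, the two arguments are routine adaptations of \thm{coinductive}.
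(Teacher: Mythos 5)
Your overall architecture is sound, and your treatment of the direction ``coinductively $B$-just $\Rightarrow$ abstract $\aconc_s$-$B$-just'' is exactly the paper's ``Only if'' argument (structural induction as in the ``If'' half of \thm{coinductive}, routed through \pr{abstract justness via concurrency}). The gap is in the converse direction of your mirror lemma, ``abstract $\aconc_s$-$B$-just $\Rightarrow$ coinductively $B$-just''. You propose to redo the ``Only if'' half of \thm{coinductive} line by line with abstract decompositions, but the parallel-composition clause does not transfer. In \df{abstract justness via concurrency} the hypothesis only supplies \emph{some} derivation $u^\dagger$ with $t|Q' \naconc_s u^\dagger$ whose \emph{hat} occurs in $\rho$; this $u^\dagger$ is in no way tied to the concrete witness $\pi$ you fixed to obtain $\rho\in\rho_1|\rho_2$. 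The transition of $\rho$ at that position may have several derivations with different projections (e.g.\ a $\tau$ derivable as a synchronisation, or as a move of one component whose source and target happen to coincide), so the derivation actually underlying $\pi$ there may project nothing into $\pi_1$, or project a transition that need not interfere with $t$. Hence you cannot conclude that an interfering hat occurs in $\rho_1$, and the arguments for $C\subseteq B$ and for $\tau\in B \vee C\cap\bar D=\emptyset$ stall. Your remark that each transition of $\rho_i$ is the hat of a uniquely determined projection of a derivation in $\pi$ helps only in the recombination direction (where you get to \emph{build} the interfering derivation and only its hat must occur), not in this projection direction, where the existential sits in the hypothesis.

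The paper avoids the issue: for this direction it never uses $\aconc_s$ on abstract paths, but verifies directly that the family ``$\rho=\widehat\pi$ for some concrete $\pi$ that is $B$-just per \df{just path}'' satisfies the five clauses of \df{just path} (the required decompositions are handed to you by the concrete coinductive justness of $\pi$ and commute with $\widehat{\cdot}$), and concludes by maximality of the coinductive family; the $\aconc_s$-characterisation is invoked only in the other direction, where it is harmless. You could try to repair your route by first applying \pr{abstract justness via concurrency} to trade the abstract hypothesis for a concrete $\aconc_s$-$B$-just witness and then \thm{coinductive} to make that witness coinductively $B$-just, but what then remains is precisely the statement that hats of coinductively $B$-just concrete paths are coinductively $B$-just abstract paths --- i.e.\ the paper's coinductive verification --- so the abstract-decomposition mirroring you describe has to be replaced by it rather than carried out as sketched.
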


\begin{proof}
  ``If'': It suffices to show that the family of predicates $B$-justness on abstract paths according
  to \df{abstract justness} satisfies the five requirements of \df{just path}. This is
  straightforward to check, and spelled out in \cite[Proof of Proposition E.4]{GH15a}.

  ``Only if'': Let $\rho$ be an abstract path that is $B$-just in the sense of \df{just path}.
  By \pr{abstract justness via concurrency} it suffices to show that $\rho$ is $\aconc_s$-$B$-just
  in the sense of \df{abstract justness via concurrency}. This proceeds just as in the
  ``If''-part of the proof of \thm{coinductive}.
\end{proof}

All definitions and results in this section apply equally well to ``sigjustness'' in the role of
``justness'', allowing $\B?\subseteq B\subseteq Act\dcup\bar\Sig$.
In this form $B$-sigjustness is the same as $B$-justness as defined in \cite[Definition~4]{Bou18}.%
\footnote{Following \cite{GH15a,GH15b}, \cite{Bou18} restricts to the case that
  $B\subseteq\Ch\dcup\bar\Ch \dcup \Sig \dcup \bar\Sig$---cf.\ Footnote~\ref{YvsB}.
  Also, in \cite{GH15b,EPTCS255.2,Bou18} one has $\B?=\emptyset$.}
Finally, \we compare our definitions of justness to the one in \cite{EPTCS255.2}.

\begin{proposition}{CCSS signalling}
An abstract path is $Y$-signalling as defined in \cite{EPTCS255.2} iff it is $\bar Y \dcup Act$-sigjust
by \df{just path}.
\end{proposition}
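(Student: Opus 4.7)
The plan is to prove the equivalence by a coinductive argument in both directions, using that both $Y$-signalling in \cite{EPTCS255.2} and $(\bar Y \dcup Act)$-sigjustness (by \df{just path} with ``sigjust'' read for ``just'') are defined as largest predicates satisfying five analogous closure clauses. For the direction ``$Y$-signalling implies $(\bar Y \dcup Act)$-sigjust'', I would show that the family of predicates $\{\pi \text{ is }Y\text{-signalling}\}_{Y \subseteq \Sig}$ satisfies the five clauses defining $(\bar Y \dcup Act)$-sigjustness; by coinduction it is then contained in the largest such family. The converse proceeds symmetrically, defining $Y\text{-sig}' := \{\pi \mid \pi\text{ is }(\bar Y \dcup Act)\text{-sigjust}\}$ and verifying that this satisfies the coinductive clauses of signalling from \cite{EPTCS255.2}.

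The key algebraic observation driving both directions is that parameter sets of the shape $B = \bar Y \dcup Act$, for $Y \subseteq \Sig$, are closed under the operations appearing in the coinductive clauses: (i) taking subsets needed for parallel decomposition, (ii) the restriction update $B \mapsto B \cup L \cup \bar L$ (which, since $L \subseteq \Ch \dcup \Sig$, produces $\overline{Y \cup (L \cap \Sig)} \dcup Act$, again of the required shape), and (iii) the relabelling update $B \mapsto f^{-1}(B)$ (which produces $\overline{f^{-1}(Y)} \dcup Act$ since relabellings preserve $Act$). Moreover, whenever $B \supseteq Act$, the side condition $\tau \in B \vee C \cap \bar D = \emptyset$ of the second clause of \df{just path} is trivially satisfied because $\tau \in Act \subseteq B$, so the parallel composition clause for sigjustness collapses to exactly the (simpler) parallel composition clause of signalling. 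Similarly, the first clause (finite paths ending in a state admitting only actions from $B$) becomes: the last state admits no signal emission $\bar s$ with $s \notin Y$, which is precisely the terminating condition of $Y$-signalling.

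The main obstacle will be lining up the parameter algebra precisely, in particular verifying that the intermediate parameters $C, D$ appearing in the parallel decomposition clause can always be chosen of the prescribed form $\bar Y_i \dcup Act$. In the forward direction this is automatic, since one can replace any $C$ arising in the signalling decomposition by $\bar C \dcup Act$ without weakening the hypothesis. In the backward direction one needs to argue that if a path of $P | Q$ is $(\bar Y \dcup Act)$-sigjust with a decomposition witnessed by sets $C, D \subseteq \bar Y \dcup Act$, then one may replace $C$ and $D$ by their ``signal shadows'' $\overline{\bar C \cap \Sig} \dcup Act$ and $\overline{\bar D \cap \Sig} \dcup Act$ respectively, which still work because enlarging $B$ preserves $B$-sigjustness (the monotonicity expressed in \cor{intersection}, lifted to sigjustness). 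Once this shape-preservation is in hand, the remaining clauses (suffix closure, restriction, relabelling) are direct consequences of the algebraic observations above, and the coinductive equivalence follows.
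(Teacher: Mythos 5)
Your proposal is correct and follows essentially the same route as the paper: both directions are coinductive, showing that each predicate family satisfies the five defining clauses of the other (sigjustness reindexed via $B=\bar Y\dcup Act$ satisfies the signalling clauses, and the signalling-derived family satisfies the clauses of \df{just path}), with the monotonicity of \cor{intersection} (lifted to sigjustness) invoked exactly where you use it, namely to adjust the parameters $C,D$ in the parallel-composition clause. The paper's proof is just a terser statement of this argument, so your additional bookkeeping about the shape $\bar Y\dcup Act$ being preserved under restriction and relabelling, and $\tau\in Act\subseteq B$ trivialising the side condition, is a faithful elaboration rather than a different approach.
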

\begin{proof}
``If'': $\bar Y \dcup Act$-sigjustness trivially satisfies the five conditions for
$Y$-signalling of \cite[Definition 2]{EPTCS255.2}.\linebreak[3]
The second condition (for paths starting in $P|Q$) uses the first statement of \cor{intersection}.

``Only if'': Call an abstract path $\rho$ $B$-sigjust$^\dagger$ iff $B=\bar Y \dcup Act$ for an $Y\subseteq\bar\Sig$
such that $\rho$ is $Y$-signalling as defined in \cite{EPTCS255.2}. Then trivially $B$-sigjustness$^\dagger$
satisfies the five conditions of \df{just path}.
\end{proof}

\begin{proposition}{CCSS justness}
  An abstract path is $Y$-just as defined in \cite{EPTCS255.2} iff it is
  $B$-sigjust according to \df{just path} for some $B$ with $Y=B\cap Act$, which is the case iff it is
  $Y \dcup \bar\Sig$-sigjust according to \df{just path}.
\end{proposition}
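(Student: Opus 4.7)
The plan is to decompose the triple equivalence into two biconditionals: (b) iff (c), where (b) denotes the middle condition and (c) the last, which follows from the monotonicity of sigjustness in its parameter, and (a) iff (c), which mirrors the coinductive argument of \pr{CCSS signalling} with the roles of ``signalling'' and ``just'' swapped.

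For (b) iff (c): the reverse direction is immediate by taking $B := Y\dcup\bar\Sig$, which yields $B\cap Act = Y$ since $Y\subseteq Act$. For the forward direction, if $\rho$ is $B$-sigjust with $Y = B\cap Act$, then $B\subseteq Act\dcup\bar\Sig$ gives $B = (B\cap Act)\dcup(B\cap\bar\Sig) \subseteq Y\dcup\bar\Sig$, and \cor{intersection} (whose extension to sigjustness is noted at the end of this section) upgrades $B$-sigjustness to $(Y\dcup\bar\Sig)$-sigjustness.

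For (a) iff (c), I would imitate the proof of \pr{CCSS signalling}. In the ``if'' direction, I would verify that the family of predicates $R_Y$ on abstract paths defined by ``$R_Y(\rho)$ iff $\rho$ is $(Y\dcup\bar\Sig)$-sigjust by \df{just path}'' satisfies the five defining clauses of $Y$-justness from \cite{EPTCS255.2}; by the largest-predicate characterisation, $R_Y$ is then contained in $Y$-justness per \cite{EPTCS255.2}. In the ``only if'' direction, I would dually check that the family $S_B$ (indexed by $\B?\subseteq B\subseteq Act\dcup\bar\Sig$) defined by ``$S_B(\rho)$ iff $\rho$ is $(B\cap Act)$-just per \cite{EPTCS255.2}'' satisfies the five defining clauses of $B$-sigjustness from \df{just path}, so that $S_B$ is contained in $B$-sigjustness; instantiating with $B = Y\dcup\bar\Sig$ then yields (c) from (a).

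The main obstacle lies in the parallel-composition clause, where the side condition on compatible blocking sets ($\tau\in B \vee C\cap\bar D = \emptyset$) must be shown to transfer between the two formulations. The key observation is that intersecting a blocking set with $Act$ preserves complementation on $\Ch\dcup\bar\Ch\dcup\{\tau\}$, while the $\bar\Sig$-component of any $B$ of the form $Y\dcup\bar\Sig$ is automatically present and so imposes no additional obstruction. The remaining clauses (finite-path termination, restriction, relabelling, and suffix-closure) translate routinely once one notes that the set operations $\cdot\cup L\cup\bar L$ and $f^{-1}(\cdot)$ commute with intersection with $Act$, and that $\bar\Sig$ is preserved by both. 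Throughout, one uses that signal-emission labels $\bar s$ lie in $\bar\Sig$, so that the distinction between ``justness'' in the EPTCS255.2 sense and ``sigjustness'' here collapses once $\bar\Sig$ has been added wholesale to the blocking set.
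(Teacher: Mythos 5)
Your splitting into (b)$\Leftrightarrow$(c) and (a)$\Leftrightarrow$(c) is fine, and the (b)$\Leftrightarrow$(c) step via the sigjustness version of the first statement of \cor{intersection} is correct (the paper gets it for free from the cyclic structure of its proof). The trouble is in your treatment of (a)$\Leftrightarrow$(c), because you treat the justness of \cite{EPTCS255.2} as if it were a self-contained coinductive predicate. It is not: its clause for $P|Q$ decomposes a $Y$-just path into a path of $P$ that is $X$-just \emph{and} $X'$-signalling and a path of $Q$ that is $Z$-just \emph{and} $Z'$-signalling, with the cross conditions $X\cap\bar Z=\emptyset$, $X\cap Z'=\emptyset$, $X'\cap Z=\emptyset$ (when $\tau\notin Y$). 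Already in your ``if'' direction this means that checking the parallel clause for the family $R_Y$ requires producing the signalling parameters of the components, which is where the paper invokes \pr{CCSS signalling} together with \cor{intersection}; your sketch never mentions signalling, so this obligation is unaccounted for, though it is repairable with tools already available at this point of the paper.

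The ``only if'' direction is where the proposal genuinely breaks. Your family $S_B$ (``$(B\cap Act)$-just per \cite{EPTCS255.2}'') forgets exactly the signalling information, and no choice of indices makes the clause-check of \df{just path} go through. If $B\not\supseteq\bar\Sig$, the first clause fails: a finite $(B\cap Act)$-just path may end in a state emitting a signal $s$ with $\bar s\notin B$, since justness in \cite{EPTCS255.2} places no constraint on emissions (only the separate signalling predicate does). If instead you keep all indices of the form $Y\dcup\bar\Sig$, the parallel clause fails: with $C,D\supseteq\bar\Sig$ one has $\bar D\supseteq\Sig$, so $C\cap\bar D=\emptyset$ forces $C\cap\Sig=\emptyset$, whereas the decomposition provided by \cite{EPTCS255.2} may well put blockable signal-read actions into $X\subseteq C\cap Act$ (and $\tau$ need not be in $B$). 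So your claim that ``the $\bar\Sigma$-component \dots imposes no additional obstruction'' is exactly where the argument fails; the indices that actually witness the clause are $C=X\dcup\bar{X'}$ and $D=Z\dcup\bar{Z'}$, i.e., they carry the components' signalling sets, which $S_B$ cannot supply. A repair along your lines would amount to a simultaneous coinduction on the combined predicate ``$(\bar B\cap\Sig)$-signalling and $(B\cap Act)$-just'' (essentially proving \cor{CCSS just and signalling} directly), which is a substantially more delicate check than the ``routine'' translation you describe. The paper avoids this altogether: its ``only if'' direction does not check coinductive clauses at all, but shows the path is $\aconc_s$-$(Y\dcup\bar\Sig)$-sigjust in the sense of \df{abstract justness via concurrency} by structural induction on the enabled derivation $t$, mirroring the ``if'' part of \thm{coinductive}, and discharges the critical case where the synchronisation partner is an emission $\bar s$ by appealing to the component's $Z'$-signalling property via \pr{CCSS signalling}, \thm{abstract justness}, \thm{coinductive} and \pr{abstract justness via concurrency}.
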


\begin{proof}
  ``If'': Call an abstract path $Y$-just$^\dagger$ iff it is
  $B$-sigjust by \df{just path} for some $B$ with $Y=B\cap Act$.
  Then trivially $B$-justness$^\dagger$ satisfies the five conditions of \cite[Definition 3]{EPTCS255.2}.
  The second condition (for paths starting in $P|Q$) uses \pr{CCSS signalling} in conjunction with the first statement of \cor{intersection}.

  ``Only if'': It suffices to show that each abstract path $\rho$ that is $Y$-just as defined in
  \cite{EPTCS255.2} is also $\naconc_s$-$Y \dcup \bar\Sig$-sigjust according to \df{abstract justness via concurrency}.
  So for each $t \in \Tr^\bullet$ with $\ell(t)\in Act\setminus Y$ and $s:=\source(t)\in\rho$ for such a path $\rho$, \we have to find a
  transition $t^\dagger$ with $t \naconc_s t^\dagger$ such that \plat{$\widehat {t^\dagger}$} occurs in $\rho$ past the occurrence of $s$.
  The proof of this statement is similar to direction ``If'' of the proof of \thm{coinductive}.
  Using the last requirement of \cite[Definition 3]{EPTCS255.2} \we may assume, without loss of generality, that $s$
  is the first state of $\rho$.  \We proceed by structural induction on $t$.
  The only case that deviates from the proof of \thm{coinductive} is where $t$ has the form $u|v$.
    In this case $s$ has the form $P|Q$, with $P:=\source(u)$ and $Q:=\source(v)$.
    By the second requirement of \cite[Definition 3]{EPTCS255.2}, $\rho \Rrightarrow \rho_1 | \rho_2$,
    with $\rho_1$ an $X$-just and $X'$-signalling abstract path of $P$ and $\rho_2$ a $Z$-just and
    $Z'$-signalling abstract path of $Q$, for some $X,Z\subseteq Y$ and $X',Z'\subseteq\Sig$, such
    that (when $\tau\notin Y$) $X\mathord\cap \bar{Z}=\emptyset$, $X\cap Z'=\emptyset$ and $X'\cap Z=\emptyset$.
    \begin{itemise}
    \item  Let $\ell(u)=c\in\Ch\dcup\bar\Ch$. Then $\ell(v)=\bar c$ and $\tau=\ell(t)\notin Y$.
      So either $c\notin X$ or $\bar c \notin Z$---by symmetry assume the former.
    \item  Let $\ell(u)=s\in\Sig$. Then $\ell(v)=\bar s\in\bar\Sig$ and $\tau=\ell(t)\notin Y$.
      So either $s\notin X$ or $s \notin Z'$.
    \item  The case $\ell(v)=c\in\Sig$ will follow by symmetry.
    \item Let $\ell(u)=\ell(t)=b!$ with $b\in\B$. (The case $\ell(v)=b!$ follows by symmetry.) Then $b!\notin Y \supseteq X$.
     \end{itemise}
    So in all but one of the relevant cases $u\in\Tr^\bullet$ with $\ell(u)\in Act\setminus X$.
      By induction, there is a transition $u^\dagger$ with $u \naconc_s u^\dagger$ such that \plat{$\widehat{u^\dagger}$} occurs in $\rho_1$.\vspace{2pt}
      Consequently, there is a transition $t^\dagger$ of the form $u^\dagger|Q'$ or
      $u^\dagger|v^\dagger$ such that \plat{$\widehat {t^\dagger}$} occurs in $\rho$.
      By \pr{static concurrency inductive} $t\naconc t^\dagger$.

      In the remaining case $\ell(v)=\bar s\in\bar\Sig$ and $\rho_2$ is $Z'$-signalling with $s\notin Z'$.
      Using \pr{CCSS signalling}, Theorems~\ref{thm:abstract justness} and~\ref{thm:coinductive} and \pr{abstract justness via concurrency},
      $\rho_2$ is $\naconc_s$-$\bar Z' \dcup Act$-sigjust in the sense of \df{abstract justness via concurrency}, so
  there is a $v^\dagger\in\Tr$ with $v \naconc_s v$ such that \plat{$\widehat {v^\dagger}$} occurs in $\rho_2$.
      Consequently, there is a transition $t^\dagger$ of the form $P'|v^\dagger$ or
      $u^\dagger|v^\dagger$ such that \plat{$\widehat {t^\dagger}$} occurs in $\rho$.
      By \pr{static concurrency inductive} $t\naconc t^\dagger$.
\end{proof}

\begin{corollary}{CCSS just}
An abstract path is $B$-just as defined in \cite{EPTCS255.2} iff it is $B$-just according to \df{justness}.
\end{corollary}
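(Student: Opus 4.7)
\begin{proofNoBox}
The strategy is to chain together the equivalences already established in the paper, after converting between ``just'' and ``sigjust'' via the comment immediately following \df{sigjustness}. I would first recall that, for $\B?\subseteq B\subseteq Act$, a concrete path is $B$-just in the sense of \df{justness} if and only if it is $B\dcup\bar\Sig$-sigjust in the sense of \df{sigjustness}. Consequently, by \df{abstract justness}, an abstract path $\rho$ is $B$-just according to \df{justness} iff there exists a concrete path $\pi$ with $\widehat\pi=\rho$ that is $B\dcup\bar\Sig$-sigjust according to \df{sigjustness}.

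Next, I would invoke the sigjustness analogues of \thm{static justness} and \thm{coinductive}, which, as stated in \Sec{coinductive}, hold with the same proofs. Together they yield that $\pi$ is $B\dcup\bar\Sig$-sigjust by \df{sigjustness} (an $\aconc$-based notion) iff $\pi$ is $\aconc_s$-$B\dcup\bar\Sig$-sigjust iff $\pi$ is $B\dcup\bar\Sig$-sigjust by \df{just path}. Applying the sigjustness version of \thm{abstract justness} then gives that the existence of such a concrete $\pi$ is equivalent to $\rho$ itself being $B\dcup\bar\Sig$-sigjust in the sense of \df{just path}.

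Finally, \pr{CCSS justness} directly equates this last condition with $\rho$ being $B$-just as defined in \cite{EPTCS255.2}. Composing the three equivalences:
\[
\rho \text{ is $B$-just by \df{justness}} \;\Leftrightarrow\; \rho \text{ is $B\dcup\bar\Sig$-sigjust by \df{just path}} \;\Leftrightarrow\; \rho \text{ is $B$-just in \cite{EPTCS255.2}},
\]
completes the argument.

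The only non-routine point is verifying that the sigjustness extensions of \thm{static justness}, \thm{coinductive}, and \thm{abstract justness} do indeed go through verbatim, and that the type constraint $\B?\subseteq B\subseteq Act$ on the input of \df{justness} is compatible with the $B$ appearing in \cite{EPTCS255.2}. The former is explicitly asserted in the paper (``with the same proofs''), and the latter is guaranteed since the $Y$-justness parameter in \cite{EPTCS255.2} ranges over the same action alphabet. Thus I expect no real obstacle; the corollary is essentially a bookkeeping consequence of \pr{CCSS justness} and the sigjustness variant of the main equivalence between the concurrency-based and coinductive accounts of justness.
\end{proofNoBox}
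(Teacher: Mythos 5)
Your proposal is correct and matches the intended argument: the paper states this corollary without an explicit proof, leaving it as exactly the chain you describe --- the remark after \df{sigjustness} relating $B$-justness to $B\dcup\bar\Sig$-sigjustness, the sigjust versions of \thm{static justness}, \thm{coinductive} and \thm{abstract justness} (all asserted in the paper to hold with the same proofs), and \pr{CCSS justness}. Your bookkeeping caveats (the ``sigjust'' extensions and the range of the parameter in \cite{EPTCS255.2}) are the right ones to flag and are indeed covered by the paper's explicit remarks.
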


\begin{corollary}{CCSS just and signalling}
An abstract path is $B$-sigjust according to \df{just path} iff it is $\bar B\cap\Sig$-signalling as
well as $B\cap Act$-just as defined in \cite{EPTCS255.2}.
\end{corollary}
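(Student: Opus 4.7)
The plan is to derive \cor{CCSS just and signalling} from Propositions~\ref{pr:CCSS signalling} and~\ref{pr:CCSS justness}, combined with the two halves of \cor{intersection} applied to sigjustness (which the author has already noted transfers from justness to sigjustness). The only substantive content is a short set-theoretic identity plus a couple of trivial $\subseteq$-checks, so the proof will be brief.

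For the ``only if'' direction, I would assume $\rho$ is $B$-sigjust. Since $B\subseteq Act\dcup\bar\Sig$, we have both $B\subseteq(B\cap Act)\dcup\bar\Sig$ and $B\subseteq(B\cap\bar\Sig)\dcup Act$. Applying superset closure from \cor{intersection} (for sigjustness) twice, $\rho$ is both $((B\cap Act)\dcup\bar\Sig)$-sigjust and $((B\cap\bar\Sig)\dcup Act)$-sigjust. Now \pr{CCSS justness} converts the first of these into $(B\cap Act)$-justness in the sense of \cite{EPTCS255.2}. For the signalling half, note that $\overline{\bar B\cap\Sig}=B\cap\bar\Sig$, so instantiating \pr{CCSS signalling} at $Y:=\bar B\cap\Sig$ rewrites the second sigjustness statement as the desired $(\bar B\cap\Sig)$-signalling.

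For the ``if'' direction, I would start with the two assumptions and run the same translations in reverse: \pr{CCSS signalling} with $Y=\bar B\cap\Sig$ gives $((B\cap\bar\Sig)\dcup Act)$-sigjustness of $\rho$, and \pr{CCSS justness} with $Y=B\cap Act$ gives $((B\cap Act)\dcup\bar\Sig)$-sigjustness. Then the intersection clause of \cor{intersection}, lifted to sigjustness, yields $C$-sigjustness with
\[
  C \;=\; \big((B\cap\bar\Sig)\dcup Act\big)\cap\big((B\cap Act)\dcup\bar\Sig\big).
\]
Using $Act\cap\bar\Sig=\emptyset$, intersecting $C$ with $Act$ isolates $B\cap Act$ from the right-hand factor while intersecting with $\bar\Sig$ isolates $B\cap\bar\Sig$ from the left-hand factor; since $C\subseteq Act\dcup\bar\Sig$, this gives $C=(B\cap Act)\dcup(B\cap\bar\Sig)=B$, as required.

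The main ``obstacle'' is purely bookkeeping: keeping the complementation straight between the two name spaces $Act$ and $\bar\Sig$, and verifying that both parameter sets used above lie in the admissible range $\B?\subseteq(\,\cdot\,)\subseteq Act\dcup\bar\Sig$. The latter is immediate since $\B?\subseteq Act$, and neither direction introduces any new structural content beyond what the two propositions and the closure properties of sigjustness already supply.
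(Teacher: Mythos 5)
Your proposal is correct and follows essentially the same route as the paper: both directions are obtained by translating through Propositions~\ref{pr:CCSS signalling} and~\ref{pr:CCSS justness} and using the (sigjustness version of) \cor{intersection}, with the same set identity $\bigl((B\cap\bar\Sig)\dcup Act\bigr)\cap\bigl((B\cap Act)\dcup\bar\Sig\bigr)=B$. The only cosmetic difference is that in the ``only if'' direction the paper invokes the ``for some $B$ with $Y=B\cap Act$'' clause of \pr{CCSS justness} directly, so it needs superset closure only for the signalling half, whereas you detour via $((B\cap Act)\dcup\bar\Sig)$-sigjustness; this changes nothing of substance.
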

\begin{proof}
``Only if'': Let $\rho$ be $B$-sigjust according to \df{just path}. By \cor{intersection} it is also
  $B\cup Act$-sigjust, so by \pr{CCSS signalling} it is $\bar B\cap\Sig$-signalling.
  By \pr{CCSS justness} $\rho$ is $B\cap Act$-just.

``If'': Let $\rho$ be $\bar B\cap\Sig$-signalling as well as $B\cap Act$-just as defined in \cite{EPTCS255.2}.
  By \pr{CCSS signalling} it is $B \cup Act$-sigjust. By \pr{CCSS justness} it is $B \cup \bar\Sig$-sigjust.
  So by \cor{intersection} it is $B$-sigjust.
\end{proof}

In \cite{GH15a,EPTCS255.2,Bou18} a(n abstract) path is called \emph{just} (without a predicate $B$)
iff it is $B$-just for some\linebreak[4] \plat{$\B? \subseteq B \subseteq \B? \dcup \Ch \dcup \bar\Ch \dcup \Sig$},
which is the case iff it is \plat{$\B? \dcup \Ch \dcup \bar\Ch \dcup \Sig$}-just. This amounts to
making a default choice for the set $B$ of blocking actions, in which CCS handshake synchronisations $c$ and
$\bar c$ as well as broadcast receive and signal read actions can always be blocked by the
environment (namely by withholding a synchronisation partner, or failing to broadcast or to emit a signal).
Using this definition it follows that an abstract path is just as defined in \cite{GH19} and the current paper
(using \df{justness} with any of the five concurrency relations $\aconc$, $\aconc_s$, $\aconc'_s$,
$\aconc_c$ or $\aconc'_c$ and taking $B:=\B? \dcup \Ch \dcup \bar\Ch \dcup \Sig$) iff it is just as
defined in \cite{GH15a}, \cite{EPTCS255.2} or \cite{Bou18}.

\section{Conclusion}\label{sec:conclusion}
\advance\textheight -10pt

\We advocate justness as a reasonable completeness criterion for formalising liveness
properties when modelling distributed systems by means of transition systems.
In \cite{GH19} \we proposed a definition of justness in terms of a, possibly asymmetric, concurrency
relation between transitions. The current paper defines such a concurrency relation for the transition
systems associated to CCS, as well as its extensions with broadcast communication or signals, thereby
making the definition of justness from \cite{GH19} available to these languages. In fact, we
provided five versions of the concurrency relation, and showed that they all give rise to the same
concept of justness. \We expect that this style of definition will carry over to many other process algebras.
\We have shown that justness satisfies the criterion of feasibility, and proved that our formalisation
agrees with previous coinductive formalisations of justness for these languages.

Concurrency relations between transitions in transition systems have been studied in \cite{Sta89}.
Our concurrency relation $\aconc$ follows the same computational intuition.
However, in \cite{Sta89} transitions are classified as concurrent or not only
when they have the same source, whereas as a basis for the definition of justness here \we compare
transitions with different sources. Apart from that, our concurrency relation is more general in
that it satisfies fewer closure properties, and moreover is allowed to be asymmetric.

Concurrency is represented explicitly in models like Petri nets \cite{Rei13}, event structures
\cite{Wi87a}, or asynchronous transition systems \cite{Sh85-lb,Bz87,WN95}.
\We believe that the semantics of CCS in terms of such models agrees with its semantics in terms of labelled
transition systems with a concurrency relation as given here. However, formalising such a claim
requires a choice of an adequate justness-preserving semantic equivalence defined on the compared models.
Development of such semantic equivalences is a topic for future research \cite{GHW21}.

\newcommand{\zs}{{CCS}}
\bibliographystyle{eptcsini}
\bibliography{justness}
\end{document}